\newcommand{\be}{\begin{equation}}
\newcommand{\ee}{\end{equation}}
\newcommand{\ba}{\begin{eqnarray}}
\newcommand{\ea}{\end{eqnarray}}
\newcommand{\ketbra}[2]{|#1\rangle \langle #2|}
\newcommand{\tr}{\operatorname{Tr}}
\newtheorem{definition}{Definition}
\newtheorem{proposition}{Proposition}
\newtheorem{observation}{Observation}
\newtheorem{numconjecture}{Numerical Conjecture}
\def\>{\rangle}
\def\<{\langle}
\newtheorem{obj}{Objective}
\newtheorem{sce}{Scenario}
\begin{document}

\title{
Convex Optimization Approaches to Optimal Teleportation Fidelity in Linear Three-Party Networks }

\author{Arkaprabha Ghosal}
\email{a.ghosal1993@gmail.com}
\affiliation{Mathematics Department, University of New Orleans, 2000 Lakeshore Drive, New Orleans, Louisiana 70148, USA}

\author{Jatin Ghai}
\email{jghai98@gmail.com}
\affiliation{Optics \& Quantum Information Group, The Institute of Mathematical Sciences, CIT Campus, Taramani, Chennai 600 113, India}
\affiliation{Homi Bhabha National Institute, Training School complex, Anushakti Nagar, Mumbai 400085, India}

\author{Tanmay Saha}

\affiliation{Optics \& Quantum Information Group, The Institute of Mathematical Sciences, CIT Campus, Taramani, Chennai 600 113, India}
\affiliation{Homi Bhabha National Institute, Training School complex, Anushakti Nagar, Mumbai 400085, India}

\author{Mir Alimuddin}   
\affiliation{ICFO-Institut de Ciencies Fotoniques, The Barcelona Institute of Science and Technology, Av. Carl Friedrich Gauss 3, 08860 Castelldefels (Barcelona), Spain.}

\author{Sibasish Ghosh}
\email{sibasish@imsc.res.in}
\affiliation{Optics \& Quantum Information Group, The Institute of Mathematical Sciences, CIT Campus, Taramani, Chennai 600 113, India}
\affiliation{Homi Bhabha National Institute, Training School complex, Anushakti Nagar, Mumbai 400085, India}

\begin{abstract} 

We study the maximum achievable quantum teleportation fidelity between two distant parties, Alice and Charlie, where each of them share a bipartite quantum state only with a common intermediary, Bob, and all parties are allowed to perform {\it Local Operations and Classical Communication} (LOCC). As the structure of LOCC is complicated, we relax the set of free operations to separable (SEP) operations and formulate a convex optimization problem that provides upper bounds on the LOCC achievable fidelity value. We observe that the complexity of such optimization problem reduces significantly if we restrict ourselves to a subclass of SEP operations, where the Kraus operators of either Alice or Charlie are proportional to unitary operators, leading to a simplified convex optimization that matches the general LOCC limit for certain two-qubit states. Through explicit examples, we show that protocols initiated by Bob by performing measurements in a maximally entangled basis are not necessarily optimal, and alternative strategies can outperform them. Finally, we extend our analysis to linear networks and demonstrate that different LOCC strategies can achieve the same optimal fidelity while consuming different amounts of entanglement content.
 
\end{abstract}

\maketitle	

\section{Introduction}

Quantum Teleportation is a task in which a sender sends quantum information of an unknown quantum state to a distant receiver without physically transmitting the unknown state itself. Quantum Teleportation requires shared entanglement between the sender and receiver and the freedom of {\it Local Operations and Classical Communication} (LOCC) \cite{PhysRevLett.70.1895, PhysRevA.60.1888}. Faithful or perfect qubit teleportation specifically requires one {\it ebit} of shared entanglement (or equivalently, a two-qubit maximally entangled state) and two cbits of classical communication from the sender to the receiver \cite{PhysRevLett.70.1895}. Practically, it is hard to distribute a maximally entangled state between distant parties due to the effects of noise in the environment \cite{PhysRevA.62.012311, im2021optimal, hu2023progress, PhysRevLett.134.160803}. Consequently, we have to deal mainly with noisy entangled states that cause imperfect quantum teleportation, which is more practical than faithful teleportation \cite{PhysRevLett.90.097901}. In such a case, our first concern is to quantify the efficacy of imperfect Quantum Teleportation. A most well-established figure of merit for Quantum Teleportation is the average teleportation fidelity \cite{Horodecki1996, PhysRevA.60.1888,PhysRevA.62.012311} that measures the average closeness of the teleported state (a density matrix) to the input state and such average is taken over all possible input states with uniform weights. 

However, it is generally hard to compute the optimal value of the average teleportation fidelity over all LOCC protocols for an arbitrary shared entangled state \cite{PhysRevA.62.012311, PhysRevA.60.1888,PhysRevA.64.060301, PhysRevA.60.1888, PhysRevA.108.032617}. The reason behind this is that the structure of LOCC is complicated as there can be arbitrary number of rounds between the sender and receiver. In the literature, there exists a classical upper bound on the average teleportation fidelity while the sender and receiver have a local quantum dimension $d\geq 2$, which is 
\[
f_{classical} = \dfrac{2}{d+1}.
\]
Here, by classical, we mean that such fidelity value can be achieved only using LOCC while there is no shared entanglement. However, even if there is a preshared entanglement, some LOCC may provide same or even less fidelity value than the classical fidelity bound.  For instance, in Ref. \cite{Horodecki1996}, it was shown that if the sender and receiver share an arbitrary two-qubit state, then under a specific LOCC protocol (also known as the standard protocol) the maximal average teleportation fidelity can be analytically computed. If the computed value is greater than $\frac{2}{3}$ we say that the shared two-qubit state is useful for Quantum Teleportation \cite{Horodecki1996}. However, such a protocol is not the optimal LOCC protocol in general.

In Ref. \cite{PhysRevA.60.1888}, it has been shown that the optimal teleportation fidelity is a linear function of the optimal {\it fully entangled fraction} of the shared state. For any bipartite density matrix with local dimension $d$, {\it fully entangled fraction } measures the largest overlap of the density matrix with a maximally entangled state. In Ref. \cite{PhysRevLett.90.097901}, the authors had shown that for any two-qubit entangled state, there exists a one-way LOCC via which the optimal teleportation fidelity can be achieved. In such LOCC, both parties perform a local post-processing before performing teleportation. More precisely, either the sender or the receiver applies a state-dependent filtering operation \cite{PhysRevA.64.010101, PhysRevA.102.052419} and if the filter passes, they perform the standard protocol; otherwise, they discard the existing shared state and locally prepare a product state \cite{PhysRevLett.90.097901}. Thus, by taking an average over the filter outcomes, one deterministically obtains the {\it fully entangled fraction} value greater than the classical limit $\frac{1}{2}$ {\it iff} the preshared state is entangled \cite{PhysRevLett.90.097901, PhysRevA.62.012311,PhysRevA.97.032322}.   

An interesting fact to point out here is that the optimal teleportation fidelity of a two-qubit state does not have a one-to-one correspondence with the established entanglement measures, like the concurrence measure \cite{wootters1998entanglement}.  For instance, there exists a pair of two-qubit states in which one state has a higher concurrence value but a lower optimal teleportation fidelity value compared to the other \cite{PhysRevA.86.020304, PhysRevA.86.020304}. On the other hand, the optimal teleportation fidelity for an arbitrary two-qubit state is upper bounded by a linear function of concurrence \cite{Zhao_2010,PhysRevA.66.022307}.

In recent years, the efficient distribution of entanglement and other quantum correlations over quantum networks has been extensively studied~\cite{PhysRevA.77.022308, PhysRevA.59.169, PhysRevA.57.822, PhysRevLett.78.3031, Konrad2007, PhysRevA.89.014303, PhysRevA.104.022425, PhysRevLett.81.5932, PhysRevA.80.032301, PhysRevA.86.020302, PhysRevLett.126.170501, PhysRevResearch.3.023045}.  
Depending on the setting, different kinds of free resources are considered. Some distribution tasks allow {\it Local Operations and Shared Randomness} (LOSR) as a free resource, where classical communication is forbidden~\cite{Schmid2023understanding, PhysRevLett.108.200401, PRXQuantum.2.020301}. Others permit classical communication as a free resource.  
In this work, we focus on the framework of —{\it Local Operations and Classical Communication} (LOCC).

Once entanglement is distributed between non-neighboring parties in a quantum network, it can be used to perform various entanglement-assisted information-processing tasks~\cite{hermans2022qubit, PhysRevA.78.032321}.  
Here, we address the following fundamental question:

\medskip
{\it 
Alice and Charlie are spatially separated and wish to perform quantum teleportation. They initially share no entangled state with each other but each shares entanglement with a common party, Bob. All three parties are allowed to perform LOCC. Then, what is the maximum teleportation fidelity (optimal fidelity) achievable between Alice and Charlie under such three-party LOCC operations? Furthermore, does there exist an unique LOCC which provides the optimal fidelity irrespective of the preshared states?
}
\medskip

Such a question is a basic yet important one in the context of three-party system. Our interest in quantum teleportation stems from its clear operational meaning in entanglement theory.  
Recent results have shown that, in a sender–receiver setting, quantum dense coding is dually related to quantum teleportation \cite{10315956}. Hence, achieving a quantum advantage in teleportation also implies some quantum advantage in dense coding. The optimal teleportation fidelity in this setting acts as a {\it three-party LOCC monotone}: once this fidelity is achieved, it cannot be increased further by any three-party local post-processing.

Our question is important as in the case of a bipartite system with a shared two-qubit state, the optimal fidelity can be computed via solving a semi-definite programming (SDP) under {\it PPT} class of operations, which is a larger class of operations than LOCC although the former achieves the same teleportation fidelity as that of the latter \cite{PhysRevLett.90.097901}. It was shown that the optimal feasible solution of the SDP can always be obtained through a one-way LOCC. So we are concerned with the same problem but in a three-party case, where Bob has the opportunity to assist Alice and Charlie.

Intuitively, for Alice and Charlie to achieve a quantum advantage, the pairs of parties—Alice–Bob and Bob–Charlie—must initially share distillable entangled states \cite{PhysRevA.60.1888}. A natural suggestion is that the optimal LOCC protocol might be realized by first distributing the maximum possible amount of distillable entanglement between Alice and Charlie, followed by appropriate local post-processing. This idea is motivated by the fact that the analytical upper bound on the fully entangled fraction of a shared two-qudit state depends linearly on its entanglement content \cite{Zhao_2010}. However, even if this intuition is correct, identifying the corresponding LOCC protocol(s) in general is highly nontrivial. Moreover, it remains unclear whether there exists a sufficiently general class of LOCC operations that can always maximize the teleportation fidelity between Alice and Charlie for arbitrary preshared states.

In this paper, we address these questions in the context of a three-party scenario involving Alice, Bob, and Charlie. Specifically, Alice–Bob and Bob–Charlie share bipartite quantum states with finite local dimension 
$d\geq 2$, and all parties are allowed to perform LOCC. Before presenting the detailed analysis, we summarize our main findings below.

\subsection{Summary of results}

\begin{itemize}
    \item Since the structure of LOCC operations is notoriously complex, we focus on the set of three-party separable operations (SEP), which forms a superset of LOCC and is more amenable to analytical treatment. Within this framework, we only consider the class of SEP operations, which provide a teleportation fidelity between Alice and Charlie at least $\frac{1}{d}$ (classical upper bound) or more.

   \item Considering such class of SEP, we formulate a convex optimization problem that by theory yields an upper bound on the fully entangled fraction achievable under three-party separable (SEP), and consequently, also serves as an upper bound over all three-party LOCC protocols. However, the optimization we propose involves variables represented by very large hermitian matrices—of size 
   $256 \times 256$ or larger—which poses significant computational challenges and may considerably increase the time required to achieve numerical convergence. For this reason, we find it challenging to obtain the optimal structure of the SEP for a given preshared states.

    \item The computational complexity of the optimization problem can be significantly reduced by restricting to the class of SEP operations, where the Kraus operators of one of the parties (either Alice or Charlie) are unitary operators. Within this restricted class, we formulate a convex optimization problem that requires variable hermitian matrices with dimension at least $16 \times 16$, which much lesser than the dimensional requirement for the general convex optimization problem {\it i.e.}, where there is no such restrictions on the SEP operations. Such reduction simplifies the optimization problem. The optimal feasible solutions of such a restricted optimization problem provide upper bounds on a subclass of LOCC protocols, where it is sufficient to consider that Bob engages in multiple rounds of LOCC with Alice, and Alice-Bob performs a one-way LOCC with Charlie. Interestingly, we identify examples of preshared two-qubit states for which the optimal solutions of the simplified optimization problem under restricted SEP coincide with the best achievable performance within LOCC, indicating that, for these cases, no LOCC protocol can yield a higher value.

\item We find instances of preshared two-qubit states in which a type of LOCC protocol (contained within the restricted SEP operations described above), where Bob starts with a projection valued measurement (PVM), and followed by one-way LOCC from Alice to Charlie, is optimal, as it attains the upper bound achievable by LOCC. However, we also identify an example where the fully entangled fraction obtained from the protocol in which Bob performs a PVM in a maximally entangled basis is not optimal for distributing the teleportation channel between Alice and Charlie. In fact, there exist alternative protocols—such as those where Bob performs a PVM in a non-maximally entangled basis or where Alice initiates the protocol with a local operation—that outperform it.

\item We then extend our analysis and consider a linear quantum network with every pair of nodes share a two-qubit state, and the two end nodes aim to perform optimal quantum teleportation. We present an explicit example, where the first and second nodes share a noisy entangled state, and rest of the pairs of nodes share the same pure entangled state. We find two distinct LOCC strategies that achieve the optimal teleportation fidelity between the end nodes over all LOCC. In Strategy I, the intermediate nodes perform PVMs in a rotated maximally entangled basis, whereas in Strategy II, they perform Bell basis measurements. Notably, we find that Strategy I requires a significantly smaller amount of entanglement for the shared pure states than Strategy II to establish the same level of teleportation fidelity. Such an observation shows that changing the local basis of an entangling measurement can be beneficial with the consumption of entanglement. 
  
\end{itemize}
 
\section{Basic Notations and Preliminaries} \label{II}
If $X$ represents any quantum system, then a pure quantum state of $X$ can be represented by a vector $\ket{\psi}_X $ in the Hilbert space $\mathcal{H}_d$ of dimension $d$. Any pure or mixed state of $X$ can be represented as a density matrix $\rho_X \in \mathcal{D}(\mathcal{H}_d)$, where $\mathcal{D}(\mathcal{H}_d)$ denotes the set of trace-one, positive and linear Hermitian operators. We denote the set of all positive semi-definite operators acting on the Hilbert space $\mathcal{H}_d$ as $\mathcal{P}(\mathcal{H}_d)$. For simplicity, we denote any positive semi-definite operator $M\in \mathcal{P}(\mathcal{H}_d)$ as $M\geq \mathbf{0}$, where we denote $\mathbf{0}$ as the {\it null operator}. The set of all Hermitian operators in $\mathcal{H}_d$ is denoted by $Herm(\mathcal{H}_d)$.

\subsection{Average teleportation fidelity}
Let Alice ($\mathcal{A}$) and Bob ($\mathcal{B}$) share {\color{red}a} bipartite quantum state $\rho_{AB}$ with the same local dimension $d\geq 2$. Let $\ket{\psi}_{in}$ be an unknown state in $\mathcal{C}^d$ that Alice wants to teleport. So Alice-Bob perform a general Quantum Teleportation protocol (a LOCC protocol), say $\mathcal{T}$ on the composite state $\ket{\psi}_{in}\bra{\psi}\otimes \rho_{AB}$. Finally, Alice's system is discarded and Bob ends up with an output state,  
\begin{align}
    \rho_{\psi,out}=Tr_{A}\left( \mathcal{T}(\ket{\psi}_{in}\bra{\psi}\otimes \rho_{AB})\right). \nonumber
\end{align}
So this implies a mapping between the output state at Bob's lab with the input state of Alice or equivalently a quantum channel $\Lambda_{\mathcal{T},~\rho_{AB}}$ such that $\Lambda_{\mathcal{T},~\rho_{AB}}(\ket{\psi}_{in}\bra{\psi})=\rho_{\psi,out}$.  Note that the channel $\Lambda_{\mathcal{T},~\rho_{AB}}$ depends only on the LOCC protocol $\mathcal{T}$ and the shared state $\rho_{AB}$. The quality of the channel can be measured by computing the average fidelity between input and output state as
\begin{align}
  f(\mathcal{T},~\rho_{AB})=\int d\psi ~\bra{\psi} \rho_{\psi,out}\ket{\psi}, \label{1}
\end{align}
where the integration denotes a Haar measure over all input states because we assume that we are choosing the input state from a uniform distribution. The average fidelity $ f(\mathcal{T},~\rho_{AB})$ denotes the efficacy of the share state $\rho_{AB}$ for the Quantum Teleportation protocol $\mathcal{T}$. Generally the optimal teleportation fidelity denotes maximizing  $ f(\mathcal{T},~\rho_{AB})$ over all possible $\mathcal{T}$ as 
\begin{align}
    f^{*}(\rho_{AB}) = \max _{\mathcal{T}} f(\mathcal{T},~\rho_{AB}), \label{2}
\end{align}
which becomes an optimization problem that generally has no closed form expression.

\subsection{Fully entangled fraction}

{\it Fully entangled fraction} of a bipartite state $\rho_{AB}$ with local dimension $d\geq 2$ is a measure of closeness of $\rho_{AB}$ with its closest maximally entangled state (MES) and it is expressed as 
\begin{align}
    F(\rho_{AB}) = \max_{\ket{\Phi}\in \mathcal{C}^d \otimes \mathcal{C}^d} \bra{\Phi}\rho_{AB} \ket{\Phi}, \label{3}
\end{align}
where  $\ket{\Phi}$ is a MES. It is worth mentioning that $F(\rho_{AB})$ is not a LOCC monotone, hence, may increase under LOCC \cite{PhysRevA.62.012311} and the optimal {\it fully entangled fraction} of $\rho_{AB}$ is the maximized {\it fully entangled fraction} of $\rho_{AB}$ over all LOCC and it is defined as 
\begin{align}
    F^*(\rho_{AB})= \max_{\Lambda \in LOCC} \max_{\ket{\Phi}} \bra{\Phi}\Lambda(\rho_{AB}) \ket{\Phi} \nonumber
\end{align}
In Ref. \cite{PhysRevA.60.1888} it has been shown that the optimal teleportation fidelity $f^{*}(\rho_{AB})$ is a linear function of the optimal {\it fully entangled fraction} of the shared state $\rho_{AB}$. So for any $\rho_{AB}$ with local dimension $d$, such linear dependence can be written as 
\begin{align}
    f^{*}(\rho_{AB})=\dfrac{d~F^{*}(\rho_{AB})+1}{d+1}, \label{4}
\end{align}
Note here that the optimal LOCC protocol $\mathcal{T}^*$ that relates $F^*(\rho_{AB})$ with $f^*(\rho_{AB})$ can be written as $\mathcal{T}^*=\Lambda_{twirl}\circ \Lambda^*$. Here $\Lambda^*$ is the optimal protocol that maximizes {\it fully entangled fraction} of $\rho_{AB}$ and then both parties perform the twirling operation $\Lambda_{twirl}$ which transforms the state $\Lambda^*(\rho_{AB})$ into an isotropic state $\rho_{iso}$ without changing the {\it fully entangled fraction} of $\Lambda^*(\rho_{AB})$ \cite{PhysRevA.60.1888}. After this, Alice and Bob  perform the standard teleportation protocol $\mathcal{T}_{st}$ \cite{Horodecki1996, PhysRevA.62.012311}. So the optimal teleportation protocol can be expressed as $\mathcal{T}^*_{tel}=\mathcal{T}_{st} \circ \mathcal{T}^*$.      
 
Throughout this work, we deal with two-qubit states. For all entangled two-qubit states, the optimal {\it fully entangled fraction} is strictly greater than the classical bound $\frac{1}{2}$. Furthermore, the upper bounds on optimal {\it fully entangled fraction} can be expressed as 
\begin{align}
     F^{*}(\rho_{AB}) \leq \dfrac{1+N(\rho_{AB})}{2}\leq \dfrac{1+C(\rho_{AB})}{2},  \label{5}
\end{align}
where $N(\rho_{AB})$ is the negativity \cite{RevModPhys.81.865} and $C(\rho_{AB})$ is the concurrence \cite{PhysRevLett.80.2245} of the two-qubit state $\rho_{AB}$. Generally, the upper bounds are not always achievable, and in those cases, $F^{*}(\rho_{AB})$ has no correspondence with the measures $N(\rho_{AB})$ and $C(\rho_{AB})$. The upper bound is saturated {\it iff} the following condition is satisfied,
\begin{align}
    \rho_{AB}^{\Gamma_B} \ket{\Psi}= \lambda_{min} \ket{\Psi}, \label{6}
\end{align}
where $\Gamma_B$ is the partial transpose with respect to Bob's system and $\lambda_{min}$ is the smallest eigenvalue of $\rho_{AB}^{\Gamma_B}$ and the corresponding eigenvector $\ket{\Psi}$ is a maximally entangled state. If such a condition is not satisfied, there is no closed-form expression to compute the optimal {\it fully entangled fraction}. On that occasion, one needs to find the feasible solution of the following semi-definite programming (SDP) \cite{PhysRevLett.90.097901}:

\begin{align}
    maximize \quad  & F(X,\rho)= \dfrac{1}{2}-\tr(X\rho^{\Gamma}), \nonumber \\
     s.t. \quad \quad & \mathbf{0}\leq X \leq \mathbb{I}, \nonumber \\
      and \quad \quad   -&\dfrac{\mathbb{I}}{2} \leq X^{\Gamma} \leq \dfrac{\mathbb{I}}{2}, \label{7}
\end{align}
where the optimal feasible solution, {\it i.e.,} $F(X_{opt}, \rho)$, will give us the optimal {\it fully entangled fraction} value that is strictly greater than $\frac{1}{2}$ if the given state $\rho$ is entangled. Here, the operator $X_{opt}$ must be a rank one positive operator. 

\begin{figure}
    \includegraphics[height=160px,width=253px]{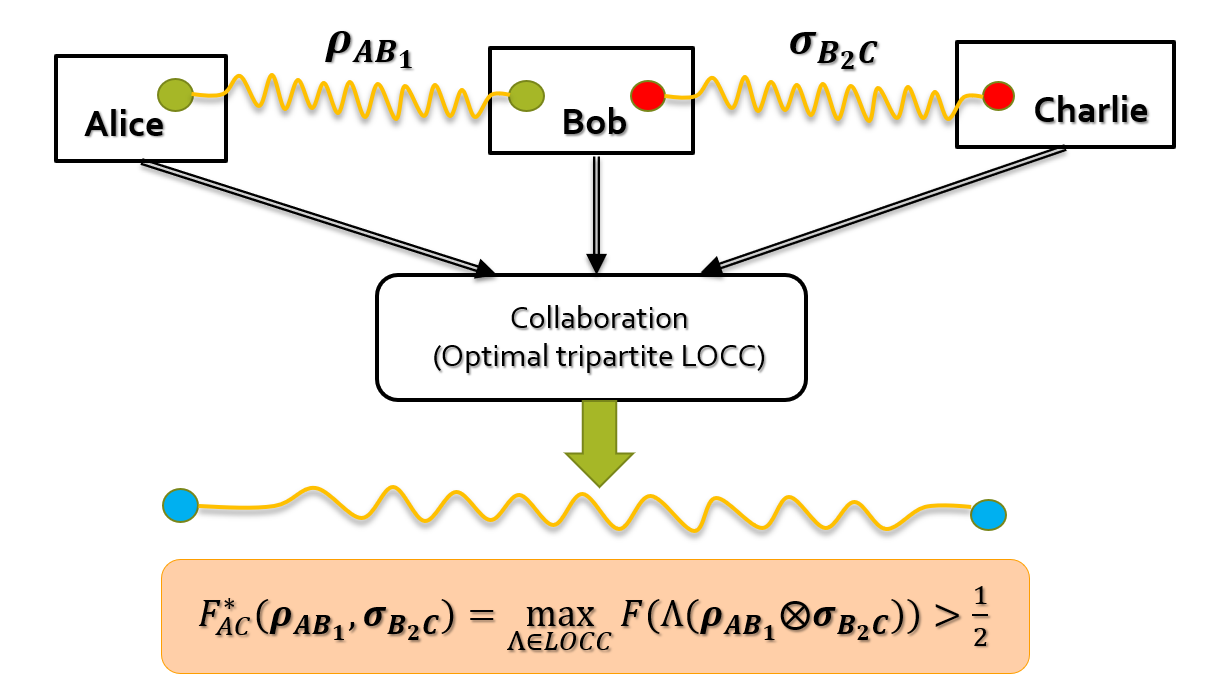}
    \caption{ (Color online) Alice-Bob share a two-qubit state $\rho_{AB_1}$, whereas Bob-Charlie share a two-qubit state $\sigma_{B_2 C}$. The task is to obtain the maximum possible teleportation fidelity, or equivalently the optimal fully entangled fraction between Alice and Charlie, where the maximization is over all three-party LOCC. The optimal fully entangled fraction must be a LOCC monotone and should give a higher value than the classical upper bound $\frac{1}{2}$ {\it iff} some amount of distillable entanglement is distributed between Alice and Charlie.  }
    \label{fig1: Collaboration three party}
\end{figure}

\section{Optimal teleportation in a three-party network} \label{III}

\begin{obj}
 Consider three spatially separated parties, namely, Alice ($\mathcal{A}$), Bob ($\mathcal{B}$) and Charlie ($\mathcal{C}$) such that $\mathcal{AB}$ share an arbitrary state $\rho_{AB_1}$ with local dimensions $d_A=d_{B_1}=d\geq 2$ and $\mathcal{BC}$ share another arbitrary state $\sigma_{B_2C}$ with local dimensions $d_{B_2}=d_C=d\geq 2$. All parties have the freedom to perform LOCC. The task is to distribute entanglement between Alice and Charlie such that its fully entangled fraction is maximized over all three-party LOCC and it is strictly greater than the classical bound $\frac{1}{d}$. 
 \end{obj}
 Our objective becomes trivial when we aim to perform a perfect quantum teleportation between Alice and Charlie, where the fully entangled fraction value reaches one. This is possible {\it iff} we can distribute $log_2~d$ \textit{ebit} of entanglement between Alice and Charlie. This is always possible by the swapping of entanglement operation \cite{werner2001all, mor1999teleportation, PhysRevA.64.064304} (see Appendix {\bf A}), where $\mathcal{AB}$ and $\mathcal{BC}$ both share $log_2d$ \textit{ebit} and Bob performs PVM in a basis of MES. Bob communicates the outcome to Alice and Charlie through noiseless classical channels and they perform further unitary corrections.

Now, to answer our generic question, we have to assume that Alice, Bob, and Charlie perform the most general LOCC. Unfortunately, the structure of the most general three-party LOCC in not very clear \cite{chitambar2014everything} as there could be many rounds. In Fig. \ref{fig1: Collaboration three party}, we provide a schematic diagram to visualize this.  
 
 It is well established that any three-party LOCC can be represented as a three-party separable operation (SEP), which has a simpler mathematical structure. However, the converse does not hold in general \cite{bennett1999quantum} i.e., any SEP cannot be realized as a LOCC. Let us first understand the structure of a most general three-party SEP operation. If $\eta_{AB_1B_2C}=\rho_{AB_1}\otimes \sigma_{B_2C}$ be the initial three-party state, then the general three-party SEP operation $\Lambda_{SEP}$ has the following effect,
\begin{align}
    &\Lambda_{SEP}(\eta_{AB_1B_2C})= \sum_i K_i~\eta_{AB_1B_2C}~K_i^{\dagger} \nonumber \\
   &= \sum_{i}\left( m^A_{i} \otimes n^B_{i} \otimes o^C_{i}\right)~\eta_{AB_1B_2C}~(m^A_{i}\otimes n^B_{i}\otimes o^C_{i})^{\dagger},
    \label{8}
\end{align}
where  $\lbrace K_i =m^A_i \otimes n^B_i \otimes o^C_i\rbrace$ is the set of Kraus operators, where $m^A_i,~n^B_i, ~o^C_i$ are complex matrices satisfying the conditions,  
\begin{align}
    \sum_i K_i ^{\dagger}K_i = \mathbb{I}_{d^4}, \quad \quad \mathbf{0}\leq K_i^{\dagger}K_i \leq \mathbb{I}_{d^4}
\end{align}
 After the operation $\Lambda\in SEP$, Bob discards his system, and finally, Alice-Charlie end up with a bipartite state of local dimensions $d_A=d_C=d$.

\subsection{Maximizing fully entangled fraction under SEP operation}
\begin{definition}
    For a given choice of preshared states $\rho_{AB_1}$ and $ \sigma_{B_2C}$ we define the maximum fully entangled fraction achievable under separable quantum operations as $F^*_{SEP}(\rho_{AB_1},\sigma_{B_2C})$ and maximum fully entangled fraction achievable under LOCC as $F^*(\rho_{AB_1},\sigma_{B_2C})$, where the condition $F^*_{SEP}(\rho_{AB_1},\sigma_{B_2C}) \geq F^*(\rho_{AB_1},\sigma_{B_2C})$ is always satisfied.  
\end{definition}
\vskip 0.3cm
Since $\Lambda_{SEP}$ has a simplified mathematical structure, unlike LOCC, under all three-party SEP operations, we can express the optimal fully entangled fraction as

\begin{small}
\begin{align}
    F^*_{SEP}(\rho_{AB_1},\sigma_{B_2C})
     = \max_{\ket{\Phi}} \max_{\Lambda_{SEP}} \bra{\Phi_{AC}}Tr_{B}\left(\Lambda_{SEP}(\eta_{AB_1B_2C})\right)\ket{\Phi_{AC}},
\end{align}
\end{small}

where the inner product is maximized over all maximally entangled state $\ket{\Phi}$, and over all trace preserving separable operations  $\Lambda_{SEP}$. We should always expect that value of $F^*_{SEP}(\rho_{AB_1},\sigma_{B_2C})$ cannot be lower than $\frac{1}{d}$. This is because for a given choice of $\eta_{AB_1B_2C}$ if there does not exists any such operation that gives a higher value than $\frac{1}{d}$, we can simply discard the existing state and replace it by a product state which gives the value $\frac{1}{d}$. So the optimal SEP operation can be realized as follows: 
\vskip 0.3cm 
\paragraph{{\it Optimal SEP operation:}}  {\it For a given initial quantum three-party state $\eta_{AB_1B_2C} = \rho_{AB_1} \otimes \sigma_{B_2C}$, the optimal separable operation $\Lambda^*_{SEP}$ can be realized as a post-selection process, where with some nonzero success probability $p_{succ}$ the three parties implement an optimal state-dependent SEP operation which provides a fully entangled fraction value larger than $\frac{1}{d}$, otherwise with a probability of failure, $1-p_{succ}$ Alice and Charlie replace their existing state by a product state.}
\vskip 0.3cm
Hence, without any ambiguity, we can claim that $F^*_{SEP}(\rho_{AB_1},\sigma_{B_2C})$ cannot go below the classical bound. This means if there does not exist any trace nonincreasing separable map giving a fully entangled fraction value more than $\frac{1}{d}$, then we can discard the existing state by a product state and making $p_{succ} =0$, which saturates the classical upper bound. So we can write the expression of $F^*_{SEP}(\rho_{AB_1},\sigma_{B_2C})$ as  

    \begin{align}
      F^*_{SEP}(\rho_{AB_1},\sigma_{B_2C})
     &= \max_{\ket{\Phi}\in MES} ~\max_{\tilde{\Lambda}_{SEP}}~~\Big( \dfrac{1-p_{succ}}{d} \nonumber \\
     &+  \bra{\Phi_{AC}}Tr_{B}\left(\tilde{\Lambda}_{SEP}(\eta_{AB_1B_2C})\right)\ket{\Phi_{AC}} \Big), \label{optp} 
\end{align}

where $\tilde{\Lambda}_{SEP}$ is a trace non-increasing SEP operation and we define the success probability $p_{succ}$ as 
\begin{align}
p_{succ} = \tr \left[ \tilde{\Lambda}_{SEP}(\eta_{AB_1B_2C})\right], \label{psucc}
\end{align}
where $\tilde{\Lambda}_{SEP}$ has an operator-sum representation, 
\begin{align}
    \tilde{\Lambda}_{SEP}(\cdot) = \sum_{i=1}^K (m^A_i \otimes n^{B}_i \otimes o_i^C )~(\cdot) ~(m^A_i \otimes n^{B}_i \otimes o_i^C )^{\dagger} \label{tnsep}
\end{align}
such that 
\[
\sum_{i=1}^K (m^A_i)^{\dagger}~m^A_i \otimes (n^B_i)^{\dagger}~n^B_i \otimes (o^C_i)^{\dagger}~o^C_i \leq \mathbb{I}_{AB_1B_2C}. 
\]
Thus, the expression of the fully entangled fraction can be written as
\begin{widetext}
\begin{align}
      F^*_{SEP}(\rho_{AB_1},\sigma_{B_2C})
     &= \max_{\ket{\Phi}} \max_{\{m_i^A,n_i^B,o_i^C\}} \Big(\dfrac{1-p_{succ}}{d} + \bra{\Phi_{AC}}\tr_{B}\Big[\sum_{i=1}^K (m^A_i \otimes n^{B}_i \otimes o_i^C )(\eta_{AB_1B_2C}) ~(m^A_i \otimes n^{B}_i \otimes o_i^C )^{\dagger}\Big]\ket{\Phi_{AC}}\Big),\nonumber\\
      &= \max_{U} \max_{\{m_i^A,n_i^B,o_i^C\}} \Big(\dfrac{1-p_{succ}}{d} \nonumber\\
      &\qquad\qquad\qquad\qquad+ \bra{\Phi_{0}}(U^{\dagger}\otimes \mathbb{I})\tr_{B}\Big[\sum_{i=1}^K (m^A_i \otimes n^{B}_i \otimes o_i^C )(\eta_{AB_1B_2C}) ~(m^A_i \otimes n^{B}_i \otimes o_i^C )^{\dagger}\Big](U\otimes \mathbb{I})\ket{\Phi_{0}}\Big),\nonumber\\
      &= \max_{\{m_i^A,n_i^B,o_i^C\}} \Big(\dfrac{1-p_{succ}}{d} + \bra{\Phi_{0}}\tr_{B}\Big[\sum_{i=1}^K (m^A_i \otimes n^{B}_i \otimes o_i^C )(\eta_{AB_1B_2C}) ~(m^A_i \otimes n^{B}_i \otimes o_i^C )^{\dagger}\Big]\ket{\Phi_{0}}\Big),\nonumber\\
      &= \max_{\{m_i^A,n_i^B,o_i^C\}}
\Big(\frac{1-p_{succ}}{2}
+ \sum_{i=1}^K \operatorname{Tr} \Big[
\big( P_{\Phi_0}^{A C} \otimes \mathbb{1}_B \big)\big( m^A_i \otimes n^{B}_i \otimes o_i^C  \big)\,
(\eta_{AB_1B_2C}) \,
\big( m^A_i \otimes n^{B}_i \otimes o_i^C  \big)^\dagger
\Big]\Big)\label{optp} 
\end{align}
Noting that 
\[
p_{succ} = \sum_{i=1}^k \operatorname{Tr} \Big[
\big( (m_i^A)^{\dagger}m_i^A \otimes (n_i^B)^{\dagger}n_i^B \otimes (o_i^C)^{\dagger}o_i^C \big) (\eta_{AB_1B_2C})
\Big]
\]
We can write
\begin{align}
    F^*_{SEP}(\rho_{AB_1},\sigma_{B_2C})= \frac{1}{2}
+ \max_{\{m_i^A,n_i^B,o_i^C\}}~\operatorname{Tr} \Big[
\Delta \sum_{i=1}^K
\big( m^A_i \otimes n^{B}_i \otimes o_i^C  \big)\,
(\eta_{AB_1B_2C}) \,
\big( m^A_i \otimes n^{B}_i \otimes o_i^C  \big)^\dagger
\Big]
\end{align}
where 
\begin{align}
\Delta=\Big( P_{\Phi_0}^{A C} \otimes \mathbb{1}^{B_1B_2} - \tfrac{\mathbb{1}_{AB_1B_2C}}{2} \Big)
\end{align}
Let
\begin{align*}
\mathcal{E}(\eta_{AB_1B_2C})
&= \sum_{i=1}^K K_i \, (\rho_{AB_1} \otimes \sigma_{B_2C}) \, K_i^\dagger,
\quad \text{with } K_i = m^A_i \otimes n^{B}_i \otimes o_i^C, \\
&\text{s.t.} \quad \operatorname{Tr}_{B} \big[ \mathcal{E}(\omega) \big] \leq \mathbb{1}_{AC}.
\end{align*}
Then,
\begin{align*}
F^*_{SEP}(\rho_{AB_1},\sigma_{B_2C})
= \max_{\mathcal{E}} \Big( \frac{1}{2}
+ \operatorname{Tr} \Big[ \Delta \mathcal{E}(\eta_{AB_1B_2C}) \Big]\Big).
\end{align*}
Using Choi-Jamiolkowski isomorphism, we can write \cite{nielsen2010quantum}
\begin{align}
\mathcal{E}(\eta_{AB_1B_2C})
&= \operatorname{Tr}_{AB_1B_2C} \Big[
\big( \mathbb{1}_{A'B'_1B'_2C'} \otimes \eta_{AB_1B_2C}^{T} \big) J(\mathcal{E})
\Big],
\end{align}
where $J(\mathcal{E}$ is the Choi matrix for the given operation $\mathcal{E}.$

Thus, we can write the expression for the fully entangled fraction as the following convex optimization problem
\begin{align}
F^*_{SEP}(\rho_{AB_1},\sigma_{B_2C})
\leq& \max_{J(\mathcal{E})}  \Big(\frac{1}{2}
+ \operatorname{Tr} \Big[ \Delta  \operatorname{Tr}_{AB_1B_2C} \Big[
\big( \mathbb{1}_{A'B'_1B'_2C'} \otimes \eta_{AB_1B_2C}^{T} \big) J(\mathcal{E})
\Big] \Big]\Big)\nonumber\\
=& \max_{J(\mathcal{E})}  \Big(\frac{1}{2}
+ \operatorname{Tr} \Big[    
\big( \Delta \otimes \eta_{AB_1B_2C}^{T} \big) J(\mathcal{E})
 \Big]\Big),\label{SDPgen}
\end{align}
where $J(\mathcal{E})\in\mathcal{L}(\mathcal{H}_{A'}\otimes\mathcal{H}_{B'_1}\otimes\mathcal{H}_{B'_2}\otimes\mathcal{H}_{C'}\otimes\mathcal{H}_{A}\otimes\mathcal{H}_{B_1}\otimes\mathcal{H}_{B_2}\otimes\mathcal{H}_{C})$ under the constraints
\begin{align*}
    J(\mathcal{E})&\geq0,\\
    J(\mathcal{E})^{\Gamma_{B'_1B'_2B_1B_2}}&\geq0,\\
    \tr_{AB_1B_2C}[J(\mathcal{E})]&\leq\mathbb{1}_{A'B'_1B'_2C'}
\end{align*}
\end{widetext}

 Here the constraint $J(\mathcal{E})^{\Gamma_{B'_1B'_2B_1B_2}}\geq0$ implies that $J(\mathcal{E})$ is PPT across the $B'_1B'_2B_1B_2|A'C'AC$ cut. This automatically includes the Choi matrices for SEP operations, as SEP operations are a subset of PPT operations. This gives us the upper bound on the achievable fully entangled fraction in terms of a convex optimization problem. But we observe that finding the optimal fully entangled fraction involves optimization over a variable which is a $d^8 \times d^8$ matrix. Hence, solving this problem in its full generality is not that straightforward. It might be possible to explore some symmetries of the form of the cost function to simplify the problem, but we leave it as an avenue to be explored for future studies.

 However, the objective function of the above convex optimization problem emerges through the expression of the fully entangled fraction defined in Eq.$~$(\ref{optp}). In the expression of Eq.$~$(\ref{optp}), there is an inner product term $\bra{\Phi_{AC}}Tr_{B}\left(\tilde{\Lambda}_{SEP}(\eta_{AB_1B_2C})\right)\ket{\Phi_{AC}}$, where $\tilde{\Lambda}_{SEP}$ denotes any trace non-increasing SEP. Now one could ask whether we can manipulate the local Kraus operators of $\tilde{\Lambda}_{SEP}$ inside such inner product in such a way that it remains invariant, and consequently, we reduce the variable size of our optimization problem. Below we will show that such reduction is not possible in general.

 So let us consider an arbitrary trace non-increasing SEP operation $\tilde{\Lambda}_{SEP}$ given in Eq.$~$(\ref{tnsep}).  We now manipulate the inner product as follows:

\begin{widetext}
\begin{small}
\begin{align}    
      &\bra{\Phi_{AC}}Tr_{B}\left(\tilde{\Lambda}_{SEP}(\eta_{AB_1B_2C})\right)\ket{\Phi_{AC}} = \max_{\ket{\Phi}} \max_{\lbrace m_i^A,n_i^B,o_i^C \rbrace} \bra{\Phi_{AC}} Tr_{B}\left(\sum_{i}  (m^A_i \otimes n^B_i \otimes o^C_i)~\eta_{AB_1B_2C} ~(m^A_i\otimes n^B_i \otimes o^C_i)^{\dagger} \right)\ket{\Phi_{AC}}    \nonumber  \\
    &=  \max_{U} \max_{\lbrace m_i^A,~n_i^B,~o_i^C \rbrace} \bra{\Phi_0} (U^{\dagger}\otimes \mathbb{I}) \left(\sum_{i}  (m^A_i \otimes o^C_i) ~ Tr_{B}\left( (\mathbb{I} \otimes (n^B_i)^{\dagger}n^B_i \otimes \mathbb{I})~\eta_{AB_1B_2C}\right) (m^A_i\otimes o^C_i)^{\dagger} \right) (U\otimes \mathbb{I})\ket{\Phi_{0}} \nonumber \\
    &= \max_{\lbrace m_i^A,~n_i^B,~o_i^C \rbrace} \sum_{i} \bra{\Phi_0}  (m^A_i \otimes o_i^C) ~ Tr_{B}\left( (\mathbb{I} \otimes (n^B_i)^{\dagger}n^B_i \otimes \mathbb{I})~\eta_{AB_1B_2C}\right) (m^A_i\otimes o^C_i)^{\dagger}) \ket{\Phi_0},  \label{9} \\
    &= \max_{\lbrace m_i^A,~N_i^B,~o_i^C \rbrace } \sum_i~ \bra{\Phi_0}  ((o_i^C)^T~m_i^A \otimes\mathbb{I}) ~\tilde{\chi}_{AC}(N^B_i)~ ((m_i^A)^{\dagger}~(o_i^C)^*\otimes \mathbb{I}) \ket{\Phi_0}, \quad \quad \quad \quad \quad \quad \text{where} \quad N_i^B=(n_i^B)^{\dagger}n_i^B,  \label{10} \\
    &= \max_{\lbrace m_i^A,~N_i^B\rbrace}~~ \bra{\Phi_0}\left(\sum_i  ~ (m^A_i \otimes \mathbb{I}) ~\tilde{\chi}_{AC}(N_i) ~ ((m^A_i)^{\dagger} \otimes \mathbb{I}) \right) \ket{\Phi_0},  \label{11} \\
    &= \max_{\lbrace m_i^A,~N_i^B\rbrace} ~\tr\left( \sum_i  ~(m^A_i\otimes \mathbb{I})~\tilde{\chi}_{AC}(N_i)~((m^A_i)^{\dagger}\otimes \mathbb{I})~P^{AC}_{\Phi_0}\right) \nonumber \\ 
    &= \max_{\lbrace m_i^A,~N_i^B\rbrace} \sum_i~\tr\left( \tilde{\chi}_{AC}(N^B_i)~X^{AC}_i(m_i^A)\right), \label{12}
\end{align}
\end{small}
\end{widetext}

where $P^{AC}_{\Phi_0}=\ket{\Phi_0}\bra{\Phi_0}$ is the projector, where $\sqrt{d}\ket{\Phi_0}=\sum_{i=0}^{d-1}\ket{ii}$ is the MES in $\mathbb{C}^d \otimes \mathbb{C}^d$. In second and fifth line the product unitary operator $U\otimes \mathbb{I}$ and the Kraus operator of Charlie $(o_i^C)^T \otimes \mathbb{I}$ are absorbed in every $m^A_i \otimes \mathbb{I}$ respectively, since there is already a maximization over all such set of operators. In Eq.$~$(\ref{10}) we used the following property, 
\begin{align}
    (\mathbb{I}\otimes Y) \ket{\Phi_0} = (Y^T \otimes \mathbb{I})\ket{\Phi_0}, \nonumber
\end{align}
where $Y$ is an arbitrary complex matrix. Also in Eq.$~$(\ref{12}) the operator $X^{AC}_i(m_i^A)$ denotes an un-normalized rank one operator defined as
\begin{align}
    X^{AC}_i(m_i^A)&= ((m^A_i)^{\dagger}\otimes \mathbb{I})~P_{\Phi_0} ~(m^A_i\otimes \mathbb{I}),
\end{align}
and the operator $\tilde{\chi}_{AC}(N_i^B)$ in Eq.$~$(\ref{10}) is defined as  
\begin{align}
    \tilde{\chi}_{AC}(N_i^B)= Tr_{B}\left( (\mathbb{I} \otimes (n^B_i)^{\dagger}n^B_i \otimes \mathbb{I})~\eta_{AB_1B_2C}\right), \label{13}
\end{align}
which is un-normalized density matrix.
\vskip 0.3cm
Therefore, from Eq.$~$(\ref{11}) it is clear that if we start with any trace non-increasing SEP operation, and manipulate the local Kraus operators, then the inner product remains invariant while Charlie performs a trivial operation and we only maximize over Alice's modified Kraus operator $\lbrace (o_i^C)^T~m_i^A\rbrace$, and over Bob's Kraus operator $\lbrace n_i^B \rbrace$. However, if we again consider the objective function given in Eq.$~$(\ref{optp}), then it can happen that the modified global operation does not remain physical after manipulating the local Kraus operators of $\tilde{\Lambda}_{SEP}$ as done in Eq.$~$(\ref{11}). 

Therefore, if $o_i^C$ are not unitary, then the following constraint inequality 
\begin{small}
\begin{align}
  & (i) ~~ \sum_i (m_i^A)^{\dagger}m_i^A \otimes (n_i^B)^{\dagger}(n_i^B) \otimes (o_i^C)^{\dagger}o_i^C \leq \mathbb{I}_{AB_1B_2C} \nonumber \\
  &(ii) ~~ \sum_i ((o_i^C)^T~m_i^A)^{\dagger}((o_i^C)^T~m_i^A) \otimes (n_i^B)^{\dagger}(n_i^B)\otimes\mathbb{I}_C \leq\mathbb{I}_{AB_1B_2C} \nonumber\\
\end{align}
\end{small}
does not satisfy simultaneously in general. For instance, consider for $d=2$
\begin{align}
    &m_1^A=\ket{0}\bra{0},~m_2^A=\ket{0}\bra{0}+\ket{0}\bra{1},\nonumber\\
    &n_1^B=n_2^B=\mathbb{I}_{4\times4},\nonumber\\
    &o_1^C=\frac{1}{\sqrt{2}}\ket{0}\bra{0},~o_2^C=\frac{1}{\sqrt{2}}\ket{0}\bra{1}.\nonumber
\end{align}
In this case, it can be easily verified that
\begin{align}
    \sum_{i=1}^2 (m^A_i)^{\dagger}~m^A_i \otimes (n^B_i)^{\dagger}~n^B_i \otimes (o^C_i)^{\dagger}~o^C_i < \mathbb{I}_{AB_1B_2C},
\end{align}
is satisfied and therefore, such Kraus operators represent a valid trace non-increasing SEP operation. However, the modified Kraus operators {\it i.e.,} $((o_i^C)^T~m_i^A)^{\dagger}((o_i^C)^T~m_i^A) \otimes (n_i^B)^{\dagger}(n_i^B)\otimes\mathbb{I}_C$ do not represent a trace non-increasing SEP as the following
\begin{align}
    \sum_i ((o_i^C)^T~m_i^A)^{\dagger}((o_i^C)^T~m_i^A) \otimes (n_i^B)^{\dagger}(n_i^B)\otimes\mathbb{I}_C >\mathbb{I}_{AB_1B_2C}. 
\end{align}
is true.

 \subsection{A restricted class of separable operations $(\Lambda_{rSEP} \subset \Lambda_{SEP})$}
 Let us restrict ourselves to the class of SEP operations in Eq.$~$(\ref{optp}), where in the trace non-increasing part we impose that all the Kraus operators of Charlie should be unitary operators \textit{i.e} $o_i^C$ satisfy the property $(o_{i}^C)^{\dagger}o_i^C =\mathbb{I}_C$~ $\forall i$. We denote the trace non-increasing part of the restricted SEP as $\tilde{\Lambda}_{rSEP}$ such that the deterministic version can be expressed as 
 \begin{align}
     \Lambda_{rSEP}(\rho) &= \sum_i (m_i^A \otimes n_i^B \otimes o_i^C) ~\rho ~(m_i^A \otimes n_i^B \otimes o_i^C)^{\dagger}    \nonumber \\
      & + (1- p_{succ}) ~\ket{\psi~\phi}_{AC}\bra{\psi~\phi}, \label{rSEP}
 \end{align}
  where $\rho = \rho_{AB_1} \otimes \sigma_{B_2C}$ and  $p_{succ} = Tr[\sum_i (m_i^A \otimes n_i^B \otimes o_i^C) ~\rho ~(m_i^A \otimes n_i^B \otimes o_i^C)^{\dagger}]$ denotes the probability of success of implementing $\tilde{\Lambda}_{rSEP}$. In the above expression of  $ \Lambda_{rSEP}(\rho) $, the product state $\ket{\psi~\phi}_{AC}\bra{\psi~\phi}$ denotes some post-selected state by Alice and Charlie for which the fully entangled fraction saturates the classical upper bound.  
 \vskip 0.3cm 
 Note here that the fully entangled fraction achievable through $\Lambda_{rSEP}$ cannot give a less value than $\frac{1}{d}$. Later we will show that such restriction reduces the complexity of the convex optimization problem introduced in Eq.$~$(\ref{SDPgen}). The optimal fully entangled fraction under such a restricted class of separable operations $\tilde{\Lambda}_{rSEP} \subset SEP$ can be described as
\begin{align}
     &F^*_{rSEP}(\rho_{AB_1},\sigma_{B_2C})= \nonumber \\ 
     &\dfrac{1-p_{succ}}{d}+  
     \bra{\Phi_0} Tr_{B}[\tilde{\Lambda}_{rSEP}(\eta_{AB_1B_2 C})]\ket{\Phi_0}
    \label{sufficient} \\
    s.t. \nonumber \\
     &\tilde{\Lambda}_{rSEP}= \lbrace \sum_i K_i^{\dagger}K_i = \sum_i (m^A_i)^{\dagger}m^A_i \otimes (n^B_i)^{\dagger}n^B_i \otimes \mathbb{I}_C \leq \mathbb{I} \rbrace
\end{align}
  Here $F^*_{rSEP}(\rho_{AB_1},\sigma_{B_2C})$ in Eq.$~$(\ref{sufficient}) represents the optimal teleportation fidelity achievable under the restricted SEP protocols (where Charlie's Kraus operators are unitary operators).  It is important to point out that $\tilde{\Lambda}_{rSEP}$ cannot realize any three-party LOCC, as we are imposing restrictions on Charlie's Kraus operators. We denote the class of LOCC as $\Lambda_{rLOCC}\subset LOCC$ that can be realized through $\Lambda_{rSEP}$. As $\Lambda_{rLOCC}$ is a subset of $\Lambda_{rSEP}$, the operator-sum representation of $\Lambda_{rLOCC}$ can be expressed exactly as Eq.$~$(\ref{rSEP}). The following types of LOCC can realize $\Lambda_{rLOCC}$.

\begin{small}
\begin{align}
   & (1) ~\Lambda_{B \rightarrow AC} \implies One~way~LOCC ~from~Bob ~to ~Alice-Charlie 
   \nonumber \\
   & (2)~ \Lambda_{A\rightarrow BC}\implies One~way~LOCC ~from~Alice ~to ~Bob-Charlie
   \nonumber \\
   & (4) ~ \Lambda_{A\leftrightarrow B\rightarrow C} \implies Both~way~LOCC~only ~between~Alice-Bob \label{loccstructure}
\end{align}
\end{small}

 where the arrow $\rightarrow$ indicates one-way classical communication from sender to receiver and $\leftrightarrow$ indicates both way communications. Therefore, $\Lambda_{B \rightarrow AC}$ operations represents LOCC protocols initiated by Bob, and $\Lambda_{A\leftrightarrow B \rightarrow C}$ represents the protocol initiated by either Alice or Bob.

\subsubsection{Exploring the optimization problem under restricted SEP operation} \label{secB}
For any given states $\rho_{AB_1}$ and $\sigma_{B_2C}$, one can propose an optimization problem under this restricted three-party SEP operation to obtain $F^*_{rSEP}(\rho_{AB_1},\sigma_{B_2C})$ ( see Eq.$~$(\ref{sufficient})) as 
\vspace{-5mm}
\begin{align}
     &(i) \quad \max_{\lbrace N_i,m_i\rbrace}~  \Big(\dfrac{1-p_{succ}}{d}+\sum_{i=1}^K \tr\left[X^{AC}_i(m_i^A)~\tilde{\chi}_{AC}(N^B_i)\right]\Big) \nonumber  \\
      &s.t. \quad  p_{succ} = \sum_{i=1}^K \tr[(M_i^A \otimes N_i^B\otimes \mathbb{I}_C)~\rho_{AB_1} \otimes \sigma_{B_2C}]  \nonumber \\
      & and\quad  \sum_{i=1}^K M_i^A \otimes N_i^B\leq  \mathbb{I}_{d^3},  \nonumber \\
      & \quad \quad ~~  M^A_i= (m^A_i)^{\dagger}m^A_i \geq \mathbf{0},~~N^B_i=(n^B_i)^{\dagger}n^B_i\geq \mathbf{0} \quad \forall i\nonumber \\ 
     &  \quad \quad ~ X^{AC}_i(m_i^A)=((m^A_i)^{\dagger}\otimes \mathbb{I}_d)\ket{\Phi_0}\bra{\Phi_0}(m^A_i \otimes \mathbb{I}_d) \quad \forall m^A_i \label{primalsep}
\end{align}
or, one can equivalently pose a slightly modified version of the same optimization problem as
\begin{align}
    &  \max_{\lbrace m_i, ~N_{i}\rbrace} ~~ \dfrac{1-p_{succ}}{d}+ \sum_{i=1}^K \tr\left[ (X^{AC}_i(m_i^A) \otimes N^B_{i})~\rho_{AB_1} \otimes \sigma_{B_2C}\right] \nonumber \\
    \nonumber \\
    &s.t. \nonumber\\
    & p_{succ} = d\sum_{i=1}^K \tr\left[ (Tr_C[X^{AC}_i(m_i^A)] \otimes N_i^B\otimes \mathbb{I}_C)~\rho_{AB_1} \otimes \sigma_{B_2C}\right] \nonumber \\
    &and \quad  X_i^{AC}(m_i^A)= (m_i^{\dagger} \otimes \mathbb{I}_C)P_{\Phi_0}(m^A_i \otimes \mathbb{I}_C) , \nonumber \\
    & \quad M^A_i=(m^A_i)^{\dagger}m^A_i \geq \mathbf{0} ,~~N^B_{i}\geq \mathbf{0}, \nonumber \\
    & \quad  \sum_i M_i^A \otimes N_i^B\otimes \mathbb{I}_C \leq \mathbb{I}_{d^4}, \label{primalalter}
\end{align}
where the following inequality condition
\[
\sum_i X^{AC}_i(m_i^A) \otimes N_i^B \leq  \sum_i M_i^A \otimes N_i^B\otimes \mathbb{I}_C
\]
holds in general, and the reason can be shown as follows, 

\begin{align}
    & M_i^A \otimes\mathbb{I}_C \otimes N_i^B= (m^A_i)^{\dagger}~m^A_i \otimes  \mathbb{I}_C \otimes N_i^B \nonumber \\
    & = ((m^A_i)^{\dagger} \otimes \mathbb{I}_C) \mathbb{I}_{AC}(m^A_i \otimes \mathbb{I}_C) \otimes N_i^B\nonumber \\
    & = X^{AC}_i(m_i^A) \otimes N_i^B  \nonumber \\
    &+ \sum_{j=1}^{d^2 -1}((m^A_i)^{\dagger} \otimes \mathbb{I}_C) \ket{\Phi_j} \bra{\Phi_j}(m^A_i \otimes \mathbb{I}_C) \otimes N_i^B\nonumber \\
    & \geq X^{AC}_i(m_i^A) \otimes N^B_i,
\end{align}
where the above inequality is strict as long as $M_i^A,~N_i^B$ are positive operators and we have assumed that the basis set, $\lbrace \ket{\Phi_j}\rbrace _{j=0}^{d^2 -1}$ forms a complete set such that every $\ket{\Phi_j}$ is a maximally entangled state. Therefore, every operator \[((m^A_i)^{\dagger} \otimes \mathbb{I}_C) \ket{\Phi_j} \bra{\Phi_j}(m^A_i \otimes \mathbb{I}_C)\] is a rank one positive operator.

which proves the existence of the inequality constraint in Eq.$~$(\ref{primalalter}). However, it is important to notice that the primal optimization problem in Eq.$~$(\ref{primalsep}) or Eq.$~$(\ref{primalalter}) is not a convex optimization problem. This is because the objective function of problem \hyperref[primalsep]{$(i)$} is not linearly dependent on the variables $\lbrace m_i^A \rbrace_{i=1}^K$ and equivalently the constraints of the modified problem in Eq.$~$(\ref{primalalter}) are not convex.   

We refer the reader to Appendix \ref{nonconvexity} for more details about he non-convexity of the optimization problem in \hyperref[primalsep]{$(i)$}. We proceed by convexifying it in the following sections so that we can obtain a tractable bound on the fully entangled fraction. 

\subsection{Obtaining bound on $F^*_{rSEP}(\rho_{AB_1},\sigma_{B_2C})$ via convex relaxation}
The importance of the convex relaxation method is to convert the non-convex optimization mentioned in problem~(i) in Eq.~(\ref{primalsep}) to a convex optimization problem. Here, relaxation means that we convert the actual problem to a convex optimization while we relax the set of variables given in Eq.~(\ref{primalsep}) to a larger set of variables that forms a convex set. The standard definition of a convex optimization problem can be given as follows \cite{boyd2004convex}:
\begin{align}
    &\max_{\mathbf{x}} \quad f(\mathbf{x}) \nonumber \\
    &s.t. \quad g_i(\mathbf{x}) \geq 0 \quad \forall i\in \lbrace 0, m-1 \rbrace \nonumber \\
    & \quad \quad ~~ h_j(\mathbf{x}) = 0 \quad \forall j \in \lbrace 0, n-1 \rbrace, 
\end{align}
where $\mathbf{x}$ denotes a finite set of variables, and the objective function $f(\mathbf{x})$ and every $g_i(\mathbf{x})$ should be a convex function of $\mathbf{x}$. Also every $h_j(\mathbf{x})$ should represent an affine transformation of the set $\mathbf{x}$ \cite{boyd2004convex}. 

In our case, the set of optimization variables for problem $(i)$ are in a tensor product form i.e., $\lbrace X_i^{AC} \otimes N_i^B\rbrace_{i=1}^K$ or in separable form across Alice-Charlie vs. Bob bipartition. In the convex relaxation, we enlarge the such set of operators to another set of operators, $\lbrace \mathcal{M}_i \rbrace_{i=1}^K$ which are separable across Alice-Charlie vs. Bob bipartition. We thus propose the following convex optimization problem, which provides an upper bound on the feasible solutions of problem $(i)$: 
\begin{align}
  (ii) \quad   &F_S^*(\rho_{AB_1},\sigma_{B_2C}) = \max_{\lbrace \mathcal{M}_i\rbrace}~~ \dfrac{1}{d} \nonumber\\
  &\qquad \qquad \qquad ~~~~~~- \sum_{i=1}^K \tr[(Tr_C[\mathcal{M}_i] \otimes \mathbb{I}_C)~\rho \otimes \sigma]\nonumber \\
  & \qquad \qquad \qquad ~~~~~~+\sum_{i=1}^K \tr\left[ \mathcal{M}_i ~\rho \otimes \sigma\right]
  \end{align}
  \begin{align}
    s.t. \quad & \sum_{i=1}^K d~Tr_C[\mathcal{M}_i] \otimes \mathbb{I}_C \leq \mathbb{I}_{AB_1B_2C} \nonumber \\
    & \mathbf{0} \leq \mathcal{M}_i \leq  \mathbb{I}_{AB_1B_2C} \nonumber \\
    & \mathcal{M}_i \in \mathcal{S}_{sep}(\mathcal{H}_{AC} \otimes \mathcal{H}_{B_1B_2}),  \label{sepconv}
\end{align}
where $\mathcal{S}_{sep}(\mathcal{H}_{AC} \otimes \mathcal{H}_{B_1B_2})$ denotes the set of positive Hermitian operators which are separable along the Alice-Charlie vs. Bob cut. Notice that the above optimization can be solved as a convex cone programming and will provide an optimal value depending on the state $\rho_{AB_1} \otimes \sigma_{B_2C}$. So we can claim that the following inequality,
\begin{align}
    F^*_S(\rho_{AB_1},\sigma_{B_2C})  \geq F^*_{rSEP}(\rho_{AB_1},\sigma_{B_2C}),
\end{align}
truly holds. This is because the set of operations $\mathcal{S}_{sep}(\mathcal{H}_{AC} \otimes \mathcal{H}_{B_1B_2})$ forms a super set of the set $\mathcal{S}$ defined as 

\begin{small}
\begin{align}
\mathcal{S} = &\left \lbrace X_i^{AC}\otimes N_i^B~|~X_i^{AC}\geq \mathbf{0},~rank(X_i^{AC})=1, ~N_i^B\geq \mathbf{0},~~ \right. \nonumber \\
 & \left. \sum_{i=1}^K X_i^{AC}\otimes N_i^B\leq \mathbb{I}_{AB_1B_2C} \right \rbrace \nonumber \\
 \subset &\left \lbrace \mathcal{M}_i ~| ~\mathcal{M}_i \geq \mathbf{0},~~\mathcal{M}_i \in \mathcal{S}_{sep}(\mathcal{H}_{AC} \otimes \mathcal{H}_{B_1B_2}),~ \right. \nonumber \\ 
 & \left. \sum_{i=1}^K \mathcal{M}_i \leq \mathbb{I}_{AB_1B_2C} \right \rbrace, \quad \quad \quad 
\end{align}
\end{small}

Thus, now we have an optimization problem in \hyperref[SDPp]{$(ii)$} that is convex. But there is one caveat. When Alice's and Charlie's local dimension $d$ exceeds three, then it is hard to certify whether the variable operator $\mathcal{M}_i$ is separable or entangled across Alice-Charlie vs. Bob's bipartition. In order to tackle this we further relax the variable set of operators $\lbrace \mathcal{M}_i\rbrace$ as mentioned in problem \hyperref[sepconv]{$(ii)$} to the set of PPT operators, where the partial transpose is taken with respect to Bob's two-qudit system. We define such a set as $\mathcal{S}_{PPT}$:
\[
\mathcal{S}_{PPT} (\mathcal{H_{AC}} \otimes \mathcal{H}_{B_1B_2})= \left \lbrace \mathcal{M}_i ~|~ \mathcal{M}_i \geq \mathbf{0},~~\mathcal{M}_i^{T_{B_1B_2}} \geq 0\right \rbrace
\]
where the following hierarchy for the set of operators truly holds, 
\[
\mathcal{S}_{sep} (\mathcal{H_{AC}} \otimes \mathcal{H}_{B_1B_2}) \subset
\mathcal{S}_{PPT} (\mathcal{H_{AC}} \otimes \mathcal{H}_{B_1B_2})  
\]
We now propose another optimization problem below: 
\begin{align}
 (iii) \quad    &F_P^*(\rho,\sigma) = ~~\max_{\lbrace \mathcal{M}_i\rbrace} ~\dfrac{1}{d} - \sum_{i=1}^K \tr[(Tr_C[\mathcal{M}_i] \otimes \mathbb{I}_C)~\rho \otimes \sigma]\nonumber \\
  & +\sum_{i=1}^K \tr\left[ \mathcal{M}_i ~\rho \otimes \sigma\right] \nonumber \\
    s.t. \quad & \sum_{i=1}^K d~Tr_C[\mathcal{M}_i] \otimes \mathbb{I}_C \leq  \mathbb{I}_{AB_1B_2C} \nonumber \\
    & \mathbf{0} \leq \mathcal{M}_i \leq \mathbb{I}_{AB_1B_2C},  \nonumber \\
    &\mathcal{M}_i^{T_{B_1B_2}} \geq \mathbf{0}, \label{SDP} 
\end{align}
where each of the operators $\mathcal{M}_i$ acts jointly on the composite state space of the three-party system and the partial transposition is taken with respect to Bob's entire system. The optimal feasible solutions of the aforementioned optimization problems, i.e. \hyperref[primalsep]{$(i)$}, \hyperref[SDPp]{$(ii)$}, become a strictly feasible solution of the proposed convex optimization problem $(iii)$ satisfying the following hierarchy:
\begin{align}
    F^*_P(\rho_{AB_1},\sigma_{B_2C}) \geq& F^*_S(\rho_{AB_1},\sigma_{B_2C}) \nonumber\\
    &\geq F^*_{rSEP}(\rho_{AB_1},\sigma_{B_2C}) \geq F_r^*(\rho_{AB_1},\sigma_{B_2C}).
\end{align}
Here $F_r^*(\rho_{AB_1},\sigma_{B_2C})$ is the fully entangled fraction achievable through restricted LOCC operation.
\vskip 0.3cm
We thus see that the optimization problem in \hyperref[SDP]{$(iii)$} under the restricted SEP operations is convex, and it involves optimization over $d^4\times d^4$ matrices, which is significantly easier to tackle than the optimization problem in Eqn.\eqref{SDPgen} for general SEP operations, which involves optimizing over matrices of the size $d^8\times d^8$. As mentioned earlier, we could try to utilize the symmetries of the problem to solve the optimization problem in  Eqn.\eqref{SDPgen}, but we leave it as a topic for future studies. In the remainder of this paper, our focus will be on whether there exist certain states for which the gap between the LOCC operations and the SEP operations vanishes from the context of optimizing the fully entangled fraction.

\subsection{General upper bounds on the optimal fully entangled fraction achievable under LOCC }
   Up to this point, we have formulated a convex optimization problem given in Eq.$~$(\ref{SDPgen}) under three-party SEP. Because of its complexity, we have chosen a restricted class of SEP and formulated another simpler convex optimization problem that could provide upper bounds on the fully entangled fraction achievable through a subclass of LOCC that we denote as $\Lambda_{rLOCC}$ (discussed in Eq.$~$(\ref{loccstructure})). The proposition given below shows a generic upper bounds on the fully entangled fraction that can be achievable through any three-party LOCC {\it i.e.} $F^*(\rho_{AB_1}, \sigma_{B_2C})$, and the generic upper bounds that can be achievable through $\Lambda_{rLOCC}$ {\it i.e.} $F^*_r(\rho_{AB_1}, \sigma_{B_2C})$.   
  
 \begin{proposition}
    In a three-party linear network if Alice-Bob share an arbitrary bipartite state $\rho_{AB_1}$ with local dimension $d\geq 2$ and Bob-Charlie share another bipartite state $\sigma_{B_2C}$ with local dimension $d\geq 2$, then the optimal fully entangled fraction $F^*(\rho_{AB_1}, \sigma_{B_2C})$ achievable through any LOCC, and the optimal fully entangled fraction $F^*_r(\rho_{AB_1},\sigma_{B_2C})$ that can be established under a subclass of LOCC defined in Eq.$~$(\ref{loccstructure}) has the following upper bounds:  \label{p1}
    \begin{align}
    &F^*(\rho_{AB_1},\sigma_{B_2C}) \leq \min \left \lbrace F_1^*, ~F_2^*,~F^*_{SEP}(\rho_{AB_1},\sigma_{B_2C})\right \rbrace  \\
    &F_r^*(\rho_{AB_1},\sigma_{B_2C}) \leq \min \left \lbrace F^*(\rho_{AB_1},\sigma_{B_2C}),~F_P^*(\rho_{AB_1},\sigma_{B_2C})\right \rbrace  \\
     & where, \nonumber \\
      &F_1^* = \min \left \lbrace F^*(\rho_{AB_1}),~ F^*(\sigma_{B_2C})\right \rbrace \label{F_1}
      \end{align}
      \begin{align}
    &F_2^* = \max_{LOCC} \dfrac{1}{d} \left(1+ \sqrt{\frac{d~(d-1)}{2}}~C(\rho_{\Lambda,~AC}) \right), \label{F_2} \\
    & \leq \dfrac{1}{d}+ \sqrt{\dfrac{d(d-1)}{2~d^2}}~min \lbrace C(\rho_{AB_1}),~C(\sigma_{B_2C}) \rbrace,  \nonumber \\
 \label{genupp}\nonumber
\end{align}
\end{proposition}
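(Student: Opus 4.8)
The plan is to dispatch the three upper bounds on $F^*(\rho_{AB_1},\sigma_{B_2C})$ separately. Two of them are essentially free: the bound $F^*\leq F^*_{SEP}$ is exactly the content of Definition~1 (every three-party LOCC is a SEP operation, so the LOCC optimum cannot exceed the SEP optimum), and both inequalities for $F^*_r$ follow from facts already in place before the statement. Indeed $F_r^*\leq F^*$ holds because $\Lambda_{rLOCC}\subset LOCC$, so maximizing the fully entangled fraction over the smaller class can only decrease it, and $F_r^*\leq F_P^*$ is immediate from the hierarchy $F^*_P\geq F^*_S\geq F^*_{rSEP}\geq F^*_r$ established by the convex-relaxation chain. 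Hence the genuine work lies in the two ``weakest-link'' bounds $F^*\leq F_1^*$ and $F^*\leq F_2^*$.

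For $F_1^*$ I would run a data-processing argument across the two bipartite cuts $A|BC$ and $AB|C$. Group Bob and Charlie into a single lab; then any three-party LOCC is, in particular, a bipartite LOCC across $A|BC$, and producing $\rho_{\Lambda,AC}$ by tracing out Bob is a further local CPTP map on the $BC$ side. Across $A|BC$ the initial state $\rho_{AB_1}\otimes\sigma_{B_2C}$ carries entanglement only through $\rho_{AB_1}$, since $\sigma_{B_2C}$ is a local ancilla on the $BC$ side; hence the optimal teleportation fidelity, equivalently the optimal fully entangled fraction via Eq.~(\ref{4}), that Alice can establish with the combined $BC$ lab equals $f^*(\rho_{AB_1})$. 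Since obtaining Charlie's marginal is only additional local processing on that side, monotonicity gives $F^*(\rho_{AC})\leq F^*(\rho_{AB_1})$. Repeating across the $AB|C$ cut gives $F^*(\rho_{AC})\leq F^*(\sigma_{B_2C})$, and taking the minimum yields $F^*\leq F_1^*$.

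For $F_2^*$ I would first invoke the $d$-dimensional refinement of Eq.~(\ref{5}) (the Zhao-type bound of Ref.~\cite{Zhao_2010}) applied to the Alice--Charlie output, $F(\rho_{\Lambda,AC})\leq \tfrac{1}{d}\bigl(1+\sqrt{d(d-1)/2}\,C(\rho_{\Lambda,AC})\bigr)$, and then maximize over all LOCC $\Lambda$ and all maximally entangled $\ket{\Phi}$; the left side becomes $F^*(\rho_{AC})$ while the right side is exactly the definition of $F_2^*$, establishing $F^*\leq F_2^*$. The subsequent displayed inequality then reduces to the weakest-link bound for concurrence, $C(\rho_{\Lambda,AC})\leq\min\{C(\rho_{AB_1}),C(\sigma_{B_2C})\}$ for every LOCC $\Lambda$, proved exactly as above: concurrence is an entanglement monotone that cannot grow under the three-party LOCC nor under discarding Bob, and the initial entanglement across $A|BC$ equals that of $\rho_{AB_1}$ because $\sigma_{B_2C}$ is a local ancilla, giving $C(\rho_{\Lambda,AC})\leq C(\rho_{AB_1})$, with the $AB|C$ cut giving $\leq C(\sigma_{B_2C})$. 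Substituting and using the identity $\tfrac{1}{d}\sqrt{d(d-1)/2}=\sqrt{d(d-1)/(2d^2)}$ produces the stated bound.

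The main obstacle I anticipate is the concurrence weakest-link step, because of the dimension mismatch: the target $\rho_{\Lambda,AC}$ is a $d\times d$ state whose concurrence is defined on that cut, whereas the monotonicity is naturally phrased across $A|BC$ (and $AB|C$), where one side is $d^2$-dimensional. Making the comparison rigorous requires an entanglement measure that is simultaneously monotone under LOCC and under the partial trace of Bob, that controls the concurrence of the $d\times d$ marginal $\rho_{\Lambda,AC}$, and that equals $C(\rho_{AB_1})$ on the initial $A|BC$ cut once the local ancilla $\sigma_{B_2C}$ is stripped off. For the two-qubit states that are the focus here this can be arranged through the negativity, which already appears in Eq.~(\ref{5}) and upper-bounds concurrence; the delicate point to verify is that the marginal concurrence of $\rho_{\Lambda,AC}$ is genuinely governed by the cross-cut quantity rather than merely loosely bounded by it in higher dimension.
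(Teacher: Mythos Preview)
Your proposal is correct and follows essentially the same route as the paper: the $F_1^*$ bound via the two bipartite cuts $A|BC$ and $AB|C$ with data processing, the $F_2^*$ bound via the Zhao-type inequality of Ref.~\cite{Zhao_2010} combined with a weakest-link bound on concurrence, the $F^*_{SEP}$ bound from $LOCC\subset SEP$, and the two $F^*_r$ bounds from $\Lambda_{rLOCC}\subset LOCC$ and from the convex-relaxation hierarchy. You are in fact more explicit than the paper on several steps, and the dimension-mismatch caveat you raise for the concurrence argument is a genuine subtlety that the paper simply asserts as a ``trivial upper bound'' without further justification.
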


\begin{proof}
    The fact that $F^*(\rho_{AB_1},\sigma_{B_2C}) \leq F_1^*$ can be proposed directly from LOCC argument. If we only consider the bipartition of Alice vs. Bob-Charlie (Bob and Charlie treated as a single lab), then one can argue that the optimal fully entangled fraction in Alice vs. Bob-Charlie bipartition cannot surpass the value $F^*(\rho_{AB_1})$ that is the optimal value achievable by any two-party LOCC using $\rho_{AB_1}$. The same argument follows for Charlie vs. Alice-Bob bipartition. Therefore, it is also true that $F^*(\rho_{AB_1},\sigma_{B_2C})$ cannot surpass $F_1^*$. 
    \vskip 0.2cm 
    The fact that $F^*(\rho_{AB_1},\sigma_{B_2C}) \leq F_2^*$ holds because $F_2^*$ is a LOCC monotone and its functional form has been derived in Ref. \cite{Zhao_2010}. It has been shown that $F_2^*$ is a function of the concurrence in $d \otimes d$, and for a $d\otimes d$ bipartite state $\rho$, the lower bound on concurrence $C(\rho)$ is given in Ref.~ \cite{PhysRevLett.95.040504}
    Note that $F_2^*$ explicitly depends on the maximum achievable concurrence over all three-party LOCC. However, it is hard to find a closed form expression for $F_2^*$ as the maximum concurrence under LOCC has no known tighter upper bound for $d>2$, but we can propose a trivial upper bound as  
    \[
     \max_{LOCC} ~C(\rho_{\Lambda,AC}) \leq min \lbrace C(\rho_{AB_1}),~C(\sigma_{B_2C}) \rbrace
    \]

    The fact that $F^*(\rho_{AB_1},\sigma_{B_2C}) \leq F^*_{SEP}(\rho_{AB_1}, \sigma_{B_2C})$ holds because $F^*_{SEP}(\rho_{AB_1}, \sigma_{B_2C})$ is the optimal fully entangled fraction under any three-party SEP ( given in Eq.$~$(\ref{SDP})), which is by theory a superset of LOCC. This proves that $F^*F^*_{SEP}(\rho_{AB_1}, \sigma_{B_2C})$ is upper bounded by the minimum of $F_1^*$, $F_2^*$ and $F^*_{SEP}(\rho_{AB_1}, \sigma_{B_2C})$. 
\vskip 0.2cm
    Now, $F^*_{r}(\rho_{AB_1}, \sigma_{B_2C})$ is the optimal fully entangled fraction under a subclass of LOCC $(\Lambda_{rLOCC})$, and therefore we cannot claim that $F^*_{r}(\rho_{AB_1}, \sigma_{B_2C})$ cannot be increased by any LOCC other than $\Lambda_{rLOCC}$ {\it i.e.} $F^*_{r}(\rho_{AB_1}, \sigma_{B_2C})$ is not a LOCC monotone in general. Thus by definition, the inequality $F^*_{r}(\rho_{AB_1}, \sigma_{B_2C}) \leq F^*(\rho_{AB_1}, \sigma_{B_2C})$ must hold. Also,  $F^*_{r}(\rho_{AB_1}, \sigma_{B_2C})$ cannot surpass the quantity  $F^*_{P}(\rho_{AB_1}, \sigma_{B_2C})$, which is the optimal feasible solution of the convex optimization $(iii)$ given in Eq.$~$(\ref{SDP}). This is because $\Lambda_{rLOCC}$ is a subset of the variable set (the restricted SEP $\Lambda_{rSEP}$) of such convex optimization problem.  Therefore, $F^*_{r}(\rho_{AB_1}, \sigma_{B_2C})$ cannot surpass the minimum of $F^*(\rho_{AB_1}, \sigma_{B_2C})$ and $F^*_{P}(\rho_{AB_1}, \sigma_{B_2C})$. This proves the proposition. 

\end{proof}

\subsubsection{Simplified upper bound for $d=2$}
\begin{proposition} 
    If both $\rho_{AB_1}$ and $\sigma_{B_2C}$ are two-qubit states, then under all three-party LOCC, the optimal {\it fully entangled fraction} between Alice-Charlie {\it i.e.,} $F^*(\rho_{AB_1},~\sigma_{B_2C})$ is upper bounded as,  
    \begin{align}
           F^*(\rho_{AB_1},~\sigma_{B_2C})  \leq \min \lbrace F_1^*,~F_2^*,~F^*_{SEP}(\rho_{AB_1},~\sigma_{B_2C})\rbrace, \nonumber \\
        \text{where} \quad  F_{2}^{*} = \dfrac{1}{2}\left( 1+C(\rho_{AB_1})~C(\sigma_{B_2C})\right), \label{loccup2}   
    \end{align}
\end{proposition}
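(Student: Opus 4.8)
The plan is to obtain the statement as a direct specialisation of Proposition~\ref{p1} to two-qubit links. Since both $\rho_{AB_1}$ and $\sigma_{B_2C}$ are two-qubit states, setting $d=2$ in the three bounds of Proposition~\ref{p1} reproduces $F_1^*$ and $F^*_{SEP}(\rho_{AB_1},\sigma_{B_2C})$ unchanged, so the only term requiring genuine work is $F_2^*$. Evaluating its prefactor at $d=2$ gives $\sqrt{d(d-1)/2}=1$ and $1/d=1/2$, whence
\[
F_2^*=\max_{\Lambda\in LOCC}\tfrac{1}{2}\bigl(1+C(\rho_{\Lambda,AC})\bigr)=\tfrac{1}{2}\Bigl(1+\max_{\Lambda\in LOCC}C(\rho_{\Lambda,AC})\Bigr),
\]
the last step being valid because the expression is monotonically increasing in the concurrence. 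Everything therefore reduces to the single claim that the maximal Alice--Charlie concurrence obtainable by three-party LOCC from $\rho_{AB_1}\otimes\sigma_{B_2C}$ equals the product $C(\rho_{AB_1})\,C(\sigma_{B_2C})$.

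First I would establish achievability, i.e. that some strategy attains $C(\rho_{\Lambda,AC})=C(\rho_{AB_1})C(\sigma_{B_2C})$. The natural candidate is entanglement swapping: Bob performs a Bell-basis measurement on $B_1B_2$, broadcasts the outcome, and Alice/Charlie apply the corresponding Pauli corrections to align the branches. For pure links a direct computation shows that the outcome-averaged concurrence of the swapped Alice--Charlie state equals exactly $C(\rho_{AB_1})C(\sigma_{B_2C})$, so that this value is reached once the success/failure post-selection already built into the $p_{succ}$ construction of Section~\ref{III} is accounted for; the mixed-state case then follows by applying the same protocol to a concurrence-achieving pure-state decomposition of each link together with local filtering.

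The harder and decisive step is the matching upper bound $\max_{\Lambda\in LOCC}C(\rho_{\Lambda,AC})\le C(\rho_{AB_1})\,C(\sigma_{B_2C})$, since this is precisely what sharpens the crude estimate $\min\{C(\rho_{AB_1}),C(\sigma_{B_2C})\}$ used for general $d$ in Proposition~\ref{p1} into the product valid for qubits. My plan here is to reduce to pure inputs --- using that concurrence is an entanglement monotone, non-increasing on average across either the $A|B_1B_2C$ or the $AB_1B_2|C$ cut --- and then to prove a multiplicativity lemma for qubit entanglement swapping, namely that no measurement-and-correction strategy on Bob's side can produce an Alice--Charlie state of concurrence exceeding the product of the two link concurrences. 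I expect this multiplicativity statement to be the main obstacle, as it must control all of LOCC rather than just Bell measurements and must accommodate the convex-roof definition of mixed-state concurrence; this is where the spin-flip (Wootters) structure of two-qubit concurrence, or an appeal to the concurrence bounds of Ref.~\cite{Zhao_2010}, is needed. Once both inequalities are in hand, substituting $\max_{\Lambda\in LOCC}C(\rho_{\Lambda,AC})=C(\rho_{AB_1})C(\sigma_{B_2C})$ into the displayed expression for $F_2^*$ yields $F_2^*=\tfrac12\bigl(1+C(\rho_{AB_1})C(\sigma_{B_2C})\bigr)$, completing the specialisation.
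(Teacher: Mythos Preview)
Your reduction via Proposition~\ref{p1} is correct, but you misread what must be shown. The proposition asserts only the upper bound $F^*\le\tfrac12\bigl(1+C(\rho_{AB_1})C(\sigma_{B_2C})\bigr)$; the symbol $F_2^*$ in its statement is a name for this right-hand side, not a claim that it coincides with the $F_2^*$ of Proposition~\ref{p1}. Hence only the inequality $\max_{\Lambda}C(\rho_{\Lambda,AC})\le C(\rho_{AB_1})\,C(\sigma_{B_2C})$ is required. Your achievability paragraph is therefore superfluous, and its mixed-state step (``apply the protocol to a concurrence-achieving decomposition together with local filtering'') is not justified: the Wootters decomposition is not operationally accessible to the parties, so there is no obvious way to condition the swapping protocol on which pure component is held. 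Dropping that paragraph leaves a valid plan.

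The paper's argument differs mainly in packaging: rather than passing through Proposition~\ref{p1}, it bounds $F^*(\rho_{AC})\le\tfrac12\bigl(1+C(\rho_{AC})\bigr)$ directly via the Wootters equal-concurrence decomposition of the two-qubit state $\rho_{AC}$, and then invokes the product bound $C(\rho_{AC})\le C(\rho_{AB_1})\,C(\sigma_{B_2C})$ as a known result from Ref.~\cite{PhysRevLett.93.260501}. Your intention to derive this multiplicativity from scratch would work if carried out, but the paper simply treats it as established.
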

\begin{proof}
    The quantity $F_2^*$ in Eq.$~$(\ref{F_2}) can be modified to a more simple expression if both $\rho_{AB_1}$ and $\sigma_{B_2C}$ are two-qubit states. Any two-qubit state $\rho$ can have an unique convex decomposition (not necessarily the spectral decomposition) \cite{wootters1998entanglement} as 
\begin{align}
    \rho = \sum_{i=0}^3 r_i ~\ket{\phi_i}\bra{\phi_i}, \quad \quad r_i \geq 0, \quad \sum_i r_i =1, 
\end{align}
where the concurrence of $\rho$ satisfies the property, $C(\rho)=C(\ket{\phi_i})$ for all $i$. So let $\rho_{AC}$ be a two-qubit state prepared between Alice and Charlie after a deterministic LOCC. Let $\lbrace q_{i},~\ket{\phi_{i}}\rbrace$ be the optimal decomposition of $\rho_{AC}$. So without any loss of generality, we can write the optimal fully entangled fraction of the two-qubit state $\rho_{AC}$ as
\begin{align}
    F^*(\rho_{AC})&\leq \sum_{i=0}^3 q_{i}~F^*(\ket{\phi_{i}}) \nonumber \\
    &= \sum_{i=0}^3 q_{i}~\dfrac{1+C(\ket{\phi_{i}})}{2} \nonumber \\
    &= \dfrac{1+C(\rho_{AC})}{2}\leq \dfrac{1+C(\rho_{AB_1})~C(\sigma_{B_2C})}{2} =F_2^*\nonumber 
\end{align}
where we applied the condition that the optimal fully entangled fraction of any two-qubit pure state is upper bounded by a linear function of concurrence of the state, and the upper bound on the concurrence $C(\rho_{AC})$ under LOCC can be written as $C(\rho_{AB_1})~C(\sigma_{B_2C})$ from Ref. \cite{PhysRevLett.93.260501}. 

\end{proof}

\subsection{Some feasible solutions of the convex optimization problem given in Eq.$~$(\ref{SDPgen}) which are LOCC achievable}

 Up to this point, we have proposed a general convex optimization problems given in Eq.$~$(\ref{SDPgen}) that theoretically provides upper bounds on the LOCC achievable optimal fully entangled fraction $F^*(\rho_{AB_1}, \sigma_{B_2C})$. Then we have proposed a relatively simple convex optimization problem in Eq.$~$(\ref{SDP}) considering a restricted class of SEP operations. We will now show some explicit examples of preshared two-qubit states, $\rho_{AB_1}$ and $\sigma_{B_2C}$ for which we can find feasible solutions of the convex optimization problem $(iii)$ in Eq.$~$(\ref{SDP}) that are LOCC achievable. Hence, by theory such solutions are also feasible solutions for the general convex optimization in Eq.$~$(\ref{SDPgen}). Also we will some instances, when such feasible solutions saturate the upper bound achievable through any three-party LOCC.
\vskip 0.2cm 
\paragraph{{\bf Case 1:}} Suppose $\rho_{AB_1} = \ket{\Psi}\bra{\Psi}$, where $\ket{\Psi}= \sqrt{\alpha_0}~\ket{00} + \sqrt{ \alpha_1}~ \ket{11}$ is a two-qubit pure state, where $\alpha_{0,1} \in (0,1)$ and $\sigma_{B_2C}= \ket{\Phi}\bra{\Phi}$, where $\ket{\Phi}= \sqrt{\beta_0} \ket{00}+ \sqrt{\beta_1}\ket{11}$, where $\beta_{0,1} \in (0,1)$. 

Without any loss of generality, one can write the following identity, 
\begin{align}
    &\ket{\Psi}_{AB_1} \ket{\Phi}_{B_2C}=  \nonumber \\
    &\dfrac{1}{\sqrt{2}} (\sqrt{\alpha_0 ~\beta_0} \ket{00}+ \sqrt{\alpha_1 ~\beta_1}\ket{11})_{AC} \otimes \ket{\Phi_0}_{B_1B_2}  \nonumber \\
    & + \dfrac{1}{\sqrt{2}} (\sqrt{\alpha_0 ~\beta_0} \ket{00}- \sqrt{\alpha_1 ~\beta_1}\ket{11})_{AC} \otimes \ket{\Phi_1}_{B_1B_2} \nonumber \\
    & + \dfrac{1}{\sqrt{2}} (\sqrt{\alpha_0 ~\beta_1} \ket{01}+ \sqrt{\alpha_1 ~\beta_0}\ket{10})_{AC} \otimes \ket{\Phi_2}_{B_1B_2} \nonumber \\
    & +\dfrac{1}{\sqrt{2}} (\sqrt{\alpha_0 ~\beta_1} \ket{01}- \sqrt{\alpha_1 ~\beta_0}\ket{10})_{AC} \otimes \ket{\Phi_3}_{B_1B_2},
\end{align}
where each $\ket{\Phi_i}$ represents a Bell state defined as 
\begin{align}
    &\ket{\Phi_{0,1}} = \dfrac{1}{\sqrt{2}} (\ket{00} \pm \ket{11}), \nonumber  \\
    & \ket{\Phi_{2,3}} = \dfrac{1}{\sqrt{2}} (\ket{01} \pm \ket{10}). \label{bellb} 
\end{align}
Now we choose a separable operator, 
\begin{align}
    \mathcal{M} = \sum_{i=0}^3 \ket{\Phi_i}_{AC} \bra{\Phi_i} \otimes \ket{\Phi_i}_{B_1B_2} \bra{\Phi_i}. 
\end{align}
After this it is easy to check that the following expectation value gives 
\begin{align}
    \tr[\mathcal{M}~(\rho_{AB_1}\otimes \sigma_{B_2C})] &= \dfrac{1+ 4 ~\sqrt{\alpha_0~\alpha_1~\beta_0~\beta_1}}{2} \nonumber \\
    & = \dfrac{1+ C(\ket{\Psi})~C(\ket{\Phi})}{2}, \label{seppure}
\end{align}
where $C(\ket{\Psi}) = 2 \sqrt{\alpha_0~ \alpha_1}$ is the concurrence of $\ket{\Psi}$ and  $C(\ket{\Phi}) = 2 \sqrt{\beta_0~ \beta_1}$ is the concurrence of $\ket{\Phi}$.
\vskip 0.3cm
It is also easy to check that the operator $\mathcal{M}$ actually represents the separable operator $\sum_{i=1}^K X^{AC}_i(m_i^A) \otimes N_i$ in the primal optimization problem in Eq.$~$(\ref{primalsep}). Here every $X^{AC}_i(m_i^A) = \ket{\Phi_i}\bra{\Phi_i}$ and for that, the following constraint 
\[
\sum_{i=0}^3 M_i^A \otimes N_i^B= \mathbb{I}_8
\]
is satisfied because of the fact that every $M_i = 2~Tr_C[X^{AC}_i(m_i^A)]= \mathbb{I}_2$. Hence, the choice of $\mathcal{M}$ represents a feasible point of the optimization, and moreover, the value $ \tr[\mathcal{M}~(\rho_{AB_1}\otimes \sigma_{B_2C})]$ matches with the LOCC upper bound defined in Eq.$~$(\ref{loccup2}). So at least we can say the following, 
\begin{align}
    F^*(\ket{\Psi}_{AB_1}, \ket{\Psi}_{B_2C}) =  \tr[\mathcal{M}~(\rho_{AB_1}\otimes \sigma_{B_2C})] 
\end{align}
is true. 
\vskip 0.3cm 
\paragraph*{Remark:} Note that the operator $\mathcal{M}$ represents the Smolin state $\rho_S$ \cite{PhysRevLett.87.277902} after suitable normalization, where $\mathcal{M}=4 ~\rho_S$ and such a four-qubit state represents an unlockable bound entangled state if $B_1B_2$ becomes spatially separated.  
\vskip 0.3cm
\paragraph{{\bf Case 2: }} Suppose $\rho_{AB_1}$ is a rank two noisy state of the form, 
\begin{align}
&\rho_{AB_1}= \left( 1- \dfrac{p}{2}\right)~\ket{\chi}\bra{\chi} + \dfrac{p}{2} ~\ket{01}\bra{01}, \nonumber \\
where \nonumber \\
& \ket{\chi}= \sqrt{\dfrac{1}{2-p}}~(\ket{00} + \sqrt{1-p}~\ket{11}),
\end{align}
where $\dfrac{\sqrt{5}-1}{2}<p<1$. Whereas, the state $\sigma_{B_2C}$ is a pure two-qubit state of the form, $\ket{\Phi}_{B_2C}= \sqrt{\beta}~\ket{00}+ \sqrt{1-\beta}~ \ket{11}$ such that $\dfrac{1}{2}\leq \beta <1$. 

Previously we defined the complete set of Bell basis as $\lbrace \ket{\Phi_i} \rbrace$. Now we define four post-measurement states between Alice and Charlie as 
\begin{align}
    P_i~\rho_{AC}^{ii} = Tr_{B_1B_2} \left[ (\mathbb{I}_{AC} \otimes \ket{\Phi_i}_{B_1B_2} \bra{\Phi_i})~\rho_{AB_1}\otimes \sigma_{B_2C}\right]
\end{align}
for $i=0,1,2,3$, where \[P_i = \tr \left[ (\mathbb{I}_{AC} \otimes \ket{\Phi_i}_{B_1B_2} \bra{\Phi_i})~\rho_{AB_1}\otimes \sigma_{B_2C}\right] = \frac{1}{4}\] is the normalization factor. We can further represent them as   

\begin{align}
    & \rho_{AC}^{00}= \left( 1-p+ p~\beta\right)~\ket{\chi_+}\bra{\chi_+} + p~(1-\beta) ~\ket{01}\bra{01} \nonumber \\
    & \rho_{AC}^{11}= \left( 1-p+ p~\beta\right)~\ket{\chi_-}\bra{\chi_-} + p~(1-\beta) ~\ket{01}\bra{01} \nonumber \\
    &\rho_{AC}^{22}= \left( 1-p~\beta\right)~\ket{\chi'_+}\bra{\chi'_+} + p~\beta ~\ket{00}\bra{00} \nonumber \\
     &\rho_{AC}^{33}= \left( 1-p~\beta\right)~\ket{\chi'_-}\bra{\chi'_-} + p~\beta ~\ket{00}\bra{00},
\end{align}
where 
\begin{align}
    & \ket{\chi_{\pm}}= \dfrac{1}{\sqrt{1-p+p~\beta}}(\sqrt{\beta} \ket{00} \pm \sqrt{(1-\beta)(1-p)}\ket{11}) \nonumber \\
    & \ket{\chi'_{\pm}}= \dfrac{1}{\sqrt{1-p~\beta}}(\sqrt{1-\beta}~ \ket{01} \pm \sqrt{\beta~(1-p)}~\ket{10}) 
\end{align}
 
After this, we take $K=4$ and try to find a feasible solution of the SEP optimization problem \hyperref[SDPp]{$(iv)$}. Keeping this in mind, we take a separable operator of the form, 
\begin{align}
    \mathcal{M} = \sum_{i=0}^3 X_{i}^{AC}(m_i^A) \otimes N_i^{B}, \label{sepopcase2}
\end{align}
where $N_{0}^{B}= \ket{\Phi_0}\bra{\Phi_0}$, $N_{1}^{B}= \ket{\Phi_1}\bra{\Phi_1}$, $N_{2}^{B}= \ket{\Phi_2}\bra{\Phi_2}$ and $N_{3}^{B}= \ket{\Phi_3}\bra{\Phi_3}$ and 
\begin{align}
     X_{i}^{AC}(m_i^A)= ((m_i^A)^{\dagger} \otimes \mathbb{I}_2) ~\ket{\Phi_0} \bra{\Phi_0}~(m^A_i \otimes \mathbb{I}_2), 
\end{align}
where
\begin{align}
    & m^A_0 = \left( \begin{array}{cc}
       1  & 0 \\
        0 & \dfrac{\sqrt{\beta~(1-p)}}{p~\sqrt{1-\beta}}
    \end{array} 
    \right),  \nonumber \\
    &m^A_1 = \sigma_Z~m^A_0,  \nonumber \\
    & m^A_2= \sigma_X\left( \begin{array}{cc}
       1  & 0 \\
        0 & \dfrac{\sqrt{(1-\beta)~(1-p)}}{p~\sqrt{\beta}}
    \end{array} 
    \right),  \nonumber \\
    &m^A_3 = \sigma_Z~m^A_2.
\end{align}
So clearly we now have a trace non-increasing separable operator $\mathcal{M}$ which satisfies a strict feasibility condition of the problem \hyperref[SDPp]{$(iv)$}, since, we have the strict inequality condition  
\[
2~\sum_{i=0}^3 Tr_C[X_{i}^{AC}(m_i^A)] \otimes N_i^{B} \otimes \mathbb{I}_C < \mathbb{I}_{AB_1B_2C}
\]
holds for any $p\in \left( \frac{\sqrt{5}-1}{2},~1\right]$ and moreover, the success probability for implementing the operation is 
\[
p_{succ} = \tr[(2~\sum_{i=0}^3 Tr_C[X_{i}^{AC}(m_i^A)] \otimes N_i^{B} \otimes \mathbb{I}_C)~\rho_{AB_1} \otimes \sigma_{B_2C}]
\]

One important point to notice here is that after choosing the operators $\lbrace m^A_i \rbrace$, one can easily find that the following set of un-normalized density matrices, 
\begin{align}
    &\eta^{0}_{AC}= (\tilde{m}^A_0 \otimes \mathbb{I}_2 ) ~\rho_{AC}^{00} ~((\tilde{m}^A_0)^{\dagger} \otimes \mathbb{I}_2 ) \nonumber \\
     &\eta^{1}_{AC}= (\tilde{m}^A_1 \otimes \mathbb{I}_2 ) ~\rho_{AC}^{11} ~((\tilde{m}^A_1)^{\dagger} \otimes \mathbb{I}_2 ) \nonumber \\
     &\eta^{2}_{AC}= (\tilde{m}^A_2 \otimes \mathbb{I}_2 ) ~\rho_{AC}^{22} ~((\tilde{m}^A_2)^{\dagger} \otimes \mathbb{I}_2 ) \nonumber \\
      &\eta^{3}_{AC}= (\tilde{m}^A_3 \otimes \mathbb{I}_2 ) ~\rho_{AC}^{33} ~((\tilde{m}^A_3)^{\dagger} \otimes \mathbb{I}_2 ) \nonumber
\end{align}
are separable, where we define  
\[
\tilde{m}^A_i = \sqrt{\mathbb{I}_2 - (m^A_i)^{\dagger}~m^A_i}
\]
as the other four local operators. So as discussed earlier, we can replace each of this separable states with a product state so that the average fully entangled fraction does not go below $\frac{1}{2}$. So after this step, we can write the objective function of problem \hyperref[SDPp]{$(iv)$} as 
\begin{align}
    &F_{SEP}^*(\rho_{AB_1},\sigma_{B_2C})= \dfrac{1-p_{succ}}{2} \nonumber \\
    &+\sum_{i=0}^3 \tr\left[(X_{i}^{AC}(m_i^A) \otimes N_i^{B})~\rho_{AB_1} \otimes \sigma_{B_2C} \right] \nonumber \\
     =& \dfrac{1+p}{4~p} = \min \lbrace F^*(\rho_{AB_1}), F^*(\sigma_{B_2C})\rbrace, \nonumber \\ \label{sephorpure}
\end{align}
where we obtain $F_{SEP}^*(\rho_{AB_1},\sigma_{B_2C})= \min \lbrace F^*(\rho_{AB_1}), F^*(\sigma_{B_2C})\rbrace $ as long as we consider the following range, 
\[
\dfrac{\sqrt{5}-1}{2}<p<1, \quad \dfrac{1}{2}\leq \beta < \dfrac{p^2}{1-p+p^2}.
\]

Here, we can also prove that this value of $F_{SEP}^*(\rho_{AB_1},\sigma_{B_2C})$ can be achieved through an LOCC protocol, where Bob initially performs a complete Bell basis measurement. Depending on his measurement outcome $i=0,1,2,3$, Alice performs the local filtering operation $m^A_i$ with an average success probability $p_{succ}$, otherwise Alice and Charlie replace the existing state by the product state $\ket{00}_{AC}$. Furthermore, this is not the only protocol that can give this value; rather, there exists a class of non-Bell measurements performed by Bob which yield this value also This we will discuss in the next section.

\vskip 0.4cm
\paragraph{{\bf Case 3:}} 
 Consider $\rho_{AB_1}$ and $\sigma_{B_2C}$ as mixtures of Bell states i.e., $\rho_{AB_1}=\sum_{i=0}^3 p_i~ \ket{\Phi_i}\bra{\Phi_i}$ and $\sigma_{B_2C}= \sum_{i=0}^3~q_i ~\ket{\Phi_i}\bra{\Phi_i}$ such that $p_i\geq p_{i+1}$ and $q_i\geq q_{i+1}$ for $i=0,1,2,3$. In such a case, one can rewrite each of the two-qubit states as 
\begin{align}
    &\rho_{AB_1} = (\Lambda_p \otimes \mathbb{I}_2)~\ket{\Phi_0} _{AB_1}\bra{\Phi_0}, \nonumber\\
    & \sigma_{B_2C} = (\mathbb{I}_2 \otimes \Lambda_q)~\ket{\Phi_0}_{B_2C} \bra{\Phi_0}, \nonumber
\end{align}

where $\Lambda_p$ and $\Lambda_q$ are two Pauli qubit channels with Kraus operators, $\lbrace K_i = \sqrt{p_i}~\sigma_i\rbrace$ and $\lbrace L_j = \sqrt{q_j}~\sigma_j \rbrace$. In such a case, one can find a feasible point of the SEP optimization problem if we consider a separable operator of the form,
\[
\mathcal{M}= \sum_{i=0}^3 \ket{\Phi_i}_{AC}\bra{\Phi_i} \otimes \ket{\Phi_i}_{B_1B_2}\bra{\Phi_i},
\]
where $\lbrace \ket{\Phi_i}\rbrace$ is the complete set of Bell basis, then one can give a lower bound on $F^*_{SEP}(\rho_{AB_1}, \sigma_{B_1C})$ as 
\begin{align}
    F^*_{SEP}(\rho_{AB_1}, \sigma_{B_1C}) \geq \max \lbrace \dfrac{1}{2}, ~\sum_{i=0}^3p_iq_i\},
\end{align}
 where such lower bound is a feasible solution of the convex optimization problem in Eq.$~$(\ref{SDPgen}), and such value can be achieved through a Bob assisted LOCC. One can even prove that this value cannot be surpassed under any Bob-assisted LOCC, {\it i.e.}, where Bob initiates the LOCC and it is saturated when Bob performs a Bell measurement. The details of the proof calculation is given in {\bf Appendix} \ref{belldiag}.

\subsection{Comparison of different types of finite round LOCC for different choices of noisy states} \label{upb}

\begin{sce}
    Let us consider a scenario, where Alice-Bob ($\mathcal{AB}$) share a two-qubit state $\rho_{AB_1}$ and Bob-Charlie ($\mathcal{BC}$) share another two-qubit state $\sigma_{B_2C}$ and they perform an LOCC finite number of rounds belonging to $\Lambda_{B \leftrightarrow A \rightarrow C}$ type of operations.  
\end{sce}

Under {\bf Scenario 1} we consider different types of LOCC protocols which belongs to the class of protocols $\Lambda_{B \leftrightarrow A \rightarrow C}$. Generally it is extremely hard to estimate $F^*(\rho_{AB_1},\sigma_{B_2C})$ if an arbitrary $\rho_{AB_1}$ and $\sigma_{B_2C}$ is given except a trivial case, where at least one of the two-qubit state is separable. In that case, we know that $F^*(\rho_{AB_1},\sigma_{B_2C})$ cannot exceed the classical upper bound.

\vskip 0.2cm
So it is better if we consider some class of protocols $\Lambda_{B \leftrightarrow A \rightarrow C}$ and compare their efficiencies using specific class of noisy initial states which are entangled. Here by efficiency of a LOCC protocol means, we mean how close we can reach the value $F^*(\rho_{AB_1},\sigma_{B_2C})$. So let us classify three types of $\Lambda_{B \leftrightarrow A \rightarrow C}$ protocols as,   

\vskip 0.2cm 
\paragraph*{\textbf{Protocol 1}: Consider a class of $\Lambda_{B \leftrightarrow A \rightarrow C}$ protocols, where Bob first performs a projective measurement in the complete Bell basis and broadcasts the outcome to Alice and Charlie and then Bob is discarded. After hearing from Bob, Alice and Charlie performs a one-way LOCC from Alice to Charlie. \\}
\vskip 0.2cm

\vskip 0.2cm
\paragraph*{\textbf{Protocol 2}: Consider a class of $\Lambda_{B \leftrightarrow A \rightarrow C}$ protocols, where Bob first performs a projective measurement in a complete basis $\lbrace \ket{\eta_i}\rbrace_{i=0}^3$ of only non-maximally entangled states which are Bell like and broadcasts the outcome to Alice and Charlie. After that Bob's system is discarded and a one-way LOCC is performed from Alice to Charlie. Here Bob's measurement basis is restricted as} 
\begin{align}
    &\ket{\eta_0} = \alpha_0 \ket{00}+ \alpha_1 \ket{11}, \quad \ket{\eta_1} = \alpha_1 \ket{00}- \alpha_0 \ket{11}, \nonumber \\
    & \ket{\eta_2} = \beta_0 \ket{01}+ \beta_1 \ket{10}, \quad \ket{\eta_1} = \beta_1 \ket{01}- \beta_0 \ket{10}, \label{protocol2}
\end{align}
where $\alpha_0^2 > \alpha_1^2 >0$ and $\beta_0^2 > \beta_1^2 >0$ and $\alpha_0^2 + \alpha_1^2 = \beta_0^2 + \beta_1^2 =1$. 
\vskip 0.4cm 
\paragraph*{\textbf{Protocol 3}: Consider a class of LOCC protocol, where either of Alice or Bob first performs a local filtering operation and if the filter passes, then three of them continues \textbf{Protocol 1}; otherwise, if the filter fails, then Alice and Charlie replace their existing state with a pure product state.}

\begin{definition}
    For a given choice of preshared state $\rho_{AB_1} \otimes \sigma_{B_2C}$, we define the maximum fully entangled fraction achievable through a LOCC protocol $\mathcal{P}$ as $F^*_{\mathcal{P}}(\rho_{AB_1},\sigma_{B_2C})$,  which by definition satisfies the condition $F^*_{\mathcal{P}}(\rho_{AB_1},\sigma_{B_2C})\leq F^*(\rho_{AB_1},\sigma_{B_2C})$. 
\end{definition}

\subsubsection{Instances where {\bf Protocol 1} outperforms {\bf Protocol 2}}
$\bullet$  {\it Example 1:} {\it Consider $\rho_{AB_1}$ is arbitrary but an entangled two-qubit state and $\sigma_{B_2C}$ is a maximally entangled state $\ket{\Phi}_{B_2C}$ or vice versa.} 

\begin{proof}
    This proof is very simple. Since Bob-Charlie share a maximally entangled state, it acts like a noiseless teleportation channel. Hence, Bob and Charlie can simply apply the standard quantum teleportation protocol, where Bob performs measurement in the Bell basis as defined in Eq.$~$(\ref{bellb}) and Charlie performs a unitary correction. With this protocol, Alice and Charlie now share the state $\rho_{AB_1}$ that was initially shared between Alice and Bob. After this, Alice and Charlie perform an optimal one-way LOCC and obtain a fully entangled fraction 
    \begin{align*}
    F^*_{\mathcal{P}_1}(\rho_{AB_1},\sigma_{B_2C})=&F^*(\rho_{AB_1},\sigma_{B_2C}) \nonumber\\
    =&\min \lbrace  F^*(\rho_{AB_1}),~F^*(\sigma_{B_2C})\rbrace = F^*(\rho_{AB_1})
    \end{align*}
\end{proof}
\vskip 0.3cm 
$\bullet$ {\it Example 2:} {\it Consider $\rho_{AB_1}$ as a pure entangled state $\ket{\Psi}_{AB_1}= \sqrt{\alpha} \ket{00}+ \sqrt{1-\alpha} \ket{11}$, where $\alpha \in (0,1)$ and $\sigma_{B_2C}$ as another pure entangled state $\ket{\Phi}_{B_2C}= \sqrt{\beta} \ket{00}+ \sqrt{1-\beta} \ket{11}$, where $\beta \in (0,1)$. }
\begin{proof}
    This proof is also very easy to show and it is already mentioned in Eq.$~$(\ref{seppure}) of the previous section. The best fully entangled state that can be achieved here under LOCC cannot surpass the amount $\frac{1+ C(\ket{\Psi})~C(\ket{\Phi})}{2}$. 

    The optimal LOCC protocol can be realized via looking at the separable operator $\mathcal{M}$ in Eq.$~$(\ref{seppure}). In the protocol, Bob performs projective measurement in the Bell basis, and either Alice or Charlie performs unitary correction depending on Bob's measurement outcome, which is essentially \textbf{Protocol1}. 
\end{proof}

The numerical upper bound for any two given states of the form in \textit{Example 2} can be computed and plotted. The results are shown in Fig. (\ref{figpvp}). We can see that the numerical upper bound exactly matches $F^*_2$ and the \textbf{Protocol 1} exactly achieves these values. We have plotted it for a variety of pure states between Alice-Bob and Bob-Charlie and still haven't found an example that shows that \textbf{Protocol 1} is not optimal. For \textbf{Protocol 2}, we have considered the Schmidt coefficients of the non-maximally entangled basis chosen by Bob for the measurement $\alpha_0=\beta_0=\frac{2}{\sqrt{5}}$ in Eq. \eqref{protocol2}.
\begin{figure}[!h]
    \centering
    \includegraphics[height=150px, width =246px]{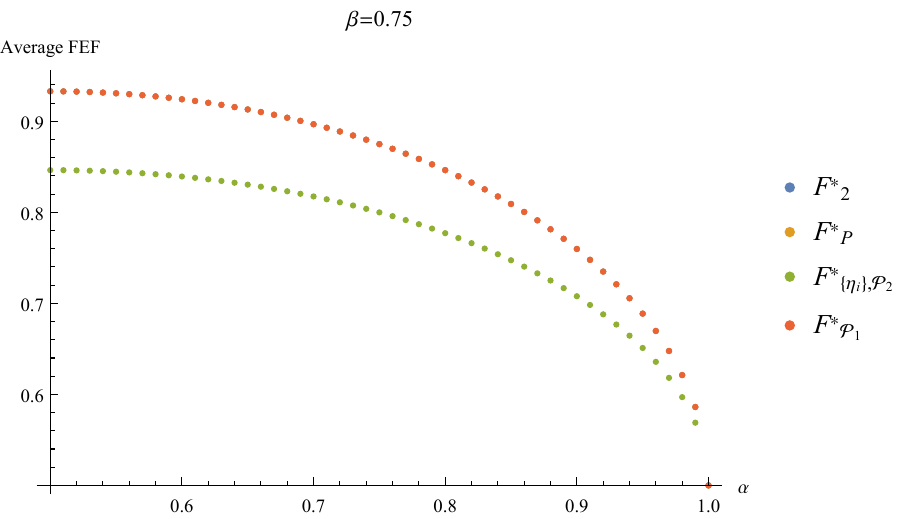}
     \caption{Here we have taken the state between Bob-Charlie as a fixed pure non-maximally entangled state with $\beta=0.75$. Here blue curve represents the upper bound $F^*_2=\dfrac{1}{2}\left( 1+C(\rho_{AB_1})~C(\sigma_{B_2C})\right)$, $F^*_P$ is the numerical upper bound from optimization problem represented by an orange curve, $F^*_{\mathscr{P}_1}$ is the fidleity achieved through Bell Measurement by Bob (\textbf{Protocol 1}) denoted by red curve and $F^*_{\{\eta_i\},{\mathscr{P}_2}}$ is the fidelity achieved through some Bell like measurement(\textbf{Protocol 2}) with the green curve. From the figure, it is clear that the orange curve, the red curve, and the blue curve coincide with each other.}\label{figpvp} 
\end{figure} 
\vskip 0.3cm 
$\bullet$ {\it Example 3:} {\it Consider both $\rho_{AB_1}$ and $\sigma_{B_2C}$ as convex mixtures of Bell states such that $\rho_{AB_1}= \sum_{i=0}^3 p_i ~\ket{\Phi_i}\bra{\Phi_i}$ and $\sigma_{B_2C}= \sum_{i=0}^3 q_i ~\ket{\Phi_i}\bra{\Phi_i}$, where $p_i\geq p_{i+1}$ and $q_i\geq q_{i+1}$ for $i=0,1,2,3$.} 
\begin{proof}
    This proof is not that straightforward. Here, it is hard to compute the exact value of $F^*(\rho_{AB_1}, \sigma_{B_2C})$. However, if we only make a comparison between {\bf Protocol 1} and {\bf Protocol 2}, then it is analytically possible to claim that {\bf Protocol 1} always outperforms {\bf Protocol 2} for such a choice of states. The detailed proof of this claim is given in {bf Appendix} \ref{belldiag}.
\end{proof}
Furthermore, the optimal \textit{fully entangled fraction} achievable by \textbf{Protocol 1} saturates the numerical upper bound. We have checked it for a plethora of Bell diagonal states of all possible ranks between Alice-Bob and Bob-Charlie and still haven't found an example otherwise. Thus, we can numerically conjecture the following
 \begin{numconjecture}
     For Bell diagonal states of the form $\rho_{AB_1}=\sum_{i=0}^3 p_i~ \ket{\Phi_i}\bra{\Phi_i}$ and $\sigma_{B_2C}= \sum_{i=0}^3~q_i ~\ket{\Phi_i}\bra{\Phi_i}$ such that $p_i\geq p_{i+1}$ and $q_i\geq q_{i+1}$ for $i=0,1,2,3$, the protocol initiated by Bob by performing complete Bell measurement on both the qubits in his lab, followed by one-way LOCC between Alice and Charlie ({\bf Protocol 1}), is the optimal protocol in our restricted setting and the expression of the optimal teleportation fidelity is
     \begin{align}
         F^*(\rho_{AB_1},\sigma_{B_2C})=\lbrace \dfrac{1}{2}, ~\sum_{i=0}^3p_iq_i\}
     \end{align}
 \end{numconjecture}
 For one of such choices of states, we plot the optimal \textit{fully entangled fraction} for Bell measurement (\textbf{Protocol 1}) by Bob vs non-Bell measurement (\textbf{Protocol 2}) by Bob with $\alpha_0=\beta_0=\frac{\sqrt{3}}{2}$ in Eq. \eqref{protocol2} in Fig. (\ref{figbdvbd}).
    \begin{figure}[!h]
    \centering
    \includegraphics[height=150px, width =246px]{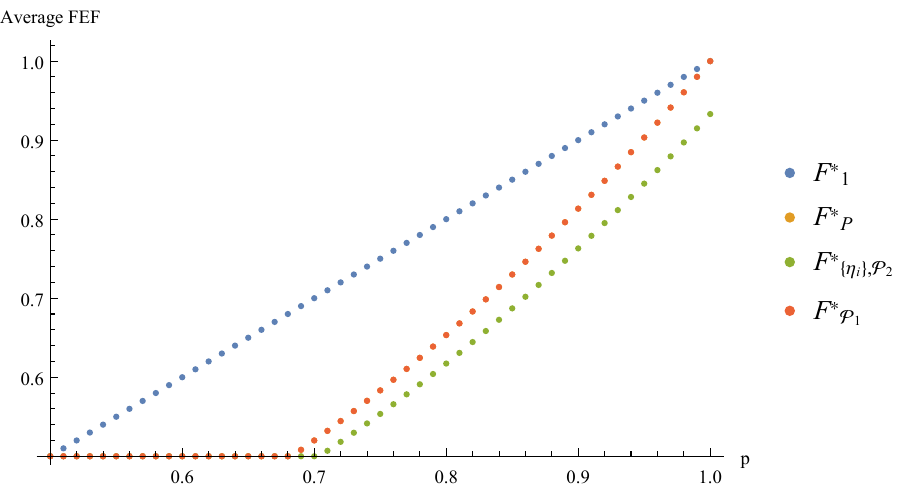}
     \caption{Here we have taken $p_i=q_i$ $\forall~i$ with $p_1=p$ and $p_2=p_3=p_4=\frac{1-p}{3}$. Here blue curve represents the upper bound $F^*_1=\min\{F^*_{\rho_{AB_1}}, F^*_{\rho_{B_2 C}}\}$, $F^*_P$ is the bound from optimization problem represented by an orange curve, $F^*_{\mathscr{P}_1}$ is the fidleity achieved through Bell Measurement by Bob (\textbf{Protocol 1}) denoted by the red curve and $F^*_{\{\eta_i\},{\mathscr{P}_2}}$ is the fidelity achieved through some Bell like measurement(\textbf{Protocol 2}) with the green curve. From the figure, it is clear that the orange curve and the red curve coincide with each other.} \label{figbdvbd} 
\end{figure} 

We observe that for this scenario involving Bell diagonal states, the numerical upper bound $F^*_P$ is strictly less than the bound $F^*_1$. Thus, the optimal \textit{fully entangled fraction} of the final state between Alice and  Charlie can never be equal to the optimal \textit{fully entangled fraction} of either of the initial Bell diagonal states shared.

\subsubsection{An instance where {\bf Protocol 2} also becomes the optimal protocol and performs same as {\bf Protocol 1}}
 $\bullet$ {\it Example 1:} { \it Consider a scenario, where Alice-Bob share a two-qubit state $\rho_{AB_1}= (\Lambda_{ADC} \otimes \mathbb{I}_2) \ket{\Psi}\bra{\Psi}$, where $\Lambda_{ADC}$ represents amplitude-damping qubit channel (ADC) acting on Alice's side and $\ket{\Psi}= \sqrt{\alpha_0} \ket{00}+ \sqrt{\alpha_1} \ket{11}$ such that $\alpha_0 \geq \alpha_1 >0$. Whereas, Bob and Charlie share a two-qubit pure entangled state $\ket{\Phi}= \sqrt{\beta_0} \ket{00} + \sqrt{\beta_1}\ket{11}$ such that $\beta_0 \geq \beta_1 >0$. }
\begin{proof}
    The proof follows from the observation in Eq.~\ref {sephorpure}) of the previous section. One can show for a class of LOCC protocols, where Bob initiates the protocol via performing projective measurement on Bell-like non-maximally entangled basis, the obtainable fully entangled fraction between Alice and Charlie saturates the LOCC achievable upper bound $\min \lbrace F^*(\rho_{AB_1}~,~F^*(\sigma_{B_2C}))\rbrace$. Therefore, the class of protocols that are optimal for this instance has the feature of {\bf Protocol 2} also. However, a detailed analysis is given in the {\bf Appendix} \ref{horodeckipureB}.
\end{proof}
Just like previous examples, for some given initial states shared between Alice-Bob and Bob-Charlie, we again plot the numerical upper bound from the optimization problem in Fig. (\ref{fighvp}). We see that in a certain region, both \textbf{Protocol 1} and \textbf{Protocol 2} coincide with each other as well as the numerical upper bound and the analytical bound $F^*_1$ also. We also see that there is a region where even \textbf{Protocol 1} strictly lies below the numerical upper bound, which itself is strictly less than the bound $F^*_1$. Thus, we get a hint that there are some instances where even Bell measurement by Bob is not optimal for the optimal teleportation fidelity distribution in a three-party scenario. 
    \begin{figure}[!h]
    \centering
    \includegraphics[height=150px, width =246px]{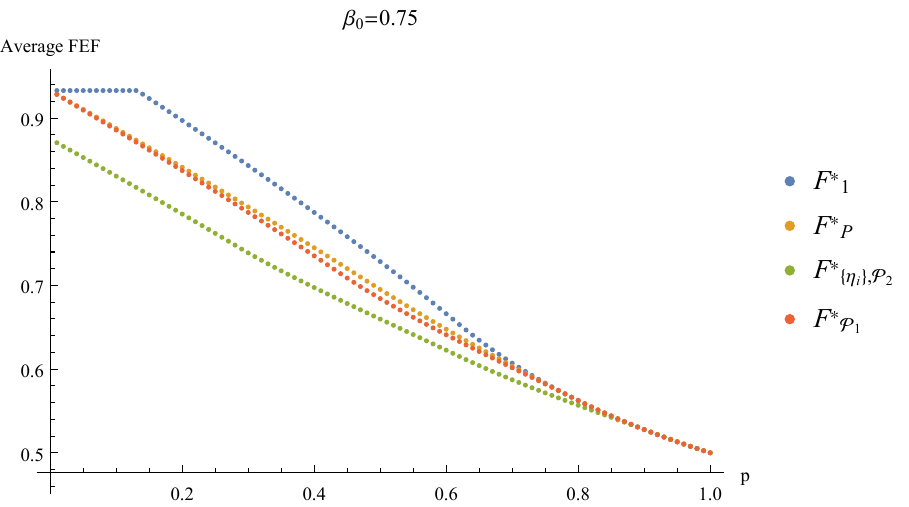}
     \caption{Here we have taken the state between Bob-Charlie as a fixed pure non-maximally entangled state with $\beta_0=0.75$ and ADC acts on a maximally entangled state shared between Alice-Bob, i.e., $\alpha_0=0.75$. Here blue curve represents the upper bound $F^*_1=\min\{F^*_{\rho_{AB_1}}, F^*_{\rho_{B_2 C}}\}$, $F^*_P$ is the numerical upper bound from optimization problem represented by an orange curve, $F^*_{\mathscr{P}_1}$ is the fidleity achieved through Bell Measurement by Bob (\textbf{Protocol 1}) denoted by red curve and $F^*_{\{\eta_i\},{\mathscr{P}_2}}$ is the fidelity achieved through some Bell like measurement(\textbf{Protocol 2}) with the green curve. From the figure, it is clear that the orange curve, the red curve, the blue curve, and the green curve all coincide with each other for a certain range of state parameter p of the state between Alice-Bob.}\label{fighvp} 
\end{figure}

\subsubsection{An instance where \textbf{Protocol 2} outperforms \textbf{Protocol 1}} \label{sub4}
\begin{proposition} \label{p5}
    Let Alice-Bob share a two-qubit Choi state $\rho_{AB_1}=(\Lambda_{ADC}\otimes \mathbb{I})\ket{\Phi_0}\bra{\Phi_0}$, where $\Lambda_{ADC}$ is an amplitude-damping channel with channel parameter $p\in (0,1)$.  Whereas, Bob-Charlie share a two-qubit Werner state  $\sigma_{B_2C}=\lambda\ketbra{\Phi_0}{\Phi_0}+(1-\lambda)\frac{\mathbb{I}_4}{4}$. Now, depending on the parameter $\lambda \in (\frac{1}{3},1)$, there exists a range $p_c(\lambda)<p<1$, for which \textbf{Protocol 2} is more efficient than \textbf{Protocol 1}, and also than all such LOCC, where Bob initiates with PVM in any complete set of maximally entangled basis. 
\end{proposition}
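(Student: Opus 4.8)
The plan is to reduce the proposition to an explicit comparison of two scalar functions of $(p,\lambda)$ and then to dispatch the universality over all maximally entangled bases by a symmetry argument. First I would fix coordinates by writing the amplitude-damping Choi state as
\begin{align}
\rho_{AB_1}=\tfrac12\big[\ketbra{00}{00}+\sqrt{1-p}\,(\ketbra{00}{11}+\ketbra{11}{00})+(1-p)\ketbra{11}{11}+p\,\ketbra{01}{01}\big],\nonumber
\end{align}
a rank-two ``Horodecki-like'' state that interpolates between $\ketbra{\Phi_0}{\Phi_0}$ at $p=0$ and a product state as $p\to1$, and recording that $\sigma_{B_2C}$ is Bell diagonal with weights $q_0=\tfrac{3\lambda+1}{4}$ and $q_1=q_2=q_3=\tfrac{1-\lambda}{4}$, so that $C(\sigma_{B_2C})=\tfrac{3\lambda-1}{2}$ for $\lambda>\tfrac13$.

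Next I would analyse \textbf{Protocol 1}. Since Bob's Bell measurement on $B_1B_2$ acts as entanglement swapping, each outcome $\ket{\Phi_i}$ produces an unnormalised conditional state
\begin{align}
P_i\,\rho^{(i)}_{AC}=\Tr_{B_1B_2}\!\big[(\mathbb{I}_{AC}\otimes\ketbra{\Phi_i}{\Phi_i})\,(\rho_{AB_1}\otimes\sigma_{B_2C})\big],\nonumber
\end{align}
obtained by contracting the $B_1$ index of $\rho_{AB_1}$ with the $B_2$ index of $\sigma_{B_2C}$. I expect each $\rho^{(i)}_{AC}$ to inherit a rank-two structure, so its optimal one-way LOCC fully entangled fraction follows either from the saturation condition in Eq.~(\ref{6}) or, if that fails, from the filtering SDP of Eq.~(\ref{7}); summing $\sum_i P_i\,F^*(\rho^{(i)}_{AC})$ then gives a closed form $F^*_{\mathcal{P}_1}(p,\lambda)$.

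The same computation, carried out verbatim with the non-maximally entangled basis $\{\ket{\eta_i}\}$ of Eq.~(\ref{protocol2}) in place of $\{\ket{\Phi_i}\}$, yields conditional states and outcome weights that now depend on the Schmidt parameters $\alpha_0,\beta_0$; maximising the resulting fully entangled fraction over $\alpha_0,\beta_0$ produces $F^*_{\mathcal{P}_2}(p,\lambda)$. The crossing of the two explicit curves $F^*_{\mathcal{P}_2}(p,\lambda)=F^*_{\mathcal{P}_1}(p,\lambda)$ defines $p_c(\lambda)$, and one verifies that $p_c(\lambda)\in(0,1)$ for every $\lambda\in(\tfrac13,1)$ with $F^*_{\mathcal{P}_2}>F^*_{\mathcal{P}_1}$ above the threshold. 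Intuitively the near-separable regime $p\to1$ makes $\rho_{AB_1}$ strongly asymmetric, and a basis matched to that asymmetry should outperform the symmetric Bell choice.

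The hard part will be the final clause: excluding \emph{every} maximally entangled basis for Bob, not merely the Bell basis. The plan is a symmetry reduction. Writing a general maximally entangled basis as $\{(U_{B_1}\otimes V_{B_2})\ket{\Phi_i}\}$, I would first use the isotropic invariance $(A\otimes\mathbb{I})\sigma_{B_2C}(A^{\dagger}\otimes\mathbb{I})=(\mathbb{I}\otimes A^{T})\sigma_{B_2C}(\mathbb{I}\otimes\bar A)$ to transfer $V_{B_2}$ onto a unitary on Charlie's side, which he undoes for free, so $V$ cannot change the achievable fidelity. Next, the transpose identity $(\mathbb{I}\otimes M)\ket{\Phi_0}=(M^{T}\otimes\mathbb{I})\ket{\Phi_0}$ converts the residual $U_{B_1}$ into a pre-unitary on the channel, so the protocol is equivalent to \textbf{Protocol 1} applied to the Choi state of $\Lambda_{ADC}\circ\mathrm{Ad}_{\bar U}$ with $\mathrm{Ad}_{\bar U}(X)=\bar U X\bar U^{\dagger}$. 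Because $\Lambda_{ADC}$ is phase covariant, the diagonal part of $U$ commutes through and becomes an Alice-side correction that she undoes, leaving a compact one-parameter optimisation over the mixing angle of $U$. The genuine obstacle is then to show this optimisation is maximised at $U=\mathbb{I}$ --- i.e.\ that \textbf{Protocol 1} is already the best maximally entangled strategy, so that beating it beats them all --- which I would attempt through a monotonicity argument in the mixing angle, or, should that prove unwieldy, by bounding the entire maximally entangled family above by the convex-relaxation value $F^*_P$ of problem $(iii)$ and checking that $F^*_{\mathcal{P}_2}$ exceeds $F^*_P$ on the stated range $p_c(\lambda)<p<1$.
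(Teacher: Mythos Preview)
Your approach to the explicit fidelity computations is sound and parallels the paper, but the symmetry reduction for the ``all maximally entangled bases'' clause takes an unnecessarily difficult route and leaves a gap you yourself flag as a ``genuine obstacle''.

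The key point you are missing is that \emph{all} of the unitary freedom in a general maximally entangled basis can be pushed to the $B_2$ side, and from there to Charlie, with no residual optimisation. Concretely, any such basis vector can be written as $\ket{\Psi_i^*}=(\mathbb{I}_{B_1}\otimes W_i^*)\ket{\Phi_0}$ for suitable unitaries $W_i$ (your own transpose identity $(U\otimes V)\ket{\Phi_0}=(\mathbb{I}\otimes V U^{T})\ket{\Phi_0}$ already gives this). Bob's outcome $i$ then produces the conditional state $(\Lambda_{ADC}\otimes\Lambda_{dep})\ketbra{\Psi_i}{\Psi_i}$ with uniform probability $1/4$. Now the Werner/depolarising side is not merely isotropic in the sense you use; the channel $\Lambda_{dep}$ is fully $U(2)$-covariant, so the unitary $W_i$ commutes through it and lands on Charlie, who corrects it. Every outcome and every choice of maximally entangled basis therefore yields the \emph{same} state $\rho_{AC,\Phi_0}=(\Lambda_{ADC}\otimes\Lambda_{dep})\ketbra{\Phi_0}{\Phi_0}$ up to local unitaries, and hence the same $F^*$. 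There is nothing left to optimise: the Bell basis is automatically optimal among all maximally entangled bases, and your ``genuine obstacle'' evaporates. By contrast, your plan to send $U_{B_1}$ to Alice's side forces you to commute it past $\Lambda_{ADC}$, which has only phase covariance, creating the one-parameter problem you would then have to solve.

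A second issue: you assert that $p_c(\lambda)\in(0,1)$ for \emph{every} $\lambda\in(\tfrac13,1)$. The paper does not claim this, and its own numerics (the $\lambda=\tfrac23$ case) show Protocol~1 saturating the upper bound over the whole range of $p$, so Protocol~2 cannot strictly beat it there. The proposition should be read as an existence statement in $\lambda$; the paper establishes it by exhibiting $\lambda=\tfrac25$, where $\rho_{AC,\Phi_0}$ is separable for $p>\tfrac13$ (so $F^*_{\mathcal{P}_1}=\tfrac12$) while a single fixed non-maximally entangled basis already gives $F^*_{\mathcal{P}_2}>\tfrac12$. No optimisation over the Schmidt parameters $\alpha_0,\beta_0$ is needed.
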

\begin{proof}
  Let us assume that $\rho_{AB_1}$ is a rank two state with a single state parameter $p\in (0,1)$ and can be expressed as
\begin{align}
        &\rho_{AB_1}=\left( 1-\frac{p}{2}\right )\ketbra{\psi_p}{\psi_p}+\frac{p}{2}\ketbra{01}{01}, \nonumber \\
         \text{where,}~~&\ket{\psi_p}=\frac{1}{\sqrt{2-p}}\left( \ket{00}+\sqrt{1-p} \ket{11} \right), \label{48}
    \end{align}
here note that $\rho_{AB_1}$ is a Choi state, where amplitude-damping channel $\Lambda_{ADC}$ acts one side of $\ket{\Phi_0}$.

Now let $\sigma_{B_2C}$ be a Werner state {\it i.e.,} $\sigma_{B_2C}=\lambda\ketbra{\Phi_0}{\Phi_0}+(1-\lambda)\frac{\mathbb{I}}{4}$ which is entangled if $\lambda>\frac{1}{3}$. Let us first choose $\sigma_{B_2C}$ to be weakly entangled. This means that we must choose $\lambda$ to be close to $\frac{1}{3}$. So for instance, let us choose $\lambda= \frac{2}{5}$ such that $\sigma_{B_2C}$ is weakly entangled. One can easily estimate that concurrence of $\sigma_{B_2C}$, \textit{i.e.,} $C(\sigma_{B_2C})=\frac{1}{10}$ while $\lambda=\frac{2}{5}$.\\

For the given choice of pre-shared states, finding the optimal LOCC protocol is hard unless one finds an optimal decomposition of $\eta_{AC}$. However, by considering the property of the pre-shared states and solving the SDP given in Eq.$~$(\ref{7}), one can estimate that
\begin{align}
     F^*(\rho_{AB_1})&=\frac{1}{2}\left( 1+ C(\rho_{AB_1})-\frac{p}{2}\right),~~\text{if}~~0<p<\frac{\sqrt{5}-1}{2}, \nonumber \\
    &= \frac{1}{2}\left( 1+\frac{1-p}{2p}\right),~~\text{if} ~~p>\frac{\sqrt{5}-1}{2}, \nonumber \\
     \text{and,} \quad &F^*(\sigma_{B_2C})=\frac{1+C(\sigma_{B_2C})}{2}=\frac{1+3\lambda}{4}.
\end{align}

\vskip 0.2cm 
\paragraph{Computing fully entangled fraction under \textbf{Protocol 1}:}
For the given initial state $\rho_{AB_1}\otimes \sigma_{B_2C}$, the maximum fully entangled fraction under \textbf{Protocol 1} and \textbf{Protocol 2} is defined as $F^*_{\mathcal{P}_1}(\rho_{AB_1},\sigma_{B_2C})$ and $F^*_{\mathcal{P}_2}(\rho_{AB_1},\sigma_{B_2C})$. As \textbf{Protocol 1} demands, let Bob performs PVM in an arbitrary MES basis, say $\lbrace \ket{\Psi_i^*}\rbrace$, where each $\ket{\Psi_i^*}$ can be expressed as  (see {\bf Appendix} \ref{appD} for details)
\begin{align}
    \ket{\Psi^*_i}=(\mathbb{I}\otimes W^*_i) \ket{\Phi_0}, \quad i=0,1,2,3,  
\end{align}
where $\langle \Psi_i | \Psi_j \rangle =\delta_{ij}$ and $W_i^*$ is a unitary from $SU(2)$ for every $i$. The orthogonality condition demands that $\tr(W^T_i W^*_j)=2\delta_{ij}$. 

Thus after Bob performs PVM in $\lbrace \ket{\Psi^*_i}\rbrace$ complete basis, Alice-Charlie end up with a post-measurement state 
\begin{align}
    \rho_{AC,\Psi^*_i}&= (\Lambda_{ADC}\otimes \Lambda_{dep})\ketbra{\Psi_i}{\Psi_i} \nonumber \\
    &= (\mathbb{I}\otimes W_i) (\Lambda_{ADC}\otimes \Lambda_{dep})\ketbra{\Phi_0}{\Phi_0} (\mathbb{I}\otimes W_i^{\dagger}), \label{51}
\end{align}
    which occurs with a uniform probability, \textit{i.e.,} $P_i=\tr[\rho_{AC,\Psi^*_i}]=\frac{1}{4}$. for all $i$. In Eq.$~$(\ref{51}),  $\Lambda_{dep}$ represents depolarzing channel with Kraus operators $\lbrace \sqrt{\frac{1+3\lambda}{4}} \mathbb{I},~\sqrt{\frac{1-\lambda}{4}}\sigma_1, ~\sqrt{\frac{1-\lambda}{4}}\sigma_2,\sqrt{\frac{1-\lambda}{4}}\sigma_3\rbrace$. Eq.$~$(\ref{51}) also implies that $\rho_{AC,\Psi^*_i}$ is connected with an unique state \[\eta(\Lambda_{ADC},\Lambda_{dep},\Phi_0)=(\Lambda_{ADC}\otimes \Lambda_{dep})\ketbra{\Phi_0}{\Phi_0}\] through a local unitary $W_i$. So after performing PVM in any MES, Charlie can transform the state $\rho_{AC, \Psi^*_i}$ to the state $\eta(\Lambda_{ADC}, \Lambda_{dep}, \Phi_0)$ using the reverse unitary operation $W_i^{\dagger}$. 

After preparing the $\rho_{AC,\Psi^*_i}$ for every $i$, the best thing Alice-Charlie can do is to perform optimal post-processing (see Eq.$~$(\ref{9})), and with this, the average fidelity between Alice-Charlie can be expressed as 
\begin{align}
  F^*_{ \mathcal{P}_1}(\rho_{AB_1}, \sigma_{B_2C})  = \sum_i P_i F^*(\rho_{AC,\Psi_i}).\label{47}
\end{align}
 It is shown in {\bf Appendix} \ref{appD} that if we fix $\lambda= \frac{2}{5}$, then within a range $\frac{1}{3}<p<1$, every two-qubit state $\rho_{AC,\Psi_i}$ is separable, hence, 
\begin{align}
    F^*(\rho_{AC,\Psi_i}) =\frac{1}{2} \quad and\quad C(\rho_{AC,\Psi_i})=0  \label{PVM-MES}
\end{align}
for any maximally entangled basis $\lbrace \ket{\Psi^*_i} \rbrace$ measurement. Hence, it is clear that within the mentioned value of $(\lambda,~p)$ we have 
\[
F^*_{ \mathcal{P}_1}(\rho_{AB_1}, \sigma_{B_2C}) =\dfrac{1}{2}.
\]
\newline 

\paragraph{Computing fully entangled fraction under \textbf{Protocol 2}: } Now let us consider that Bob performs PVM in some partially entangled states. Of course, the average entanglement cost of such measurement is strictly less than performing PVM in MES. Let us choose the measurement basis as 
\begin{align}
    & \ket{\eta_0}=\frac{\sqrt{3}}{2}\ket{00}+\frac{1}{2} \ket{11}, \quad \ket{\eta_1}=\frac{1}{2} \ket{00}-\frac{\sqrt{3}}{2} \ket{11}, \nonumber \\
    & \ket{\eta_2}=\frac{\sqrt{3}}{2}\ket{01}+\frac{1}{2} \ket{10}, \quad \ket{\eta_3}=\frac{1}{2}\ket{01}-\frac{\sqrt{3}}{2} \ket{10}.  \nonumber
\end{align}
After performing PVM on the above basis, the prepared state between Alice-Charlie can be written as 
\begin{align}
    \rho_{AC,\eta_i}=(\Lambda_{ADC}\otimes \Lambda_{dep})\ketbra{\eta_i}{\eta_i}, \quad \quad i=0,1,2,3, \nonumber
\end{align}
which occurs with equal probability $\frac{1}{4}$. Thus, after the state preparation, Alice-Charlie can now perform optimal post-processing on every $\rho_{AC,\eta_i}$. Hence, on average, the assisted maximum {\it fully entangled fraction} can be expressed as (see {\bf Appendix} \ref{appD} for details)
\begin{align}
    F^*_{\lbrace \eta_i\rbrace, \mathcal{P}_2}(\rho_{AB_1},\sigma_{B_2C})= \frac{1}{4} \sum_{i=0}^3 F^*(\rho_{AC,\eta_i})  \label{PVM-NMES} 
\end{align}
Now, if one compares the maximum fully entangled fraction obtained via \textbf{Protocol 1} and \textbf{Protocol 2}, one finds that the conditions
\begin{align}
    &F^*_{\mathcal{P}_1}(\rho_{AB_1},\sigma_{B_2C})< F^*_{\lbrace \eta_i\rbrace, \mathcal{P}_2}(\rho_{AB_1},\sigma_{B_2C}), \label{PVM-comparision}
\end{align}
are always satisfied while we choose $\lambda=\frac{2}{5}$ and $\frac{1}{3}<p<1$. So this proclaims that any LOCC protocol that starts with Bob's PVM operation in any MES is not necessarily optimal or even efficient. The comparison is clearly illustrated in Fig. \ref{fighvwLE}.\\

\end{proof}

 \begin{figure}[H]
    \centering
    \includegraphics[height=150px, width =246px]{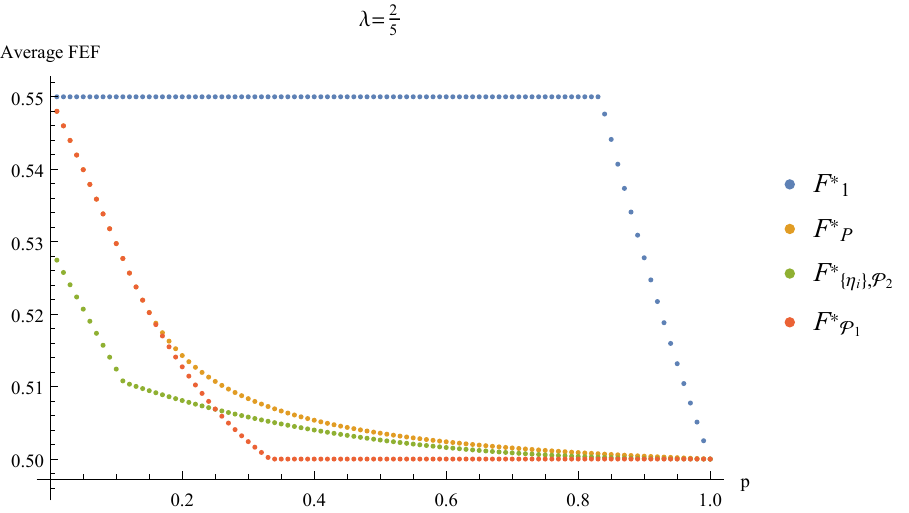}
     \caption{Here we have taken the state between Bob-Charlie as a fixed two-qubit Werner state with $\lambda=\frac{2}{5}$. Here blue curve represents the upper bound $F^*_1=\min\{F^*_{\rho_{AB_1}}, F^*_{\rho_{B_2 C}}\}$, $F^*_P$ is the numerical upper bound from optimization problem represented by an orange curve, $F^*_{\mathscr{P}_1}$ is the fidleity achieved through Bell Measurement by Bob (\textbf{Protocol 1}) denoted by red curve and $F^*_{\{\eta_i\},{\mathscr{P}_2}}$ is the fidelity achieved through some Bell like measurement(\textbf{Protocol 2}) with the green curve. From the figure, it is clear that initially the red and the orange curve coincide, but as the value of the state parameter p increases, the orange curve is strictly above the red and the green curve. For a certain parameter regime, the red curve also falls below the green curve.}\label{fighvwLE} 
\end{figure} 

For this plot, we have taken $\lambda=\frac{2}{5}$. This implies that the two-qubit Werner state shared between Bob-Charlie is weakly entangled. We can also consider a case where this state is not weakly entangled. One of such choices is by keeping $\lambda=\frac{2}{3}$. For such a scenario, we have an interesting observation. In this case, \textbf{Protocol 1} again outperforms the \textbf{Protocol 2} once again and matches with the numerical upper bound for practically all range of the state parameter p. This is illustrated in the Fig. (\ref{fighvwHE}).

\begin{figure}[h!]
    \centering
    \includegraphics[height=150px, width =246px]{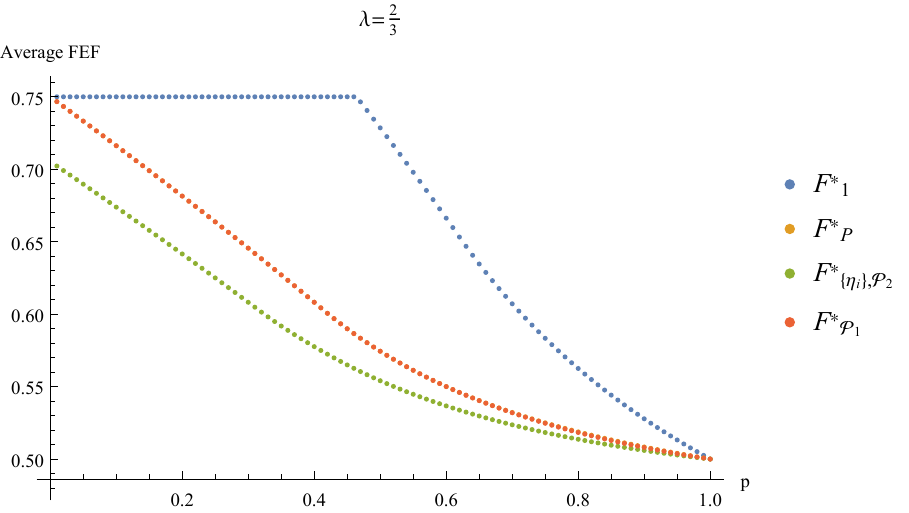}
     \caption{Here we have taken the state between Bob-Charlie as a fixed two-qubit Werner state with $\lambda=\frac{2}{3}$. Here blue curve represents the upper bound $F^*_1=\min\{F^*_{\rho_{AB_1}}, F^*_{\rho_{B_2 C}}\}$, $F^*_P$ is the numerical upper bound from optimization problem represented by an orange curve, $F^*_{\mathscr{P}_1}$ is the fidleity achieved through Bell Measurement by Bob (\textbf{Protocol 1}) denoted by red curve and $F^*_{\{\eta_i\},{\mathscr{P}_2}}$ is the fidelity achieved through some Bell like measurement(\textbf{Protocol 2}) with the green curve. From the figure, it is clear that the red and the orange curves coincide over the whole range of the state parameter p, and lie strictly above the green curve.}\label{fighvwHE} 
\end{figure} 

\subsubsection{An instance where {\bf Protocol 3} outperforms {\bf Protocol 1}}

\begin{proposition}
    Let Alice-Bob share a two-qubit Choi state $\rho_{AB_1}=(\Lambda_{ADC}\otimes \mathbb{I})\ket{\Phi_0}\bra{\Phi_0}$, where $\Lambda_{ADC}$ is an amplitude-damping channel with channel parameter $p\in (0,1)$.  Whereas, Bob-Charlie share a two-qubit Werner state  $\sigma_{B_2C}=\lambda\ketbra{\Phi_0}{\Phi_0}+(1-\lambda)\frac{\mathbb{I}_4}{4}$. Now, depending on $\lambda \in (\frac{1}{3},1)$, there exists a range of $p$ such that $0<p<p'(\lambda)<p_c(\lambda)$, where \textbf{Protocol 3} is efficient than \textbf{Protocol 1}, and also than all such LOCC, where Bob initiates with PVM in any complete set of maximally entangled basis.   \label{p5}
\end{proposition}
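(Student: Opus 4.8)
The plan is first to collapse the entire family of ``Bob starts with a maximally entangled PVM'' strategies into a single number. By the local-unitary equivalence established in Eq.~(\ref{51}), when Bob measures $B_1B_2$ in \emph{any} complete maximally entangled basis $\{\ket{\Psi^*_i}\}$, each conditional Alice--Charlie state equals $(\mathbb{I}\otimes W_i)\,\eta\,(\mathbb{I}\otimes W_i^{\dagger})$ with $\eta=(\Lambda_{ADC}\otimes\Lambda_{dep})\ketbra{\Phi_0}{\Phi_0}$, occurring with probability $\tfrac14$. Since the fully entangled fraction is a maximum of linear functionals (hence convex) and is invariant under local unitaries, and since one-way LOCC already attains the optimal two-qubit value \cite{PhysRevLett.90.097901}, I would argue that the best fidelity over this whole class equals $F^*(\eta)$: aligning every branch by $W_i^{\dagger}$ (a one-way Bob$\to$Charlie correction) and then applying the optimal Alice$\to$Charlie post-processing achieves it, while for any outcome-dependent post-processing $\Lambda_i$ convexity gives $F\big(\sum_i \tfrac14\Lambda_i(\eta)\big)\le\sum_i\tfrac14 F(\Lambda_i(\eta))\le F^*(\eta)$. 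Thus $F^*_{\mathcal{P}_1}(\rho_{AB_1},\sigma_{B_2C})=F^*(\eta)$, which at $p=0$ reduces to $F^*(\sigma_{B_2C})=\tfrac{1+3\lambda}{4}$ and strictly decreases with $p$. This is the single quantity \textbf{Protocol~3} must beat.

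Next I would build an explicit \textbf{Protocol~3} instance. The key observation is that a filter applied by \emph{Alice} commutes with the swap, since it acts on the spectator qubit $A$ while $\Lambda_{dep}$ and the Bell measurement act elsewhere; such a filter only produces $(m^A\otimes\mathbb{I})\,\eta\,((m^A)^{\dagger}\otimes\mathbb{I})$, which is already inside the optimal one-way post-processing of \textbf{Protocol~1} and yields no gain. The genuine resource is therefore a filter by \emph{Bob} on $B_1$ \emph{before} the Bell measurement, which turns his effective first measurement into a non-maximally-entangled one and so places the protocol outside the MES-PVM class. Concretely I would take $m_{B_1}=\mathrm{diag}(\sqrt{1-p},1)$, balancing the Schmidt weights of the pure component $\ket{\psi_p}$ of $\rho_{AB_1}$. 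A direct computation then gives success probability $p_{succ}=1-\tfrac{p}{2}$ and passed state $\hat\rho_{AB_1}=\tfrac{2(1-p)}{2-p}\ketbra{\Phi_0}{\Phi_0}+\tfrac{p}{2-p}\ketbra{01}{01}$, while on failure Alice and Charlie discard to a product state. Swapping $\hat\rho_{AB_1}$ through the Werner channel and applying optimal post-processing yields $F^*_{\mathcal{P}_3}\ge\tfrac{1-p_{succ}}{2}+p_{succ}\,F^*\big((\mathbb{I}\otimes\Lambda_{dep})\hat\rho_{AB_1}\big)$; since any admissible filter furnishes a valid lower bound, this choice suffices to establish the claim.

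Finally I would compare the two curves. Both equal $\tfrac{1+3\lambda}{4}$ at $p=0$, so it is enough to show that $F^*_{\mathcal{P}_3}$ decreases strictly more slowly than $F^*(\eta)=F^*_{\mathcal{P}_1}$ as $p$ increases from $0$. Expanding both to first order in $p$ at fixed $\lambda\in(\tfrac13,1)$, I expect the linear coefficient of the \textbf{Protocol~3} yield to exceed that of $F^*(\eta)$, so that by continuity there is a threshold $p'(\lambda)>0$ with $F^*_{\mathcal{P}_3}>F^*_{\mathcal{P}_1}$ on $(0,p'(\lambda))$. To close the statement I would verify that $p'(\lambda)$ lies below the threshold $p_c(\lambda)$ of the preceding proposition, so that the small-$p$ regime where \textbf{Protocol~3} wins and the large-$p$ regime where \textbf{Protocol~2} wins are disjoint and mutually consistent.

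The main obstacle is the exact evaluation of the fully entangled fractions of the two doubly-noised $X$-states $\eta$ and $(\mathbb{I}\otimes\Lambda_{dep})\hat\rho_{AB_1}$: neither is guaranteed to satisfy the saturation condition~(\ref{6}), so in general each requires solving the Verstraete--Verschelde SDP~(\ref{7}), and the desired inequality $F^*_{\mathcal{P}_3}>F^*_{\mathcal{P}_1}$ must then be reduced to an explicit relation in $(p,\lambda)$ that both pins down $p'(\lambda)$ and certifies $p'(\lambda)<p_c(\lambda)$. Optimizing over Bob's filter, rather than fixing $m_{B_1}$, would enlarge $p'(\lambda)$ but is not needed for the existence claim.
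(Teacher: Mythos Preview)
Your reduction of the whole ``Bob measures in any MES basis'' class to the single value $F^*(\eta)$ with $\eta=(\Lambda_{ADC}\otimes\Lambda_{dep})\ketbra{\Phi_0}{\Phi_0}$ is correct, as is your observation that only a Bob-side filter on $B_1$ (not an Alice-side one) can escape this class. The gap is in the quantitative comparison. If you actually carry out the first-order expansion you propose with your filter $m_{B_1}=\mathrm{diag}(\sqrt{1-p},1)$, the success branch after the swap has fully entangled fraction $\tfrac{1-\lambda}{4}+a\lambda$ with $a=\tfrac{2(1-p)}{2-p}$; for small $p$ this is also the optimal $F^*$, since the state is a small perturbation of the Werner state and condition~(\ref{6}) holds. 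Your Protocol~3 yield is then $\tfrac{1+3\lambda}{4}+\tfrac{1-7\lambda}{8}\,p+O(p^2)$, while the paper's SDP solution for $\eta$ gives slope $-\lambda/2$ at $p=0$. The inequality $\tfrac{1-7\lambda}{8}>-\tfrac{\lambda}{2}$ is equivalent to $\lambda<\tfrac13$, so in the entangled-Werner regime $\lambda>\tfrac13$ your Protocol~3 is strictly \emph{worse} than Protocol~1 to first order in $p$. No small-$p$ window opens with this filter, and since neither post-swap state has filtering gain near $p=0$, further Alice--Charlie post-processing does not rescue the argument.

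The paper proceeds in the opposite regime. It takes Bob's filter to be $A_*=\mathrm{diag}\!\big(1,\sqrt{1-p}/p\big)$, a valid POVM element only for $p>\tfrac{\sqrt{5}-1}{2}$ and precisely the optimal Verstraete--Verschelde filter for $\rho_{AB_1}$ in that range. There the comparison is almost trivial: for weakly entangled Werner states (e.g.\ $\lambda=2/5$) the swapped state $\eta$ is \emph{separable} once $p\ge 1/3$, so $F^*_{\mathcal{P}_1}=\tfrac12$, whereas Bob's pre-filter keeps the success branch entangled after depolarisation with nonzero probability, forcing $F^*_{\mathcal{P}_3}>\tfrac12$. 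The advantage of Protocol~3 is thus established at large $p$, not small $p$; the range literally quoted in the proposition should be read with that in mind. If you wish to repair your approach, replace the perturbative small-$p$ argument by the large-$p$ one: show that $\eta$ becomes separable beyond some $p_0(\lambda)$, then exhibit any Bob filter whose success branch remains entangled after the Werner swap.
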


\begin{proof}
    Similar to our previous example, we assume that Alice-Bob share a rank two density matrix $\rho_{AB_1}=(\Lambda_{ADC} \otimes \mathbb{I})\ket{\Phi_0}\bra{\Phi_0}$, where $\Lambda_{ADC}$ is an amplitude damping channel. Now instead of applying \textbf{Protocol 1} and \textbf{2}, we apply \textbf{Protocol 3}. So for instance, Bob first applies a filtering operation,  
    \begin{align}
        A_{*}= \left(
\begin{array}{cc}
    1 & 0 \\
    0 & \dfrac{\sqrt{1-p}}{p}
\end{array}
        \right) \label{bfilter}
    \end{align}
    on the shared state $\rho_{AB_1}$. If the filter passes, then the transformed state becomes 
    \[
    \tilde{\rho}_{AB_1}= \dfrac{(\mathbb{I} \otimes A_*)~\rho_{AB_1} (\mathbb{I}\otimes A_*^{\dagger})}{\tr\left[ (\mathbb{I} \otimes A_*^{\dagger}A_*)~\rho_{AB_1} \right]},
    \]
where note that $A_*^{\dagger}A_*$ remains a valid POVM {\it iff} $p\in \left( \frac{\sqrt{5}-1}{2},1\right)$. If the filter passes, then Bob performs Bell basis measurement $\lbrace \ket{\Phi_i}\bra{\Phi_i}\rbrace $ on his part of the composite state $\tilde{\rho}_{AB_1} \otimes \sigma_{B_2C}$. Depending on the outcome Alice applies unitary correction, say $U_{A,i}$ and communicates to Charlie, who applies further unitary correction $U_{B,i}$. On the other hand, if the filter fails, then Bob does nothing and discards his system, and Alice-Charlie discard their system and replace it with a pure product state. So it is a both way LOCC between Alice and Bob, and one way from Alice to Charlie and Bob to Charlie.   

In the region $\frac{\sqrt{5}-1}{2}<p<1$, it can be shown that the average fully entangled fraction obtained between Alice and Charlie via \textbf{Protocol 3}, defined as $F^*_{\mathcal{P}_3}(\rho_{AB_1},\sigma_{B_2C})$, is strictly larger than the fully entangled fraction obtained through \textbf{Protocol 1}, \textit{i.e.,} $F^*_{\mathcal{P}_1}(\rho_{AB_1},\sigma_{B_2C})$, or  \textbf{Protocol 2}, \textit{i.e.,}even $F^*_{\lbrace \eta_i \rbrace, \mathcal{P}_2}(\rho_{AB_1},\sigma_{B_2C})$, that is
\begin{align}
&F^*_{\mathcal{P}_3}(\rho_{AB_1},\sigma_{B_2C}) > F^*_{\mathcal{P}_1}(\rho_{AB_1},\sigma_{B_2C}), \nonumber \\
&F^*_{\mathcal{P}_3}(\rho_{AB_1},\sigma_{B_2C})> F^*_{\lbrace \eta_i \rbrace, \mathcal{P}_2}(\rho_{AB_1},\sigma_{B_2C}),
\end{align}
which is clear from Fig.~\ref {fighvwLEmag}. This clearly indicates that the standard entanglement swapping protocol \cite{PhysRevA.78.032321}, RPBES protocol \cite{PhysRevLett.93.260501} are not efficient for entanglement distribution in a general sense.  

\begin{figure}[!h]
    \centering
    \includegraphics[height=150px, width =246px]{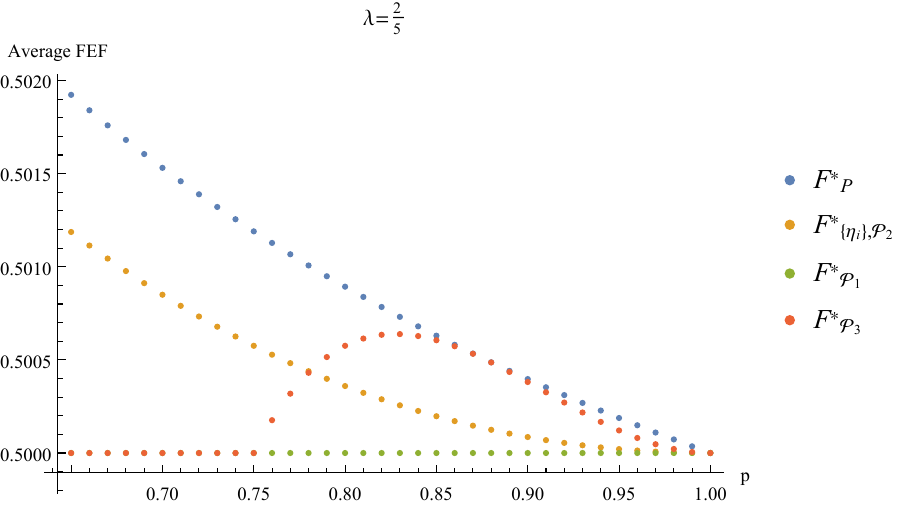}
     \caption{Here we have taken the state between Bob-Charlie as a fixed two-qubit Werner state with $\lambda=\frac{2}{5}$. Here blue curve represents the numerical upper bound from optimization problem, $F^*_{\mathscr{P}_1}$ is the fidleity achieved through Bell Measurement by Bob (\textbf{Protocol 1}) denoted by the green curve, $F^*_{\{\eta_i\},{\mathscr{P}_2}}$ is the fidelity achieved through some Bell like measurement(\textbf{Protocol 2}) with the orange curve, and the red curve here illustrates $F^*_{\mathscr{P}_3}$, which is the fidelity achieved through \textbf{Protocol 3}. From the figure, it is clear that within the range $\frac{\sqrt{5}-1}{2}<p<1$ of the state parameter p, there exists a region where the red curve lies above both the orange and the green curve and approaches the blue curve.}\label{fighvwLEmag} 
\end{figure} 

 Just like the previous example, instead of a weakly entangled two-qubit Werner state between Bob-Charlie ($\lambda=\frac{2}{5}$), we can take a more entangled two-qubit Werner state with $\lambda=\frac{2}{3}$. This gain leads us to another interesting observation illustrated in Fig. (\ref{fighvwHEmag}). \textbf{Protocol 1} again outperforms \textbf{Protocol 2} and \textbf{Protocol 3} and saturates the numerical upper bound practically for the whole allowed range of the state parameter p.

  \begin{figure}[!h]
    \centering
    \includegraphics[height=150px, width =246px]{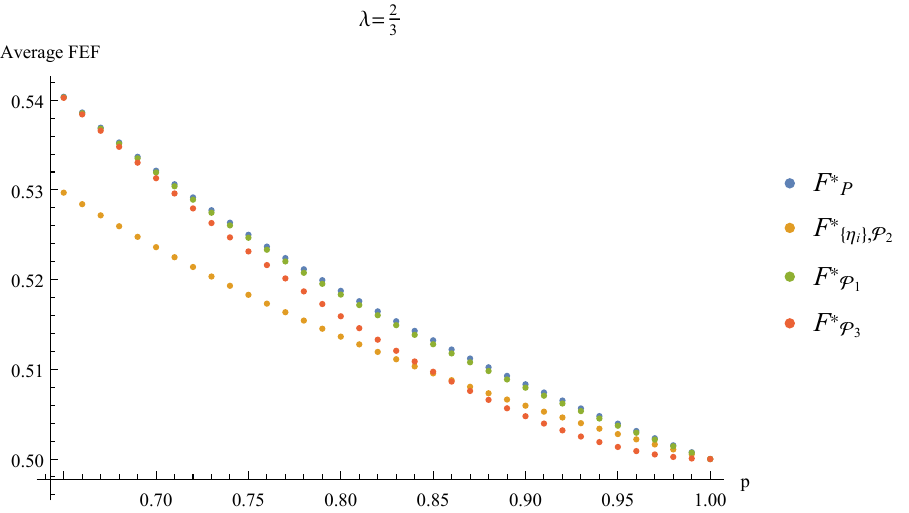}
     \caption{Here we have taken the state between Bob-Charlie as a fixed two-qubit Werner state with $\lambda=\frac{2}{3}$. Here blue curve represents the numerical upper bound from optimization problem, $F^*_{\mathscr{P}_1}$ is the fidleity achieved through Bell Measurement by Bob (\textbf{Protocol 1}) denoted by the green curve, $F^*_{\{\eta_i\},{\mathscr{P}_2}}$ is the fidelity achieved through some Bell like measurement(\textbf{Protocol 2}) with the orange curve, and the red curve here illustrates $F^*_{\mathscr{P}_3}$, which is the fidelity achieved through \textbf{Protocol 3}. From the figure, it is clear that within the range $\frac{\sqrt{5}-1}{2}<p<1$ of the state parameter p, the green curve almost coincides with the blue curve and strictly lies above both the red and the orange curve.}\label{fighvwHEmag} 
\end{figure}  

\end{proof}

\begin{observation}
    The above comparisons between different types of LOCC demonstrate that each of the protocols can be optimal under the subclass of LOCC {\it i.e.} $\Lambda_{rLOCC}$ depending on the preshared two-qubit states. The optimality of such protocol is established when the maximally obtainable fully entangled fraction under that protocol numerically agrees with the upper bound $F^*_P(\rho_{AB_1}, \sigma_{B_2C})$ which is the optimal feasible solution of the convex optimization $(iii)$ in Eq.$~$(\ref{SDP}). Also we found some instances of preshared two-qubit states, for example, when both are pure entangled states, or one is pure another is a convex mixture of a pure entangled state and a product state, for which  $F^*_P(\rho_{AB_1}, \sigma_{B_2C})$ becomes a LOCC monotone. 
\end{observation}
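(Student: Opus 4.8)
The plan is to establish the Observation not as a single computation but as a synthesis of the worked cases, organised around one certification principle. The starting point is the hierarchy of fidelities already derived, namely $F^*_P(\rho_{AB_1},\sigma_{B_2C}) \geq F^*_{rSEP}(\rho_{AB_1},\sigma_{B_2C}) \geq F^*_r(\rho_{AB_1},\sigma_{B_2C}) \geq F^*_{\mathcal{P}}(\rho_{AB_1},\sigma_{B_2C})$ for any protocol $\mathcal{P}$ belonging to $\Lambda_{rLOCC}$. Since $F^*_P$ is the optimal value of the convex relaxation $(iii)$ in Eq.~(\ref{SDP}) and is therefore computable, the first step is to note that whenever a concrete protocol attains $F^*_{\mathcal{P}}=F^*_P$, the sandwich forces $F^*_{\mathcal{P}}=F^*_r$, so that $\mathcal{P}$ is optimal within the restricted class. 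This is precisely the certification criterion asserted in the Observation, and it is what the phrase ``numerically agrees with $F^*_P$'' is meant to encode.

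Second, I would demonstrate that each of \textbf{Protocol 1}, \textbf{Protocol 2} and \textbf{Protocol 3} meets this criterion for at least one state family, by pointing to the explicit feasible points already constructed. For \textbf{Protocol 1} one uses the two-pure-state family of Case~1, where the Smolin-type separable operator $\mathcal{M}=\sum_i \ket{\Phi_i}_{AC}\bra{\Phi_i}\otimes \ket{\Phi_i}_{B_1B_2}\bra{\Phi_i}$ yields the value $\tfrac{1}{2}\bigl(1+C(\ket{\Psi})\,C(\ket{\Phi})\bigr)$ in Eq.~(\ref{seppure}); for \textbf{Protocol 2} one uses the amplitude-damped-plus-pure family of Case~2, where the filtering operators $\{m^A_i\}$ give $F^*_{SEP}=\tfrac{1+p}{4p}$ in Eq.~(\ref{sephorpure}); and for \textbf{Protocol 3} one uses the amplitude-damping Choi state against a Werner state in the low-$p$ regime of the \textbf{Protocol 3} proposition. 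In each case the remaining task is to check that the protocol's achieved value coincides with the $F^*_P$ returned by Eq.~(\ref{SDP}).

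Third, for the monotonicity claim I would upgrade two of these examples from ``optimal within $\Lambda_{rLOCC}$'' to ``equal to $F^*$ and hence a genuine three-party LOCC monotone''. For the pure--pure case this means showing $F^*_P = F^*_2 = \tfrac{1}{2}\bigl(1+C(\rho_{AB_1})\,C(\sigma_{B_2C})\bigr)$ from Eq.~(\ref{loccup2}), so that the relaxation is tight against an independently proven LOCC upper bound; since \textbf{Protocol 1} is an honest LOCC achieving this value, no further LOCC can raise it. For the pure-plus-product-mixture case one shows instead $F^*_P = \min\{F^*(\rho_{AB_1}),F^*(\sigma_{B_2C})\}=F^*_1$ from Eq.~(\ref{F_1}), again achieved by an explicit LOCC, which closes the chain $F^*_r \leq F^* \leq F^*_1$ and certifies that the value cannot be increased by any three-party local post-processing.

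The main obstacle is the passage from numerical to analytic tightness of the convex relaxation. The certification principle only delivers a \emph{computable} upper bound $F^*_P$; to assert optimality or monotonicity rigorously one must exhibit a feasible separable operator $\mathcal{M}$ whose objective value provably matches an analytic LOCC bound, and simultaneously argue that $F^*_P$ cannot exceed that bound. The achievability (primal) side is the comparatively easy part, since the separable operators are written down explicitly; the hard part is the matching upper-bound argument, which is available only in the structured families where $F^*_P$ collapses onto $F^*_1$ or $F^*_2$. Outside these families the coincidence of $F^*_{\mathcal{P}}$ with $F^*_P$ remains a numerical finding, which is exactly why the statement is framed as an Observation rather than a theorem.
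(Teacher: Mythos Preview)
The paper provides no proof for this Observation; it is stated as a summary of the preceding case analyses without a separate argument. Your proposal supplies exactly the synthesis the paper leaves implicit, and the certification principle you isolate---that $F^*_{\mathcal{P}}=F^*_P$ forces $F^*_{\mathcal{P}}=F^*_r$ by the sandwich $F^*_P\geq F^*_r\geq F^*_{\mathcal{P}}$---is the correct logical backbone. Your identification of Case~1 (pure--pure, via Eq.~(\ref{seppure})) and Case~2 (amplitude-damped plus pure, via Eq.~(\ref{sephorpure})) as the instances where $F^*_P$ collapses onto the analytic LOCC bounds $F^*_2$ and $F^*_1$ respectively, and hence becomes a genuine LOCC monotone, matches precisely what the paper establishes in those subsections and in Appendix~\ref{horodeckipureB}.

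One point deserves caution. For \textbf{Protocol~3} you assert optimality in ``the low-$p$ regime of the \textbf{Protocol~3} proposition'', but the paper does not actually demonstrate that \textbf{Protocol~3} attains $F^*_P$ anywhere: Fig.~\ref{fighvwLEmag} shows only that the \textbf{Protocol~3} curve \emph{approaches} the numerical upper bound, and the accompanying text says it ``approaches the blue curve'' rather than coincides with it. So the clause ``each of the protocols can be optimal'' in the Observation is, for \textbf{Protocol~3}, supported only by numerical proximity and not by a saturated feasible point of the type you exhibit for \textbf{Protocols~1} and \textbf{2}. Your final paragraph correctly anticipates this limitation by distinguishing the analytically tight cases from those where coincidence with $F^*_P$ is purely numerical; that caveat should be applied to \textbf{Protocol~3} specifically.
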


\section{Distribution of optimal teleportation channel beyond three parties}\label{IV}

\begin{observation}
    Consider a linear network of $N$ parties (nodes), where every segment shares a bipartite state $\rho_{i,i+1}$, - and all parties can perform LOCC. In such a case the maximum fully entangled fraction that $A_1$ and $A_N$ can achieve is upper bounded by the LOCC bound 
    \[
    F^*_{1N}\leq F^*=\min \left \lbrace F^*(\rho_{12}),~F^*(\rho_{23}),....,F^*(\rho_{N-1N}) \right \rbrace
    \]
\end{observation}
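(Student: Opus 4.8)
The plan is to prove the bound one cut at a time and then take the minimum, generalizing the bipartition argument already used for $F_1^*$ in Proposition~\ref{p1}. Fix any index $k\in\lbrace 1,\dots,N-1\rbrace$ and split the chain into the two contiguous blocks $L=\lbrace A_1,\dots,A_k\rbrace$ and $R=\lbrace A_{k+1},\dots,A_N\rbrace$. I would treat $L$ as a single super-lab $X$ (which contains the end node $A_1$) and $R$ as a single super-lab $Y$ (which contains $A_N$). Because the network is linear, the only shared state that straddles the cut $X\,|\,Y$ is $\rho_{k,k+1}$; every other factor $\rho_{i,i+1}$ lives entirely inside $L$ or entirely inside $R$. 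Hence, viewed across this cut, the global initial state factorizes as $\rho_{k,k+1}\otimes\tau_L\otimes\tau_R$, where $\tau_L$ is supported on $X$ and $\tau_R$ on $Y$.

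Next I would use a coarse-graining observation: grouping parties can only enlarge the class of allowed operations, so any $N$-party LOCC protocol is in particular a two-party LOCC across $X\,|\,Y$. Since the final step of extracting the $A_1A_N$ state (discarding $L\setminus A_1$ and $R\setminus A_N$) is itself a local operation, the channel $\Gamma$ that produces $\sigma_{A_1A_N}$ from the initial state is a genuine two-party $X\,|\,Y$ LOCC whose output is a $d\times d$ bipartite state on $A_1A_N$. Therefore $F(\sigma_{A_1A_N})$ is bounded above by the \emph{optimal} fully entangled fraction achievable across the cut, i.e.\ by $F^*$ of $\rho_{k,k+1}\otimes\tau_L\otimes\tau_R$ taken across $X\,|\,Y$. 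The important point is that, although the bare fraction $F$ is not an LOCC monotone, the optimized quantity $F^*$ (maximized over all LOCC and all target MES) is non-increasing under LOCC by construction, which is exactly what licenses this step.

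It then remains to evaluate $F^*(\rho_{k,k+1}\otimes\tau_L\otimes\tau_R)$ across the grouped cut. Here I would invoke invariance of the optimal fully entangled fraction under appending local ancillas: the inequality $\geq F^*(\rho_{k,k+1})$ is immediate by ignoring the ancillas, while the reverse inequality follows because each super-lab can prepare its own local state $\tau_L$ (resp.\ $\tau_R$) by a free local operation, so any protocol acting on the ancilla-augmented state can be simulated starting from $\rho_{k,k+1}$ alone. This gives $F^*(\rho_{k,k+1}\otimes\tau_L\otimes\tau_R)=F^*(\rho_{k,k+1})$, and hence $F^*_{1N}\leq F^*(\rho_{k,k+1})$. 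Since $k$ was arbitrary, taking the minimum over $k\in\lbrace 1,\dots,N-1\rbrace$ yields the claimed bound.

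I expect the main obstacle to be the step where the end-to-end fraction between $A_1$ and $A_N$ is replaced by a two-party quantity across the grouped cut. One must make sure that a maximally entangled state between the two $d$-dimensional end systems is genuinely captured as the output of a $d\times d$ two-party protocol across $X\,|\,Y$ (this is why the discard of the interior systems is folded into $\Gamma$), and that the passage uses the monotone $F^*$ rather than $F$. The local-ancilla-invariance lemma is conceptually routine but should be stated carefully, since it is what decouples the interior states $\tau_L,\tau_R$ from the single straddling state $\rho_{k,k+1}$.
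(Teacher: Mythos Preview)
Your proposal is correct and follows the same bipartition/coarse-graining idea the paper uses: group everything on one side of a cut into a single super-lab, note that $N$-party LOCC is contained in two-party LOCC across that cut, and conclude that the achievable fully entangled fraction is bounded by $F^*$ of the single straddling state. The paper in fact does not give a separate proof of this observation at all---it treats it as the direct extension of the argument for $F_1^*$ in Proposition~\ref{p1}---so your write-up is a more detailed version of the same reasoning, with the local-ancilla-invariance step made explicit.
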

One more instance where we see the more nuanced and complex traits of teleportation fidelity optimizing LOCC protocols is the following.   Consider a linear network of $N$ nodes, where every segment shares a two-qubit state $\rho_{i,i+1}$ and all parties can perform LOCC. Only the first segment is faulty, \textit{i.e.,} $\rho_{12}$ is noisy (affected by local amplitude damping noise) and rest of {\color{blue}the} segments (free segments) share pure entangled states $\rho_{i,i+1}=\ket{\psi_{i,i+1}}\bra{\psi_{i,i+1}}$ such that $F^*(\rho_{12})< F^*(\rho_{i,i+1})$ holds for every $2\leq i\leq N$. The results of Appendix \ref{horodeckipureB} can be generalized to this scenario also. In \cite{PhysRevLett.134.160803} (Corollary), a similar type of scenario is considered. The noisy state has a similar structure to that of the state considered here in Appendix \ref{horodeckipureB}. If the same pure entangled state is shared in N segments, it is shown that they are not required to be maximally entangled to establish the optimal teleportation fidelity between the sender and the receiver. There is a lower bound on the amount of necessary concurrence. Using the value at which this bound is saturated, let's call it $\mathcal{C}$, a quantity is defined, namely, the resource saved $R_v$ given as
\begin{equation}
    R_c=N(1-\mathcal{C}).
\end{equation}
It quantifies how much less resource is consumed in an $N+1$ segment repeater scenario. As the structure of the states considered here in Appendix \ref{horodeckipureB} is similar to the one considered in Appendix C in the supplementary material of \cite{PhysRevLett.134.160803}, following the exact procedure for the states considered here the saved resource $R_c$ is upper bounded by 
\begin{equation}
   R^{I}_c\leq N-N \left(1-\Bigg(\frac{p^2- (1-p)^2}{p^2+ 
   (1-p)^2}\Bigg)^2\right)^{\frac{1}{2N}}.
\end{equation}
However, this number is achievable only for a particular strategy. First, we have to do RPBES measurement \cite{PhysRevLett.93.260501} in all the segments containing pure non-maximally entangled states. This will result in a three-party (two-segment) scenario where the first segment will have the noisy state and the second segment will have some non-maximally entangled state. Then, for this reduced structure, we perform the Bell measurement. Let us call this \textbf{Strategy I}.

We can also approach the task of establishing optimal fidelity in another way. We can perform Bell measurement at each segment. Let us label this as \textbf{Strategy II}. While the RPBES protocol deterministically prepares the state with optimal entanglement, and each segment Bell measurement will, on average, prepare the state with optimal concurrence. Due to this stark difference, the behavior in the variance of resource consumption will be different in both cases. 

Considering the above-mentioned scenario of $N+1$ segment quantum repeater scenario with Alice (A) being the sender and Bob (B) being the receiver, and there are $\{N_{i}\}^{N}_{i=1}$ nodes between them. The joint state shared by $N+1$ segments is written as:
\begin{align}
    \eta_{AN_1\cdots N_{N}B}=\eta=\Big(\Lambda_{ADC}& \otimes \mathbb{I})\ket{\phi^{+}}_{AN_1}\bra{\phi^{+}}\Big)\nonumber\\
    &\qquad \qquad\otimes \Big(\bigotimes_{i=2}^{N+1}\ket{\psi}\bra{\psi}\Big).
\end{align}
with $\ket{\psi}=\sqrt{\beta_0}\ket{00}+ \sqrt{\beta_1}\ket{11}$. Digressing away from RPBES protocol, if we perform Bell measurement in each segment, we will have $4^N$ possible states. Applying the procedure in Appendix \ref{horodeckipureB} repeatedly, the condition to obtain optimal fidelity comes out to be
\begin{align}
   \sqrt{\beta^{N}_0~\beta^{N}_1(1-p)}\leq p~\beta^{N}_1
\end{align}
As $\mathcal{C}=2\sqrt{\beta_0\beta_1}$, variance in the resource consumption can similarly be calculated as 

\begin{equation}
   R^{II}_c\leq N-N \Bigg(1-\Bigg(\frac{\Big(\frac{p^2}{1-p}\Big)^{\frac{1}{N}}-1}{\Big(\frac{p^2}{1-p}\Big)^{\frac{1}{N}}+1}\Bigg)^2\Bigg)^{\frac{1}{2}}.
\end{equation}
We plot the resource consumption for both strategies in Fig. \ref{Dist}.

\begin{figure}[!h]
    \centering
    \includegraphics[height=170px, width =246px]{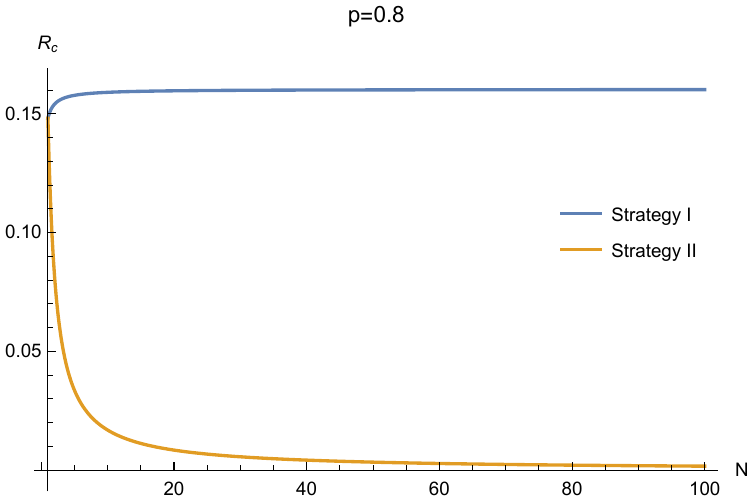}
   \caption{Plot of variance in resource consumption for \textbf{Strategy I}: RPBES measurements repeatedly followed by Bell measurement, and \textbf{Strategy II}: Just Bell measurement repeatedly for a fixed value of the noise parameter $p=0.8$ of the noisy state in the first segment with an increase in the number of nodes.}\label{Dist}
\end{figure}  

We see in the case of \textbf{Strategy I} that variance in resource consumption starts saturating to a fixed value, while for \textbf{Strategy II} it goes to zero. Although resource consumption is possible in both cases, its maintenance over a large number of nodes shows contrasting behaviors. Thus, we see that beyond the three-party scenario, the distribution of optimal fidelity is not that simple. A lot of other facets have to be taken into account as well.

\section{Discussion} \label{dis}

In this work, we consider a scenario involving three spatially separated parties---Alice, Bob, and Charlie---each located in different labs. Alice and Bob share a two-qudit state (d$\times$d state), as do Bob and Charlie. Thus, Alice and Charlie each hold one qudit, whereas Bob possesses two qudits. Each party can perform local quantum operations on their own qudits and then communicate the results classically. After Bob performs his operations and traces out his qudits, a two-qudit state is formed between Alice and Charlie, which corresponds to a tripartite Local Operations and Classical Communication (LOCC) protocol among the three parties. The objective of this work is to maximize the teleportation fidelity of the state between Alice and Charlie, or in other words, to establish the best teleportation channel between them using LOCC operations on the combined system of Alice, Bob, and Charlie.

In general, when the pre-shared states are noisy, things are quite nontrivial.  In this case, it is challenging to determine the optimal LOCC protocol for distributing the best teleportation channel between Alice and Charlie, as fully characterizing LOCC operations mathematically is difficult. Moreover, the question of who should initiate the protocol and how many rounds of classical communication are needed to achieve the desired result adds complexity to the problem. To circumvent these challenges, we instead consider separable operations (SEP), which have well-understood mathematical properties. While this doesn't give us an exact expression for the \textit{optimal teleportation fidelity}, since LOCC operations are a subset of SEP operations, it allows us to derive bounds on it by optimizing \textit{optimal teleportation fidelity} for given states between Alice-Bob and Bob-Charlie over the set of SEP operations. But although the optimization problem turns out to be convex, the variable over which the optimization has to be performed is a $d^8\times d^8$ matrix, which is difficult to tackle. But we observe that if we considered a restricted class of separable operations where Charlie's Krauss operators are unitary while Alice and Bob's Krauss operators are as general as possible, the optimization problem simplifies significantly to an optimization problem involving just a $d^4\times d^4$ matrix as a variable which has to be optimized. It can be solved numerically for a given set of states between Alice-Bob and Bob-Charlie. This provides us with the numerical value of \textit{optimal teleportation fidelity} that can be established between Alice and Charlie, and cannot be surpassed by any LOCC protocol which involves a both-way communication between Alice and Bob but only one-way communication to Charlie from both. Thus, we provide a testing bed for studying the gap between SEP operations and LOCC operations, although in a restricted but still interesting setting.  Besides this, we also gave different independent analytical bounds on the teleportation fidelity. We study the special case of qubits \textit{i.e.,} $d=2$. In this case, the independent bounds analytical on the \textit{optimal teleportation fidelity} have simpler forms. As hinted earlier a natural question arises: Are these bounds achievable through an LOCC protocol? The answer to this question is far from trivial. The protocol to achieve any of these bounds depends on the initial states shared between Alice-Bob and Bob-Charlie, making it difficult to define a general strategy. Therefore, we approach the problem on a case-by-case basis. Fortunately, we identified some examples where these bounds are saturated, or in some specific cases, for some specific class of states, even a different independent bound is achieved. For these cases, we also identified the corresponding LOCC protocols that achieve these bounds. It is again worth mentioning that all this is done in a restricted setting

The problem becomes relatively straightforward if the pre-shared states between Alice-Bob and Bob-Charlie are maximally entangled pure states, each possessing $1$ ebit of entanglement. In this case, the optimal protocol is the standard Entanglement Swapping (ESWAP) protocol. Bob teleports one of his qubits using the maximally entangled Bell state shared with either Alice or Charlie, and Alice and Charlie perform local unitary corrections to establish a noiseless teleportation channel. Even when the states shared between Alice-Bob and Bob-Charlie are not maximally entangled, but still pure, the optimal protocol for establishing \textit{optimal teleportation fidelity} between Alice and Charlie is straightforward and involves the RPBES protocol \cite{PhysRevLett.93.260501}.

In another scenario where Alice and Bob share a noisy entangled state, if Bob-Charlie share less than one ebit of entanglement, one might expect that the optimal teleportation fidelity between Alice and Charlie would be lower compared to a case where one ebit of entanglement is shared between Alice and Charlie. However, we show that for a certain class of noisy states, these two strategies can yield the same teleportation fidelity within certain parameter ranges. Moreover, even if Bob performs a measurement in a non-maximally entangled basis (rather than using the maximally entangled Bell basis in the ESWAP protocol), it is still possible to achieve the same optimal teleportation fidelity for Alice and Charlie. When both Alice-Bob and Bob-Charlie share mixed states, things get more complicated. We found that if both parties share Bell-diagonal states, the optimal teleportation has a tighter analytical bound than those previously discussed independent analytical bounds. In this case, a one-way LOCC protocol, where Bob performs a Bell measurement and Alice corrects with a unitary operation, is sufficient to achieve optimal fidelity. Interestingly, we also discovered in a scenario where Alice-Bob share a noisy state belonging to a class and Bob-Charlie share an isotropic Werner state, one can always find a measurement in a non-maximally entangled basis that achieves higher fidelity than measurements in any maximally entangled basis, including the Bell basis, given the isotropic Werner state is weakly entangled. Thus, even weakly entangling operations can have an operational advantage in certain scenarios, although these strategies are suboptimal. But as the entanglement of the isotropic Werner state increases, Bell-measurement not only surpasses the non-maximally entangled measurements in terms of the teleportation fidelity that can be established between Alice and Charlie, but also becomes optimal. These findings highlight that, in the case of noisy states, the optimal strategies are not always straightforward to characterize.

We have also studied the scenario beyond the three-party case and observed that with an increasing number of parties, the problem becomes more complex. For an N-party network, we compare two different strategies for establishing \textit{optimal teleportation fidelity} and see how the distribution of entanglement after measurement by each party subtly affects its distribution. However, our study has some limitations. As mentioned earlier, even for the three-party case, we could not answer with full generality that for arbitrary pre-shared states, what should be the exact structure of the optimal protocol? We have tried to propose a convex optimization problem for the most general SEP protocol, but could solve it only in a restricted scenario.  One direction to venture would be to take advantage of some kind of symmetries of the analytic expression of teleportation fidelity, if present. It might be difficult, but it is still an interesting and worthwhile direction to explore, which we leave as the subject for future work. There are also other open directions for our study. As the bounds on optimal teleportation fidelity are independent of each other it would be interesting to classify the class of states for which a particular bound is achieved. As it is difficult to handle the LOCC protocols mathematically in full generality, one has to consider a specific protocol, but still, this kind of classification will be quite interesting. This may also lead to a unified expression for optimal teleportation fidelity which reduces to different bounds for different classes of states. But to our knowledge, this is a very involved and difficult task to accomplish. One can also consider the pre-shared states of $\mathcal{AB}$ and $\mathcal{BC}$ to be of infinite dimensions, a two-mode Gaussian state, for example. This is because Gaussian states are easy to prepare experimentally. Thus, extending this formalism to continuous variable cases will be an interesting work. Furthermore, in any network-based problem, the distribution of the teleportation channel is only one of the properties that you want to be distributed. There can be other different figures of merit one can think of in such distribution problems, where each of the figures of merit comes from a physical task. Examining such distribution tasks may lead to more interesting properties.

\bibliographystyle{apsrev4-1}
	\bibliography{ref}

\newpage
\appendix 
\begin{widetext}
\section{Non-Convexity of the restricted SEP optimization problem}\label{nonconvexity}
As the SEP optimization problem is nonconvex, it is relatively easier to deal with it if we can split it into two-parts to compute $F^*_{rSEP}(\rho_{AB_1},\sigma_{B_2C})$ as
\begin{align}
     & \max_{\lbrace X_i, ~N_{i}\rbrace}~  \dfrac{1}{d}\nonumber\\
     &- \sum_{i=1}^K \tr\left[ (Tr_C[X^{AC}_i(m_i^A)]\otimes N_i^B\otimes \mathbb{I}_C)~\rho_{AB_1} \otimes \sigma_{B_2C}\right] \nonumber \\
     & + \sum_{i=1}^K \tr\left[ (X^{AC}_i(m_i^A) \otimes N_i^B)~\rho_{AB_1} \otimes \sigma_{B_2C}\right] \nonumber \\
     & =\max_{\lbrace X^{AC}_i(m_i^A) \rbrace} ~f(\rho_{AB_1}, \sigma_{B_2C}, \lbrace X^{AC}_i(m_i^A) \rbrace),
\end{align}
where $\lbrace X^{AC}_i(m_i^A) \rbrace $ is a finite set of rank one positive operators as defined in Eq.$~$(\ref{primalsep}). So we now propose a two-step optimization as 
\begin{align}
     &(ii).a \quad  f(\rho_{AB_1},\sigma_{B_2C},\lbrace X^{AC}_i(m_i^A)\rbrace)=\max_{\lbrace N_i\rbrace} ~  \dfrac{1}{d}\nonumber\\
     &- \sum_{i=1}^K \tr\left[ (Tr_C[X^{AC}_i(m_i^A)]\otimes N_i^B\otimes \mathbb{I}_C)~\rho_{AB_1} \otimes \sigma_{B_2C}\right] \nonumber \\
     &+\sum_{i=1}^K \tr \left[ (X^{AC}_i(m_i^A) \otimes N^B_i)~\rho_{AB_1} \otimes \sigma_{B_2C}\right] \nonumber  \\
    s.t.  \quad & d~\sum_{i=1}^K Tr_C[X^{AC}_i(m_i^A)]\otimes N^B_i\leq  \mathbb{I}_{d^3}\nonumber \\
      & N_i^B\geq \mathbf{0} \label{prim1}     
\end{align}
More precisely, what problem \hyperref[prim1]{$(ii).a$} actually implies is that we fix the set of complex operators $\lbrace m_i \rbrace $ and solve a maximization problem over the set of all positive operators $\mathcal{P}(\mathcal{H}(d^2))$, which forms a convex set. Thus, problem \hyperref[prim1]{$(ii)$} is a convex optimization problem for a given choice of operators $\lbrace m_i \rbrace $, and this is because the objective function is linear in terms of $N_i$ and every $N_i$ is a member of $\mathcal{S}_{pos}$. 

After solving problem \hyperref[prim1]{$(ii).a$}, we have to solve a nonconvex optimization,
\begin{align}     
      (ii).b\quad&F^*_{rSEP}(\rho_{AB_1},\sigma_{B_2C})\nonumber\\ &\qquad\qquad\qquad=\max_{\lbrace X^{AC}_i(m_i^A)\rbrace} f(\rho_{AB_1},\sigma_{B_2C},\lbrace X^{AC}_i(m_i^A)\rbrace) \nonumber \\
      &s.t. \quad  d~\sum_{i=1}^K Tr_C[X^{AC}_i(m_i^A)]\otimes N_i^B\leq  \mathbb{I}_{d^3} \nonumber \\
      & X^{AC}_i(m_i^A) \geq \mathbf{0}, \quad rank(X^{AC}_i(m_i^A))=1,\label{primal2}
    \end{align}

where the optimization problem \hyperref[primal2]{$(ii).b$} is not a convex optimization problem as the variable set is not convex. Hence, the overall optimization problem is again nonconvex but it is at least tractable by parts, i.e., if we fix the set $\lbrace X^{AC}_i(m_i^A) \rbrace$, we can definitely figure out the optimal $\lbrace N_i^B\rbrace$ via solving problem \hyperref[prim1]{$(ii).a$}. However, there is a way via which we can convexify the nonconvex problem \hyperref[primal2]{$(ii).b$} by considering the convex hull of the variable set $\lbrace X^{AC}_i(m_i^A) \rbrace$ as done in the main text. 

After considering this convex hull, we can rewrite the nonconvex optimization under SEP in Eq.$~$(\ref{primal2}) as its convex version: 
\begin{align}
    (ii).c \quad & F^*_{rSEP}(\rho_{AB_1},\sigma_{B_2C}) = \max_{\left\lbrace L^{AC}_i\right \rbrace}~ f(\rho_{AB_1}, \sigma_{B_2C}, \lbrace L_i^{AC}\rbrace) \nonumber \\
    s.t. \quad & \sum_{i=1}^K d~Tr_C[L^{AC}_i] \otimes N_i^* \otimes \mathbb{I}_C \leq  \mathbb{I}_{AB_1B_2C} \nonumber \\
    &L_i^{AC}\in Conv(\mathcal{E}), ~~~\mathbf{0} \leq L_i^{AC}\otimes N_i^* \leq \mathbb{I}_{AB_1B_2C} \label{SDPp} 
\end{align}
where the set $\lbrace N_i^* \rbrace$ is a set of operators that are optimal for a given choice of the initial set of variables $ \lbrace L_i^{AC}\rbrace$. As every $L_i^{AC}\in Conv(\mathcal{E})$, hence, every $L_i^{AC}$ can be expressed as a convex combination of $\lbrace X^{AC}_i(m_i^A) \rbrace _{i=1}^K$, hence, the maximum value of the objective function $f(\rho_{AB_1}, \sigma_{B_2C}, \lbrace L_i^{AC}\rbrace)$ is achieved at the extreme points, that is $\lbrace X_{i}^{AC}(m_i^A)\rbrace$.

 The optimization problem \hyperref[SDPp]{$(iv)$} is a convex optimization problem in terms of the individual set of variables, $\lbrace L_i^{AC}\rbrace$ or $\lbrace N_i^B\rbrace$ while the other variable set remains constant. However, it is not globally a convex optimization problem as the objective function is nonlinear in terms the set of variables. This optimization problem can be solved via an iterative method, where one first fixes a set of variables as an initial guess and solves the optimization in terms of the other variable set and again continue the same process until one reaches the true global maxima.  However, it is not necessarily guaranteed that one can always reach the global maxima for arbitrary choice of $\rho_{AB_1}$ and $\sigma_{B_2C}$. In fact there can be a chance that this repetitive approach could not find the global maxima.

\section{The proof that Protocol 2 becomes the optimal protocol for specific choice of two-qubit states } \label{horodeckipureB}
$\bullet $ {\bf Protocol 1}: Consider a class of $\Lambda_{B \leftrightarrow A \rightarrow C}$ protocols, where Bob first performs a projective measurement in any maximally entangled basis and broadcasts the outcome to Alice and Charlie and then Bob is discarded. After hearing from Bob, Alice and Charlie performs a one-way LOCC from Alice to Charlie.

\vskip 0.3cm 

$\bullet$ {\bf Protocol 2}: Consider a class of $\Lambda_{B \leftrightarrow A \rightarrow C}$ protocols, where Bob first performs a projective measurement in a complete basis $\lbrace \ket{\eta_i}\rbrace_{i=0}^3$ of only non-maximally entangled states which are Bell like and broadcasts the outcome to Alice and Charlie. After that Bob's system is discarded and a one-way LOCC is performed from Alice to Charlie. Here Bob's measurement basis is restricted as
\begin{align}
    &\ket{\eta_0} = \sqrt{\delta_0} \ket{00}+ \sqrt{\delta_1} \ket{11}, \quad \ket{\eta_1} = \sqrt{\delta_1} \ket{00}- \sqrt{\delta_0} \ket{11}, \nonumber \\
    & \ket{\eta_2} = \sqrt{\delta'_0} \ket{01}+ \sqrt{\delta'_1} \ket{10}, \quad \ket{\eta_1} = \sqrt{\delta'_1} \ket{01}- \sqrt{\delta'_0} \ket{10}, \nonumber
\end{align}
where $\delta_0 > \delta_1 >0$ and $\delta'_0 > \delta'_1 >0$ and $\delta_0 + \delta_1 = \delta'_0 + \delta'_1 =1$. 

\vskip 0.5cm 
$\bullet$ {\bf Scenario}: { \it Consider a scenario, where Alice-Bob share a two-qubit state $\rho_{AB_1}= (\Lambda_{ADC} \otimes \mathbb{I}_2) \ket{\Psi}\bra{\Psi}$, where $\Lambda_{ADC}$ represents amplitude-damping qubit channel (ADC) acting on Alice's side and $\ket{\Psi}= \sqrt{\alpha_0} \ket{00}+ \sqrt{\alpha_1} \ket{11}$ such that $\alpha_0 \geq \alpha_1 >0$. Whereas, Bob and Charlie share a two-qubit pure entangled state $\ket{\Phi}= \sqrt{\beta_0} \ket{00} + \sqrt{\beta_1}\ket{11}$ such that $\beta_0 \geq \beta_1 >0$. }

\vskip 0.4cm
\begin{proof}
    Suppose we have a pair of two-qubit entangled states $(\rho_{AB_1},\sigma_{B_2C})$ shared between Alice-Bob and Bob-Charlie, such that $\rho_{AB_1}=(\Lambda_{ADC} \otimes \mathbb{I})\ket{\psi}_{AB_1}\bra{\psi}$ and $\sigma_{B_2C}=\ket{\phi}_{B_2C}\bra{\phi}$. Here $\ket{\psi}_{AB_1}$ and $\ket{\phi}_{B_2C}$ have their Schmidt decomposition as,  
\begin{align}
    \ket{\psi}_{AB_1}&= \sqrt{\alpha_0}\ket{00}+ \sqrt{\alpha_1} \ket{11}, \quad \quad \quad \alpha_0\geq \alpha_1 >0\nonumber \\
    &= (A_{\alpha} \otimes \mathbb{I})\ket{\Phi_0}, \quad  \quad \text{where,} \quad A_{\alpha} = diag\lbrace \sqrt{2~\alpha_0},~\sqrt{2~\alpha_1}~\rbrace \nonumber \\
     \ket{\phi}_{B_2C}&= \sqrt{\beta_0}\ket{00}+ \sqrt{\beta_1}\ket{11}, \quad \quad \quad \beta_0 \geq \beta_1 >0, \nonumber \\
     &= (\mathbb{I} \otimes B_{\beta})\ket{\Phi_0}, \quad \quad \text{where,} \quad B_{\beta} = diag\lbrace \sqrt{2~\beta_0},~\sqrt{2~\beta_1}\rbrace, \label{b1}
\end{align}

where $A_{\alpha}^{\dagger}A_{\alpha}$ and $B_{\beta}^{\dagger}B_{\beta}$ in Eq.$~$(\ref{b1}) are trace non-increasing CP maps, \textit{i.e.,} $A_{\alpha}^{\dagger}A_{\alpha}\leq \mathbb{I}$ and $B_{\beta}^{\dagger}B_{\beta} \leq \mathbb{I}$, acting on the Bell state $\ket{\Phi_0}$. So the composite four-qubit state can be expressed as  
\begin{align}
    \eta_{AB_1B_2C}&= (\Lambda_{ADC} \otimes \mathbb{I})\ket{\psi}_{AB_1}\bra{\psi} \otimes \ket{\phi}_{B_2C}\bra{\phi} \nonumber \\
    & = \sum_{i=0}^{1} (K_{i,p}\otimes \mathbb{I})\ket{\psi}_{AB_1}\bra{\psi} (K_{i,p}^{\dagger} \otimes \mathbb{I}) \otimes \ket{\phi}_{B_2C}\bra{\phi} \nonumber \\
    &= \sum_{i=0}^{1} (K_{i,p}~A_{\alpha}\otimes \mathbb{I})\ket{\Phi_0}_{AB_1}\bra{\Phi_0}(A_{\alpha}^{\dagger}~K_{i,p}^{\dagger}\otimes \mathbb{I})\otimes (\mathbb{I} \otimes B_{\beta})\ket{\Phi_0}_{B_2C}\bra{\Phi_0}(\mathbb{I} \otimes B_{\beta}^{\dagger }) \nonumber \\
    &= \dfrac{1}{4}\sum_{k,l=0}^{3}\ket{\psi^{*}_k}_{B_1B_2}\bra{\psi^{*}_l}\otimes \sum_{i=0}^{1}(K_{i,p}~A_{\alpha} \otimes B_{\beta}) \ket{\psi_k}_{AC}\bra{\psi_l}(A_{\alpha}^{\dagger}~K_{i,p}^{\dagger}\otimes B_{\beta}^{\dagger}),
\end{align}
where in the above derivation $\lbrace K_{0,p},~K_{1,p}\rbrace$ denotes the Kraus operators of amplitude-damping channel (ADC) with channel parameter $p$ such that 
\begin{align}
    K_{0,p}= \left( \begin{array}{cc}
        1 & 0 \\
        0 & \sqrt{1-p}
    \end{array}
    \right), \quad \quad K_{1,p}= \left( \begin{array}{cc}
        0 & \sqrt{p} \\
        0 & 0
    \end{array}
    \right), \quad \quad 0<p<1
\end{align}
and also in the above derivation, $\lbrace \ket{\psi_k} \rbrace$ forms any orthonormal set of basis and $*$ is the complex conjugate of the vectors. Thus if Bob performs measurement in the basis $\lbrace \ket{\psi^{*}_j}_{B_1B_2}\rbrace$, then the corresponding prepared state between Alice-Charlie becomes 
\begin{align}
    \rho_{AC,\psi_j}&=\dfrac{1}{p_j}~Tr_{B}\left((\ket{\psi^{*}_j}_{B_1B_2}\bra{\psi^{*}_j}\otimes \mathbb{I})\eta_{AB_1B_2C} \right) \nonumber \\
    & =\dfrac{1}{p_j}~ \sum_{i=0}^{1}(K_{i,p}~A_{\alpha} \otimes B_{\beta}) \ket{\psi_j}_{AC}\bra{\psi_j}(A_{\alpha}^{\dagger~}K_{i,p}^{\dagger}\otimes B_{\beta}^{\dagger}) \nonumber \\
    & =\dfrac{1}{p_j}~ (\Lambda_{ADC} \otimes \mathbb{I})(A_{\alpha} \otimes B_{\beta}) \ket{\psi_j}_{AC}\bra{\psi_j} (A_{\alpha}^{\dagger} \otimes B_{\beta}^{\dagger}) \nonumber \\
    & = (\Lambda_{ADC} \otimes \mathbb{I}) \ket{\chi_j}_{AC} \bra{\chi_j}, \quad \quad j=0,1,2,3, 
\end{align}

where  $\ket{\chi_{j}}_{AC}=\dfrac{1}{\sqrt{p_j}} (A_{\alpha} \otimes B_{\beta})\ket{\psi_j}_{AC}$, and $p_j=\tr\left( (A_{\alpha}^{\dagger}A_{\alpha}\otimes B_{\beta}^{\dagger}B_{\beta})\ket{\psi_j}_{AC}\bra{\psi_j}\right)$ is the probability of obtaining the state $\rho_{AC,\psi_j}$.

\paragraph{{\bf Bob performs PVM}:}
Now let us assume that Bob first starts the protocol by performing a PVM in the following basis: 
\begin{align}
    & \ket{\eta_0}= \sqrt{\delta_0} \ket{00}+ \sqrt{\delta_1} \ket{11}, \quad \ket{\eta_1}= \sqrt{\delta_1} \ket{00}-\sqrt{\delta_0} \ket{11}, \nonumber \\
    & \ket{\eta_2}= \sqrt{\delta_0} \ket{01}+ \sqrt{\delta_1} \ket{10}, \quad \ket{\eta_3}= \sqrt{\delta_1} \ket{01}-\sqrt{\delta_0} \ket{10}, \label{a4}
\end{align}
where $\delta_0 \geq \delta_1 >0$. Hence, one can express each of the state $\ket{\chi_{i}}_{AC}=\dfrac{1}{\sqrt{p_i}} (A_{\alpha} \otimes B_{\beta})\ket{\eta_i}_{AC}$ as 
\begin{align}
    & \ket{\chi_0}_{AC}= \dfrac{1}{\sqrt{p_0}}\left(\sqrt{\delta_0 \alpha_0 \beta_0}  \ket{00} + \sqrt{\delta_1 \alpha_1 \beta_1 } \ket{11} \right) = \sqrt{A_0} \ket{00} + \sqrt{A_1}\ket{11} \nonumber \\
    & \ket{\chi_1}_{AC}= \dfrac{1}{\sqrt{p_1}}\left(\sqrt{\delta_1 \alpha_0 \beta_0 } \ket{00} - \sqrt{\delta_0 \alpha_1 \beta_1}  \ket{11} \right) = \sqrt{B_0} \ket{00} -\sqrt{B_1}\ket{11} \nonumber \\
    & \ket{\chi_2}_{AC}= \dfrac{1}{\sqrt{p_2}}\left(\sqrt{\delta_0 \alpha_0 \beta_1}  \ket{01} + \sqrt{\delta_1 \alpha_1 \beta_0}  \ket{10} \right)= \sqrt{C_0} \ket{01} +\sqrt{C_1} \ket{10} \nonumber \\
    & \ket{\chi_3}_{AC}= \dfrac{1}{\sqrt{p_3}}\left(\sqrt{\delta_1 \alpha_0 \beta_1}  \ket{01} - \sqrt{\delta_0 \alpha_1 \beta_0 } \ket{10} \right) = \sqrt{D_0} \ket{01} -\sqrt{D_1} \ket{10}, \label{a5}
\end{align}
where $ p_i $ is the normalization factor for every $i$ and $\lbrace A_i,~B_i,~C_i,~D_i\rbrace$ are the Schmidt coefficients of $\lbrace \ket{\chi_i}_{AC}\rbrace$. Now from the knowledge of $\lbrace \ket{\chi_i}_{AC}\rbrace$, one can figure out the optimal {\it fully entangled fraction} of every $\rho_{AC,\eta_i}$ as 
\begin{align}
    F^{*}(\rho_{AC,\eta_0})&=\dfrac{1}{2}\left( 1+2~\sqrt{A_0~A_1~(1-p)} -p~A_1\right), \quad \quad \text{if,} \quad \sqrt{A_0A_1(1-p)}\geq pA_1,\nonumber \\
    F^{*}(\rho_{AC,\eta_1})&=\dfrac{1}{2}\left( 1+2~\sqrt{B_0~B_1~(1-p)} -p~B_1\right), \quad \quad \text{if,} \quad \sqrt{B_0B_1(1-p)}\geq pB_1, \nonumber \\
    &\text{and,} \\
    F^{*}(\rho_{AC,\eta_0}) &=\dfrac{1}{2}\left( 1+ A_0~\dfrac{1-p}{p}\right), ~~~~~~~~~~~~~~~~\quad \quad \quad \quad \text{if,} \quad \sqrt{A_0A_1(1-p)}< pA_1, \nonumber \\
    F^{*}(\rho_{AC,\eta_1})&=\dfrac{1}{2}\left( 1+ B_0~\dfrac{1-p}{p}\right), ~~~~~~~~~~~~~~~~\quad \quad \quad \quad \text{if,} \quad \sqrt{B_0B_1(1-p)}< pB_1,
    \end{align}
    whereas, 
    \begin{align}
     F^{*}(\rho_{AC,\eta_2})&=\dfrac{1}{2}\left( 1+2~\sqrt{C_0~C_1~(1-p)} -p~C_1\right), \quad \quad \text{if,} \quad \sqrt{C_0C_1(1-p)}\geq pC_1, \nonumber \\
    F^{*}(\rho_{AC,\eta_3})&=\dfrac{1}{2}\left( 1+2~\sqrt{D_0~D_1~(1-p)} -p~D_1\right), \quad \quad \text{if,} \quad \sqrt{D_0D_1(1-p)}\geq pD_1, \nonumber \\
    &\text{and,} \\
     F^{*}(\rho_{AC,\eta_2}) &=\dfrac{1}{2}\left( 1+ C_0~\dfrac{1-p}{p}\right), ~~~~~~~~~~~~~~~~\quad \quad \quad \quad \text{if,} \quad \sqrt{C_0C_1(1-p)}< pC_1, \nonumber \\
     F^{*}(\rho_{AC,\eta_3})&=\dfrac{1}{2}\left( 1+ D_0~\dfrac{1-p}{p}\right), ~~~~~~~~~~~~~~~~\quad \quad \quad \quad \text{if,} \quad \sqrt{D_0D_1(1-p)}< pD_1, 
\end{align}
Now let us consider a particular domain where all four inequality conditions, 
\begin{align}
    & \quad \sqrt{A_0~A_1(1-p)}< p~A_1  \nonumber \\
    & \quad \sqrt{B_0~B_1(1-p)}< p~B_1 \nonumber \\
    & \quad \sqrt{C_0~C_1(1-p)}< p~C_1  \nonumber \\
    & \quad \sqrt{D_0~D_1(1-p)}< p~D_1  \nonumber 
\end{align}
are simultaneously satisfied. From these inequality conditions, one can argue that the following inequality conditions should be satisfied, 
\begin{align}
    (1-p)~\delta_0~\delta_1~\alpha_0~\alpha_1~\beta_0~\beta_1 < p^2~\delta^2_1~\alpha^2_1~\beta^2_1\leq p^2~\delta^2_0~\alpha^2_1~\beta^2_1 \leq p^2~\delta^2_0~\alpha^2_1~\beta^2_0,  \label{b10}
\end{align}
where note that ranges of $\alpha_0,~\beta_0,~\delta_0$ is given by $\dfrac{1}{2}\leq \alpha_0,~\beta_0,~\delta_0 <1$. Now if one trivially choose $\alpha_0=\beta_0 =\delta_0 =\dfrac{1}{2}$, then for this case, the inequality in Eq.$~$(\ref{b10}) will satisfy if the channel parameter $p$ is in the range $\dfrac{\sqrt{5}-1}{2}<p<1$. Note that the more general restriction on the channel parameter $p$ is  
\begin{align}
    p> \dfrac{1}{2}\left[ \sqrt{\dfrac{\alpha_0^2~\beta_0^2~\delta_0^2}{\alpha_1^2~\beta_1^2~\delta_1^2}+4\left(\dfrac{\alpha_0~\beta_0~\delta_0}{\alpha_1~\beta_1~\delta_1}\right)}-\dfrac{\alpha_0~\beta_0~\delta_0}{\alpha_1~\beta_1~\delta_1}\right] \geq \dfrac{\sqrt{5}-1}{2},
\end{align}
because of the fact that $\alpha_0~\beta_0~\delta_0 \geq \alpha_1~\beta_1~\delta_1$ holds in general. So as per the range of the parameters of $\alpha_0,~\beta_0,~\delta_0$, let us first assume that $\alpha_0 \geq \alpha_1$, whereas $\beta_0=\delta_0 =\dfrac{1}{2}$. With this, one can easily find the domain of allowed $\alpha_0$ as 
\begin{align}
    \dfrac{1}{2}\leq \alpha_0 < \dfrac{p^2}{1-p+p^2}, \quad \quad \quad \quad \text{while,} \quad \dfrac{\sqrt{5}-1}{2}<p<1. \label{b12}
\end{align}
Next, let us assume that another independent parameter $\beta_0$ satisfy $\beta_0 \geq \beta_1$ while also assuming $\alpha_0 \geq \alpha_1 $ from the restricted domain of Eq.$~$(\ref{b12}). Whereas, we assume, $\delta_0 =\delta_1 =\dfrac{1}{2}$. For this case, one can easily find out the restriction on $\beta_0$ as 
\begin{align}
    \dfrac{1}{2}\leq \beta_0 < \dfrac{p^2 ~(1-\alpha_0)}{p^2 + \alpha_0~(1-p+p^2)},  \quad \quad \text{while,} \quad \dfrac{1}{2}\leq \alpha_0 < \dfrac{p^2}{1-p+p^2}, \quad \text{and,} \quad \dfrac{\sqrt{5}-1}{2}<p<1. \label{b13}
\end{align}
Similarly, assuming the general case, where $\delta_0 \geq \delta_1$, one finds out the most general condition on the parameters as 
\begin{align}
    \dfrac{1}{2}\leq \delta_0 < \dfrac{p^2(1-\alpha_{0})(1-\beta_{0})}{\alpha_0 \beta_0 (1-p)+p^2(1-\alpha_{0})(1-\beta_{0})}, \label{b14}
\end{align}
while $\alpha_0$ and $\beta_0$ satisfy Eq.$~$(\ref{b12}) and Eq.$~$(\ref{b13}).  Comparing it with the result in \cite{PhysRevLett.134.160803}, we see that the behavior and features of optimal teleportation fidelity are quite similar. The only difference is in the range where you achieve the bound. This is because the ADC channel acting on a non-maximally entangled Bell-like state is similar to the structure of noisy states considered in \cite{PhysRevLett.134.160803}. Thus ADC channel indicates a way of physically modeling such states in a practical network scenario. So while all these conditions hold, one can easily find out {\color{red}\st{that}} the average {\it fully entangled fraction}, {\it i.e.,} 
\begin{align}
    F^*_{AC,\lbrace \eta_i\rbrace}&= \sum_{i=0}^3 p_i ~F^*(\rho_{AC,\eta_i}) \nonumber \\
&= \dfrac{1}{2}+\dfrac{(1-p)}{2p}(p_0~A_0+p_1~A_1+p_2~A_2+p_3~A_3) \nonumber \\
&= \dfrac{1}{2}\left[ 1+ \alpha_0~\dfrac{(1-p)}{p}\right] = \min \lbrace F^*(\rho_{AB_1}), F^*(\sigma_{B_2C})\rbrace =F^*_{AC}(\rho,~\sigma), \label{b15}
\end{align}
which ends the proof. A major point to be noted here is that the average optimal {\it fully entangled fraction} in Eq.$~$(\ref{b15}) is hitting the LOCC upper bound $\min \lbrace F^*(\rho_{AB_1}), F^*(\sigma_{B_2C})\rbrace$ which means Eq.$~$(\ref{b13}) and Eq.$~$(\ref{b14}) must obey the strict inequality condition $F^*(\rho_{AB_1})< F^*(\sigma_{B_2C})$. This means that the following inequality must hold, {\it i.e.,} 
\begin{align}
    \dfrac{1}{2}\left[ 1+ \alpha_0~\dfrac{(1-p)}{p}\right] < F^*(\sigma_{B_2C})=\dfrac{1+2\sqrt{\beta_0~(1-\beta_0)}}{2}, \label{b16}
\end{align}
if Eq.$~$(\ref{b13}) and Eq.$~$(\ref{b14}) are satisfied. From Eq.$~$(\ref{b16}) one can find a domain of $\beta_0$ as \begin{align}
    \dfrac{1}{2}\leq \beta_0 < \dfrac{p^2 ~(1-\alpha_0)}{p^2 + \alpha_0~(1-p+p^2)} <\dfrac{1}{2}\left[ 1+ \sqrt{1-\dfrac{\alpha_0^2~(1-p)^2}{p^2}}\right], \nonumber
\end{align}
which is consistent with Eq.$~$(\ref{b13}). Hence, this implies that both the ineqaulity conditions Eq.$~$(\ref{b13}) and Eq.$~$(\ref{b14}) are physically consistent. 
\end{proof}

\section{Proof that Protocol 1 outperforms Protocol 2 while the initial shared two-qubit states are mixture of Bell states }\label{belldiag}
Let us again recapitulate the descriptions of {\bf Protocol 1} and {\bf Protocol 2} as follows:

\paragraph*{\textbf{Protocol 1}: Consider a class of $\Lambda_{B \leftrightarrow A \rightarrow C}$ protocols, where Bob first performs a projective measurement in any maximally entangled basis and broadcasts the outcome to Alice and Charlie, and then Bob's system is discarded. After hearing from Bob, Alice and Charlie perform a one-way LOCC from Alice to Charlie. \\}
\vskip 0.2cm

\vskip 0.2cm
\paragraph*{\textbf{Protocol 2}: Consider a class of $\Lambda_{B \leftrightarrow A \rightarrow C}$ protocols, where Bob first performs a projective measurement in a complete basis $\lbrace \ket{\eta_i}\rbrace_{i=0}^3$ of only non-maximally entangled states which are Bell like and broadcasts the outcome to Alice and Charlie. After that Bob's system is discarded and a one-way LOCC is performed from Alice to Charlie. Here Bob's measurement basis is restricted as} 
\begin{align}
    &\ket{\eta_0} = \alpha_0 \ket{00}+ \alpha_1 \ket{11}, \quad \ket{\eta_1} = \alpha_1 \ket{00}- \alpha_0 \ket{11}, \nonumber \\
    & \ket{\eta_2} = \beta_0 \ket{01}+ \beta_1 \ket{10}, \quad \ket{\eta_1} = \beta_1 \ket{01}- \beta_0 \ket{10}, \nonumber
\end{align}
where $\alpha_0^2 > \alpha_1^2 >0$ and $\beta_0^2 > \beta_1^2 >0$ and $\alpha_0^2 + \alpha_1^2 = \beta_0^2 + \beta_1^2 =1$. 
\vskip 0.3cm
\textbf{Proposition 5.}  
    {\it Let Alice-Bob share a two-qubit entangled Bell diagonal state with eigenvalues $\lbrace p_i \rbrace$ such that $p_i \geq p_{i+1}$ and Bob-Charlie share another two-qubit entangled Bell diagonal state with eigenvalues $\lbrace q_i \rbrace$ such that $q_i\geq q_{i+1}$ for $i=0,1,2,3$. Within a restricted class of LOCC protocols $\lbrace \mathcal{P} \rbrace$, where Bob initiates the protocol $\mathcal{P}$ and depending on Bob's measurement, Alice - Charlie perform a one-way LOCC, the maximum achievable fully entangled fraction cannot surpass the following bound} 
    \begin{align}
        F^*_{\lbrace \mathcal{P} \rbrace }&\leq \left \lbrace \dfrac{1}{2},~\lambda_{max}\left[ (\Lambda_p \otimes \Lambda_q) \ket{\Phi_0}\bra{\Phi_0}\right] \right \rbrace \nonumber \\
        & \leq min \lbrace F_1^*~,~F_2^* \rbrace \nonumber
    \end{align}
    where $\Lambda_p$ and $\Lambda_q$ are Pauli qubit channels with Kraus operators $K_i= \sqrt{p_i}~\sigma_i$ and $L_j= \sqrt{q_j}~\sigma_j$ and the set $\mathcal{S}_{\mathcal{P}}$ includes all Bob assisted LOCC (Bob starts the protocol).

\begin{proof}
Let us assume that $\mathcal{AB}$ and $\mathcal{BC}$ both share  Bell-diagonal states, \textit{i.e.,} $\rho_{AB_1}=\sum_ip_i ~\ket{\Phi_i}_{AB_{1}}\bra{\Phi_i}$, and $\sigma_{B_2C}=\sum_j q_j ~\ket{\Phi_j}_{B_{2}C}\bra{\Phi_j}$.  Hence, the initial composite state $\eta_{AB_1B_2C}$ can be expressed as
    \begin{align}
        \eta_{AB_1B_2C}& = \sum_{i,j=0}^3 p_iq_j \ket{\Phi_i ~\Phi_j}_{AB_1B_2C} \bra{\Phi_i ~\Phi_j} \nonumber \\
       & = \sum_{i,j=0}^{3} p_i~q_j~(\sigma_i \otimes \mathbb{I} \otimes \mathbb{I} \otimes \sigma_j)\ket{\Phi_0 ~\Phi_0}_{AB_1B_2 C} \bra{\Phi_0 ~\Phi_0} (\sigma_i \otimes \mathbb{I} \otimes \mathbb{I} \otimes \sigma_j) \nonumber \\
       &= \sum_{i,j=0}^3 A_{ij}~ \ket{\Phi_0 ~\Phi_0}_{AB_1B_2 C} \bra{\Phi_0 ~\Phi_0} ~A_{ij}^{\dagger},
       \label{35}
    \end{align}

where $\sigma_i$ is the $i^{th}$ Pauli matrix and $A_{ij}= \sqrt{p_i~q_j}~(\sigma_i \otimes \mathbb{I} \otimes \mathbb{I} \otimes \sigma_j)$. Now we can rewrite the state $\ket{\Phi_0 }_{AB_1}\ket{\Phi_0}_{B_2C}$ as 
\begin{align}
    \ket{\Phi_0 ~ \Phi_0}_{AB_1B_2C} = \dfrac{1}{2}\sum_{k=0}^3 \ket{\psi^*_k}_{B_1B_2}\otimes \ket{\psi_k}_{AC}, \label{b3} 
\end{align}
where $\lbrace \ket{\psi_k}\rbrace _{k=0}^3$ forms a complete set of orthonormal basis and $\ket{\psi^{*}_k}$ is the complex conjugate of $\ket{\psi_k}$. So using the above identity one can again rewrite Eq.$~$(\ref{35}) as 
\begin{align}
    \eta_{AB_1B_2C}&= \sum_{i,j=0}^3 A_{ij}~\ket{\Phi_0 ~\Phi_0}_{AB_1B_2 C} \bra{\Phi_0 ~\Phi_0} ~A_{ij}^{\dagger} \nonumber \\
    &=\dfrac{1}{4} \sum_{k,l=0}^3 \ket{\psi^*_k}_{B_1B_2} \bra{\psi^*_l} \otimes \sum_{i,j=0}^3 \tilde{B}_{ij} \ket{\psi_k}_{AC}\bra{\psi_l} \tilde{B}_{ij}^{\dagger}, \label{b4} 
\end{align}
where $\tilde{B}_{ij}=\sqrt{p_iq_j} ~\sigma_i \otimes \sigma_j$ denotes an independent action of two Pauli channels on Alice and Charlie respectively. So if we consider two Pauli channels as $\Lambda_p$ and $\Lambda_q$ with Kraus operators $\lbrace \sqrt{p_i}~\sigma_i\rbrace$ and $\lbrace \sqrt{q_j}~\sigma_j\rbrace$, then Eq.$~$(\ref{b4}) can also be written as 
\begin{align}
    \eta_{AB_1B_2C}= \dfrac{1}{4} \sum_{k,l=0}^3 \ket{\psi^*_k}_{B_1B_2} \bra{\psi^*_l} \otimes (\Lambda_p \otimes \Lambda_q) \ket{\psi_k}_{AC}\bra{\psi_l}. \label{b5}
\end{align}
Now we will show what is the optimal three-party LOCC protocol between Alice, Bob and Charlie for maximizing {\it fully entangled fraction} between Alice-Charlie. Since we have already shown that it is sufficient to choose Charlie's operation to be trivial (identity). Therefore, either Alice starts the protocol or Bob. We first focus on Bob starting the protocol.  
\vskip 0.2cm 
\paragraph{{\bf Bob starts the protocol with PVM}:}
Performing PVM on an arbitrary basis $\lbrace \ket{\psi^*_k}_{B_1B_2}\rbrace$ shall prepare the following state between Alice-Charlie 
\begin{align}
    \rho_{AC}(\psi_k)=(\Lambda_p \otimes \Lambda_q) \ket{\psi_k} \bra{\psi_k},
\end{align}
with a probability $P_k=\tr\left[ (\ket{\psi_k^*}_{B_1B_2}\bra{\psi_k^*}\otimes \mathbb{I})\eta_{AB_1B_2C}\right]=\frac{1}{4}$. Now, since Bob is starting the protocol with PVM, we first need to optimize over all PVM basis that Bob must choose. 

Let us now define $ F_1(\rho_{AC,PVM})$ as the maximum average fully entangled fraction of the set of states $\lbrace \rho_{AC}(\psi_K)\rbrace$, where the maximization is taken over any PVM measurement performed by Bob. We define $F_2(\rho_{AC,PVM})$ as the maximum {\it fully entangled fraction} that is further optimized over all local post-processing between Alice and Charlie, given Bob's optimal choice of initial measurement. Hence, we can write 
\begin{align}
    F_{1}(\rho_{AC,PVM}) & = \sum_{k=0}^3 P_k ~F_1(\rho_{AC}(\psi_k)) \nonumber \\
    &= \sum_{k=0}^3 P_k ~\max_{\lbrace \ket{\psi^*_k}\rbrace} \max_{\ket{\Phi}\in MES} \bra{\Phi} \rho_{AC}(\psi_k)\ket{\Phi} \nonumber \\
    &\leq \sum_{k=0}^3 P_k~\max_{\Lambda_{AC} \in LOCC} \max_{\lbrace \ket{\psi^*_k}\rbrace} \max_{\ket{\Phi}\in MES} \bra{\Phi} \Lambda_{AC}(\rho_{AC}(\psi_k))\ket{\Phi} \nonumber \\
    & =F_{2}(\rho_{AC,PVM}).
\end{align}

Now we can find the expression of $F_1(\rho_{AC}(\psi_k))$ as 

\begin{align}
    F_1(\rho_{AC}(\psi_k))&= \max_{\ket{\Phi}\in MES} \max_{\ket{\psi_k}} \bra{\Phi} (\Lambda_p \otimes \Lambda_q)\rho_{\psi_k}\ket{\Phi}, \quad \quad \text{where,} \quad \rho_{\psi_k}=\ket{\psi_k}\bra{\psi_k} \nonumber \\
    &= \max_{\ket{\Phi}} \max_{\ket{\psi_k}} \sum_{i,j=0}^3 p_i q_j ~\bra{\Phi}(\sigma_i \otimes \sigma_j) \rho_{\psi_k} (\sigma_i \otimes \sigma_j)\ket{\Phi} \nonumber \\
    &= \max_{\ket{\Phi}} \max_{\ket{\psi_k}} \sum_{i,j=0}^3 p_i q_j ~\bra{\psi_k}(\sigma_i \otimes \sigma_j) \rho_{\Phi} (\sigma_i \otimes \sigma_j)\ket{\psi_k} \nonumber \\
    &= \max_{W} \max_{\ket{\psi_k}} \sum_{i,j=0}^3 p_i q_j ~\bra{\psi_k}(\sigma_i \otimes \sigma_j)(W \otimes \mathbb{I}) \ket{\Phi_0} \bra{\Phi_0}(W^{\dagger}\otimes \mathbb{I}) (\sigma_i \otimes \sigma_j)\ket{\psi_k} \nonumber \\
    &= \max_{W} \max_{\ket{\psi_k}} \sum_{i,j=0}^3 p_i q_j ~\bra{\psi_k}(\sigma_i~W~\sigma_j^T \otimes \mathbb{I}) \rho_{\Phi_0}(\sigma_j^{*}W^{\dagger}~\sigma_i\otimes \mathbb{I}) \ket{\psi_k} \nonumber \\
    &= \max_{W} \max_{\ket{\psi_k}}  \bra{\psi_k} (\Lambda_p \circ W \circ \Lambda^T_q\otimes \mathbb{I}) \rho_{\Phi_0} \ket{\psi_k} \nonumber \\
    &= \max_{W} \lambda_{max}\left[ (\Lambda_{p,W,q} \otimes \mathbb{I})\rho_{\Phi_0}\right], \label{b7}
\end{align}
where $W$ is a unitary from $SU(2)$ and note that the concatenated channel $\Lambda_{p,W,q}=\Lambda_p \circ W\circ \Lambda_q $ is a unital channel for any choice of $W$. Hence, from the observation of Ref. \cite{PhysRevA.90.052304}, one can conclude that the largest eigenvector of the state $(\Lambda_{p,W,q} \otimes \mathbb{I})\ket{\Phi_0} \bra{\Phi_0}$ which is $\ket{\psi}$ must be a MES. Note that since $\Lambda_{p,W,q}$ is unital, the state $(\Lambda_{p,W,q} \otimes \mathbb{I})\ket{\Phi_0} \bra{\Phi_0}$ can always be converted to a Bell-diagonal state using some product unitary operation $U \otimes V$ such that the largest eigenvector $\ket{\psi_k}=\ket{\Phi_0}$ for all $k=0,1,2,3$. \\ 

This implies, without any loss of generality, the best PVM Bob can perform is doing measurement in the standard Bell basis, {\it i.e.,} $\ket{\psi_k}=(\mathbb{I}\otimes \sigma_k)\ket{\Phi_0}$, for $k=0,1,2,3$. After suitable unitary corrections by Alice and Charlie, the post-measurement state of Alice-Charlie can be uniquely expressed in the following Bell mixture, 
\begin{align}
    \rho_{AC}(\Lambda_p,\Lambda_q)= \sum_{i=0}^3 \lambda_i \ket{\Phi_i}\bra{\Phi_i}, \quad \quad \lambda_0 \geq \lambda_1 \geq \lambda_2 \geq \lambda_3, \label{b8}
\end{align}
where $\lambda_0 =\max_{W}~\lambda_{max}\left[ (\Lambda_{p,W,q}\otimes \mathbb{I})\rho_{\Phi_0}\right]=\lambda_{max}\left[(\Lambda_{p}\circ \Lambda_q^T\otimes \mathbb{I})\rho_{\Phi_0} \right]$ is the largest eigenvalue of $\rho_{AC}(\Lambda_p,\Lambda_q)$. Note that $\rho_{AC}(\Lambda_p, \Lambda_q)$ is entangled \textit{iff} 
\begin{align}
    \lambda_0 = \lambda_{max}\left[(\Lambda_{p}\circ \Lambda_q^T\otimes \mathbb{I})\rho_{\Phi_0} \right] > \dfrac{1}{2}
\end{align}

Hence, the {\it fully entangled fraction} $F_{1}(\rho_{AC,PVM})$ can be written as
\begin{align}
    F_{1}(\rho_{AC,PVM})= &\lambda_{max}\left[(\Lambda_{p}\circ \Lambda_q^T\otimes \mathbb{I})\rho_{\Phi_0} \right]\nonumber\\
    =&\max\{(p_3 q_0+p_2 q_1+p_1 q_2+p_0 q_3),(p_2 q_0+p_3 q_1+p_0 q_2+p_1
   q_3),\nonumber\\
   &\qquad\qquad\qquad(p_1 q_0+p_0 q_1+p_3 q_2+p_2 q_3),(p_0 q_0+p_1 q_1+p_2 q_2+p_3
   q_3)\},
\end{align}
as $p_i\geq p_{i+1}$ and $q_i\geq q_{i+1}$ for $i=0,1,2,3$ using rearrangement inequality \cite{day1972rearrangement} we can conclude that
\begin{align}
    F_{1}(\rho_{AC,PVM})=\sum_{i=0}^3p_iq_i.
\end{align}
Now after Bob's action if we include post-processing between Alice and Charlie, we can obtain a value
\begin{align}
    F_{2}(\rho_{AC,PVM})= p_{succ}~F_1(\rho_{AC,PVM}) + \dfrac{1- p_{succ}}{2},  
\end{align}
where $p_{succ}$ represents a success probability that Bob performs PVM in MES and we obtain $F_1(\rho_{AC,PVM})> \frac{1}{2}$, otherwise if $F_1(\rho_{AC,PVM})\leq \frac{1}{2}$, then Alice and Charlie do not share any entangled state and hence, with zero success probability i.e., $p_{succ}=0$, Alice and Charlie replace their existing state by a product state giving a value at least small as $\frac{1}{2}$. 

This proves that {\bf Protocol 1} is more efficient than {\bf Protocol 2} while $\rho_{AB_1}$ and $\sigma_{B_2C}$ are mixture of Bell states.  

\vskip 0.5cm 
\paragraph{{\bf Bob starts with POVM}:} Suppose Bob starts the protocol by performing an $N$ outcome POVM operation with elements $\lbrace \pi_{i, B_1B_2}\rbrace$ such that $\sum_{i=1}^N \pi_{i,B_1B_2} =\mathbb{I}$. Note that each of the elements can be expressed as 
\begin{align}
    \pi_{i,B_1B_2}=\sum_{k=0}^3 q_{i,k} \ket{\psi^*_{i,k}}\bra{\psi^*_{i,k}}, \quad \quad \quad q_{i,k}\geq 0 \quad \forall i,k, \label{b11}
\end{align}
where $\sum_k q_{i,k}\leq 1$ and $\lbrace \ket{\psi_{i,k}^*}\rbrace$ forms a complete set of orthonormal basis. Note that the representation in Eq.$~$(\ref{b11}) is not unique.

So if Bob performs $\pi_{i, B_1B_2}$ then without any loss of generality, the post-measurement state between Alice-Charlie can be expressed directly from Eq.$~$(\ref{b4}) as 
\begin{align}
    \rho_{AC}(\pi_i)&= \dfrac{Tr_{B_1B_2}\left[ (\sqrt{\pi_{i,B_1B_2}}\otimes \mathbb{I})~\eta_{AB_1B_2C}~ (\sqrt{\pi_{i,B_1B_2}}\otimes \mathbb{I})^{\dagger}\right]}{\tr\left[ (\pi_{i,B_1B_2}\otimes \mathbb{I})\eta_{AB_1B_2C}\right]} \nonumber \\
    &= \dfrac{1}{Q_i}\sum_{k=0}^3 q_{i,k}~ (\Lambda_p \otimes\Lambda_q)\ket{\psi_{i,k}}\bra{\psi_{i,k}}, 
\end{align}
where $\lbrace \ket{\psi_{i,k}}\rbrace$ forms an orthonormal complete basis set for every $i$ and $Q_i= \tr\left[ (\pi_{i,B_1B_2}\otimes \mathbb{I})\eta_{AB_1B_2C}\right] =\sum_{k=0}^3 q_{i,k} $ is the probability of the measurement outcome. 

The {\it fully entangled fraction} of the two-qubit state $\rho_{AC}(\pi_i)$ has the following upper bound, 
\begin{align}
    F(\rho_{AC},\pi_i) &= \max_{\ket{\Phi}\in MES} \max_{\pi_i}\bra{\Phi} \rho_{AC}(\pi_i) \ket{\Phi} \nonumber \\
    &\leq  \max_{\lbrace q_{i,k}\rbrace}\dfrac{1}{Q_i}\sum_{k=0}^3 q_{i,k}~\max_{\ket{\psi_{i,k}}}~\max_{\Phi} \bra{ \Phi}(\Lambda_p \otimes\Lambda_q)\ket{\psi_{i,k}}\bra{\psi_{i,k}}\ket{\Phi} \nonumber \\ 
    &\leq \lambda_{max}\left[ (\Lambda_p \circ \Lambda_q^T\otimes \mathbb{I})\rho_{\Phi_0}\right]. \nonumber\\
    &\leq\sum_{i=0}^3p_iq_i.
\end{align}

This is because each of the extreme points $\ket{\psi_{i,k}}$ of the convex mixture $\lbrace q_{i,k}, \ket{\psi_{i,k}}\rbrace$ becomes optimal {\it iff} it is a MES. Hence, optimal POVM elements must be a PVM measurement in a complete set of MES. Now after this if we allow for further local post-processing by Alice and Charlie,  we can say that if $F(\rho_{AC}, \pi_i) \leq \frac{1}{2}$, then Alice and Charlie can replace their existing state with a product state. 

Hence, we prove that {\bf Protocol 1} is more efficient than {\bf Protocol 2} and even any protocol $\mathcal{P}$, where Bob initiates it.

\end{proof}

\section{Detail proof of {\bf Proposition} \ref{p5}} \label{appD}

\begin{proof}

Let us assume $\rho_{AB_1}=(\Lambda_{ADC}\otimes \mathbb{I})\ketbra{\Phi_0}{\Phi_0}$ and $\sigma_{B_2C}=\lambda\ketbra{\Phi_0}{\Phi_0}+(1-\lambda)\frac{\mathbb{I}}{4}$. We know that $\sigma_{B_2C}$ is entangled if $\lambda>\frac{1}{3}$. One can express $\sigma_{B_2C}$ as mixtures of Bell basis, 
\begin{align}
    \sigma_{B_2C} &= \lambda~ \ketbra{\Phi_0}{\Phi_0} +(1-\lambda)~\dfrac{\mathbb{I}}{4} \nonumber \\
    & = \left( \dfrac{1+3\lambda}{4}\right) \ketbra{\Phi_0}{\Phi_0} + \dfrac{1-\lambda}{4} \sum_{i=1}^3 \ketbra{\Phi_i}{\Phi_i} \nonumber \\
    &= F ~\ketbra{\Phi_0}{\Phi_0}+\dfrac{1-F}{3} \sum_{j=1}^3 (\mathbb{I}\otimes \sigma_j)\ketbra{\Phi_0}{\Phi_0}(\mathbb{I}\otimes \sigma_j) \nonumber \\
    &= (\mathbb{I}\otimes \Lambda_{dep})\ketbra{\Phi_0}{\Phi_0},
\end{align}
where $\Lambda_{dep}$ is the depolarzing channel with Kraus operators $\lbrace \sqrt{F}~\mathbb{I},~\sqrt{\frac{1-F}{3}}~\sigma_1,~\sqrt{\frac{1-F}{3}}~\sigma_2,~\sqrt{\frac{1-F}{3}}~\sigma_3 \rbrace$ respectively.

The composite four-qubit state $\eta_{AB_1B_2C}$ can be written as  (see Eq.$~$(\ref{b4})
\begin{align}
    \eta_{AB_1B_2C}&=\rho_{AB_1}\otimes \sigma_{B_2C} \nonumber \\
    & = \dfrac{1}{4} \sum_{k,l=0}^3 \ket{\psi^*_k}_{B_1B_2}\bra{\psi^*_l} \otimes (\Lambda_{ADC}\otimes \Lambda_{dep}) \ket{\psi_k}_{AC} \bra{\psi_l} , \label{c2}
\end{align}
where both the set $\lbrace \ket{\psi_k}\rbrace$ and $\lbrace \ket{\psi^*_k}\rbrace$ form complete set of orthonormal basis. 
$\\$
\paragraph{{\bf Bob performs PVM in any MES}:}
Now suppose Bob performs PVM in any orthogonal MES basis with four basis elements, say $\lbrace \ket{\Psi^*_i}\rbrace$ for $i=0,1,2,3$, where each state $\ket{\Psi^*_i}$ can be written as 
\begin{align}
    \ket{\Psi^*_i} &=(U^*\otimes V^*) ~\ket{\Phi_i} , \quad \quad (U^*)^{\dagger}U^*=(V^*)^{\dagger}V^*=\mathbb{I},~~ i=0,1,2,3 \nonumber \\
    & = (U^* \otimes \mathbb{I})(\mathbb{I} \otimes V^*\sigma_i) \ket{\Phi_0} \nonumber \\
    &= (\mathbb{I} \otimes V^*\sigma_i)(U^* \otimes \mathbb{I}) \ket{\Phi_0} \nonumber =(\mathbb{I}\otimes V^* \sigma_i (U^*)^T)\ket{\Phi_0}= (\mathbb{I} \otimes W^*_i) \ket{\Phi_0} ,
\end{align}
where $W^*_i = V^*\sigma_i U^{\dagger}$ is again a unitary from $SU(2)$. Now since the vectors $\lbrace \ket{\Phi^*_i}\rbrace$ are orthogonal, hence, 
\begin{align}
    \langle \Psi^*_i | \Psi^*_j \rangle = \bra{\Phi_0} (\mathbb{I} \otimes W_i^{T}W_j^*) \ket{\Phi_0} = \dfrac{1}{2}\tr(W_i^{T}W_j^*)= \delta_{ij}
\end{align}
must be satisfied. Thus Bob's measurement in $ \ket{\Psi^*_i}$ prepares a two-qubit state between Alice-Charlie as 
\begin{align}
    \rho_{AC,\Psi_i}& =\dfrac{1}{P_i}Tr_{B_1B_2}\left[ (\ket{\Psi^*_i}_{B_1B_2}\bra{\Psi^*_i}\otimes \mathbb{I})\eta_{AB_1B_2C}\right], \quad \quad i=0,1,2,3\nonumber \\
    & = (\Lambda_{ADC} \otimes \Lambda_{dep})\ketbra{\Psi_i}{\Psi_i} \nonumber \\ 
    &= (\Lambda_{ADC}\otimes \mathbb{I})\left[ F ~\ketbra{\Psi_i}{\Psi_i}+\frac{1-F}{3}\sum_{j=1}^3(\mathbb{I}\otimes\sigma_j) \ketbra{\Psi_i}{\Psi_i}(\mathbb{I}\otimes\sigma_j)\right], 
\end{align}
where note that $\lbrace \ket{\Psi_i}\rbrace$ forms an orthonormal basis set since their complex conjugates are assumed to be orthonormal. Now notice that for every $i=0,1,2,3$ the following four vectors 
\begin{align}
    \ket{\Psi_i}= \ket{\tilde{\Psi}_{0,i}}= (\mathbb{I}\otimes \sigma_0~W_i) \ket{\Phi_0}, \quad \ket{\tilde{\Psi}_{j,i}}=(\mathbb{I}\otimes \sigma_j)~\ket{\Psi_i}= (\mathbb{I}\otimes \sigma_j ~W_i) \ket{\Phi_0} \quad \quad j=1,2,3\nonumber  
\end{align}
are orthogonal because the following condition 
\begin{align}
    \langle \tilde{\Psi}_{k,i}| \tilde{\Psi}_{l,i}\rangle = \bra{\Phi_0} (\mathbb{I}\otimes W_i^{\dagger}~\sigma_k ~\sigma_l ~W_i)\ket{\Phi_0} =\dfrac{1}{2}\tr\left( W_i^{\dagger}~ \sigma_k ~\sigma_l ~W_i\right) =\delta_{kl} \quad \quad k,l=0,1,2,3\nonumber
\end{align}
is always satisfied. So without any loss of generality, $\rho_{AC,\Psi_i}$ can be modified as 
\begin{align}
    \rho_{AC,\Psi_i}= (\Lambda_{ADC}\otimes \mathbb{I})\left[ F ~\ketbra{\tilde{\Psi}_{0,i}}{\tilde{\Psi}_{0,i}}+\frac{1-F}{3}\sum_{j=1}^3 \ketbra{\tilde{\Psi}_{j,i}}{\tilde{\Psi}_{j,i}}\right], \quad \quad  \langle \tilde{\Psi}_{k,i}| \tilde{\Psi}_{l,i}\rangle = \delta_{kl} \quad and ~~  k,l=0,1,2,3\nonumber
\end{align}
Now again notice that for every $i=0,1,2,3$ the following four vectors,
\begin{align}
    \ket{\Phi_0}= \ket{\tilde{\Phi}_{0,i}}= (\mathbb{I}\otimes W_i^{\dagger}) \ket{\tilde{\Psi}_{0,i}} , \quad \quad \ket{\tilde{\Phi}_{j,i}}=  (\mathbb{I}\otimes W_i^{\dagger}) \ket{\tilde{\Psi}_{j,i}} \quad \quad j=1,2,3\nonumber
\end{align}
are again orthogonal to each other because of the simple fact that  
\begin{align}
    \langle \tilde{\Phi}_{k,i}| \tilde{\Phi}_{l,i} \rangle = \langle \tilde{\Psi}_{k,i}| \tilde{\Psi}_{l,i}\rangle =\delta_{kl}. 
\end{align}
Hence, without any loss of generality, the state $\rho_{AC,\Psi_i}$ can be written as 
\begin{align}
    \rho_{AC,\Psi_i}& = (\Lambda_{ADC}\otimes \mathbb{I}) (\mathbb{I}\otimes W_i) \left[ F ~\ketbra{\Phi_0}{\Phi_0}+\dfrac{1-F}{3} \sum_{j=1}^3\ketbra{\tilde{\Phi}_{j,i}}{\tilde{\Phi}_{j,i}}\right] (\mathbb{I}\otimes W_i^{\dagger})\nonumber \\
    & = (\Lambda_{ADC}\otimes \mathbb{I}) (\mathbb{I} \otimes W_i)\left[ \dfrac{4F-1}{3} \ketbra{\Phi_0}{\Phi_0} +\dfrac{1-F}{3} \mathbb{I} \right] (\mathbb{I} \otimes W_i^{\dagger}) \nonumber \\
    & = (\Lambda_{ADC} \otimes \mathbb{I}) (\mathbb{I} \otimes W_i) \left[ \lambda \ketbra{\Phi_0}{\Phi_0} + (1-\lambda) \dfrac{\mathbb{I}}{4}\right] (\mathbb{I} \otimes W_i^{\dagger}).
\end{align}
So after unitary correction by Charlie via unitary $W_i^{\dagger}$, the modified state becomes independent of $i$, that is  
\begin{align}
    (\mathbb{I}\otimes W_i^{\dagger})~\rho_{AC,\Psi_i} ~(\mathbb{I} \otimes W_i) = (\Lambda_{ADC} \otimes \mathbb{I}) \left[ \lambda \ketbra{\Phi_0}{\Phi_0} + (1-\lambda) \dfrac{\mathbb{I}}{4}\right]. \label{c7}
\end{align}
So Eq.$~$(\ref{c7}) implies that even if Bob performs PVM in any complete set of MES, Charlie can suitably find a unitary for which Alice-Charlie always end up with a unique state 
\begin{align}
    \rho_{AC,\Phi_0}&= (\Lambda_{ADC} \otimes \mathbb{I}) \left[ \lambda \ketbra{\Phi_0}{\Phi_0} + (1-\lambda) \dfrac{\mathbb{I}}{4}\right] \nonumber \\
    &= (\Lambda_{ADC}\otimes \Lambda_{dep})\ketbra{\Phi_0}{\Phi_0}.  
\end{align}
Therefore, one can argue that the average {\it fully entangled fraction} after optimal post-processing by Alice-Charlie can be expressed as 
\begin{align}
    F^*_{AC,\lbrace \ket{\Psi_i}\rbrace}=\dfrac{1}{4} \sum_{i=0}^3 F^*(\rho_{AC,\Psi_i}) = F^*(\rho_{AC,\Phi_0}).
\end{align}

Now let us fix the parameter $\lambda=\frac{2}{5}$ which implies that the pre-shared state $\sigma_{B_2C}$ is weakly entangled. So after fixing $\lambda$, one can evaluate $F^*(\rho_{AC,\Phi_0})$. For that one needs to solve the SDP (mentioned in Eq.$~$(\ref{9})) and by solving it, one can find the optimal feasible solutions 
\begin{align}
    F^*(\rho_{AC,\Phi_0}) & = \left \lbrace
    \begin{array}{cc}
       \dfrac{1}{20}\left( 7+4\sqrt{1-p}-2p\right)>\dfrac{1}{2},  & \quad \text{if,} \quad 0\leq p\leq \dfrac{1}{49}(4\sqrt{74}-29)  \\
       & \\
        \dfrac{67+112~p+21~p^2}{40(3+7~p)}>\dfrac{1}{2}, & \quad \text{if,} \quad \dfrac{1}{49}(4\sqrt{74}-29)<p<1/3 \\
        & \\
        \dfrac{1}{2}, & \quad \quad \text{if,} \quad \dfrac{1}{3}\leq p <1
    \end{array}\right . \label{c10}
\end{align}
Note that the solutions in Eq.$~$(\ref{c10}) are physically consistent because if one estimates the concurrence value of $\rho_{AC,\Phi_0}$, then one can find that 
\begin{align}
    C(\rho_{AC,\Phi_0})& = \max \left \lbrace 0, ~~\dfrac{1}{10}\sqrt{1-p}\left( \sqrt{9+21 ~p}- 4\right)\right \rbrace, \quad \quad \text{for,} \quad 0\leq p<1 \nonumber \\
    &= \dfrac{1}{10}\sqrt{1-p}\left( \sqrt{9+21 ~p}- 4\right) >0, \quad \quad \quad \quad ~~~~  \text{for,} \quad 0\leq p<1/3 \nonumber \\
    &=0, \quad \quad \quad \quad \quad \quad ~~~~~~~~~~~~~~~~~~~~~~~~~~~~~~~~~~~~~~~~~~~~ \text{for,} \quad \dfrac{1}{3}\leq p<1,
\end{align}
Since any $\rho_{AC,\Psi_i}$ is connected with $\rho_{AC,\Phi_0}$ via local unitary, the concurrence of $\rho_{AC,\Phi_0}$ is same as the concurrence of $ \rho_{AC,\Psi_i}~\forall i$. So in the entire range of channel parameter $p$, {\it i.e.,}  $0\leq p<1$, one can find that $F^*(\rho_{AC,\Phi_0})$ is always upper bounded by
\begin{align}
    & F^*(\rho_{AC,\Phi_0})\leq \dfrac{1+C(\rho_{AC,\Phi_0})}{2} \nonumber \\
    \text{Thus,} \quad  \quad & \dfrac{1}{4}\sum_{i=0}^3 F^*(\rho_{AC,\Psi_i})\leq \dfrac{1+C(\rho_{AC,\Phi_0})}{2}= \dfrac{1}{4} \sum_{i=0}^3 \dfrac{1+C(\rho_{AC,\Psi_i})}{2}
\end{align}
which is consistent in the entire range of $p\in [0,1)$. 
$\\$
\paragraph{{\bf Bob performs PVM in NMES}:}
Now let us consider that Bob performs PVM in some partially entangled basis, say, 
\begin{align}
    & \ket{\eta_0}=\frac{\sqrt{3}}{2}\ket{00}+\frac{1}{2}\ket{11}, \quad \quad \ket{\eta_1}=\frac{1}{2}\ket{00}-\frac{\sqrt{3}}{2}\ket{11},  \nonumber \\
    & \ket{\eta_2}=\frac{\sqrt{3}}{2}\ket{01}+\frac{1}{2}\ket{10}, \quad \quad \ket{\eta_3}=\frac{1}{2}\ket{01}-\frac{\sqrt{3}}{2}\ket{10}. \nonumber 
\end{align}
So in this case, the prepared state between Alice-Charlie becomes (see Eq.$~$(\ref{c2}))
\begin{align}
    \rho_{AC,\eta_i}= (\Lambda_{ADC}\otimes \Lambda_{dep})\ketbra{\eta_i}{\eta_i}, \quad \quad i=0,1,2,3,
\end{align}
which occurs with probability $q_i =\tr\left[ (\ket{\eta_i}_{B_1B_2}\bra{\eta_i}\otimes \mathbb{I})~\eta_{AB_1B_2C}\right]=\frac{1}{4}$. Although the average entanglement cost of such a basis is strictly less than the average cost of a MES basis. However, the interesting fact is that even if the cost is low, measurement in such a basis is more efficient for teleportation channel distribution than doing PVM in any MES, which means 
\begin{align}
    F^*_{AC,\lbrace \eta_i \rbrace}= \sum_{i=0}^3 q_i ~F^*(\rho_{AC,\eta_i}) > F^*(\rho_{AC,\Phi_0})
\end{align}
is satisfied for at least some values of $p$ and $\lambda$. We already have the feasible solutions (see Eq.$~$(\ref{c10})) for $\lambda=\frac{2}{5}$ while Bob performs PVM in any MES. So again we fix $\lambda=\frac{2}{5}$ and again find the feasible solutions for $F^*_{AC,\lbrace \eta_i \rbrace}$ as 
\begin{align}
     F^*(\rho_{AC,\eta_0})=F^*(\rho_{AC,\eta_2})&= \dfrac{21~p^2+238~p+381}{80(9+7p)}>\dfrac{1}{2}, \quad \quad \quad \text{if,} \quad 0<p<1 \nonumber \\
     \text{whereas,} \nonumber \\
     F^*(\rho_{AC,\eta_1})=F^*(\rho_{AC,\eta_3})&=  \dfrac{1}{240}(127-63~p)>\dfrac{1}{2}, \quad \quad \quad \quad \text{if,} \quad 0\leq p<\dfrac{1}{9} \nonumber \\
     &= \dfrac{1}{2}, \quad \quad \quad \quad ~~~~~~~~~~~~~~~~~~~~~~~~~~~ \text{if,} \quad \dfrac{1}{9}\leq p<1. 
    \end{align}
Note here that these solutions are consistent because each of them does not exceed the already established upper bound 
\begin{align}
    F^*(\rho_{AC,\eta_i})\leq \dfrac{1+C(\rho_{AC,\eta_i})}{2}, \quad \quad \quad \quad \forall i \quad \text{and for,} \quad 0\leq p<1, \nonumber
\end{align}
where $C(\rho_{AC,\eta_i})$ is the concurrence of the two-qubit state $\rho_{AC,\eta_i}$ and has the following expression, 
\begin{align}
    & C(\rho_{AC,\eta_0})=C(\rho_{AC,\eta_2})= \max \left \lbrace 0, ~~\dfrac{1}{20}(\sqrt{7~p+9}-4)\sqrt{3(1-p)}\right \rbrace  \nonumber \\
    & C(\rho_{AC,\eta_1})=C(\rho_{AC,\eta_3})= \max \left \lbrace 0,~~\dfrac{1}{20}(4-3\sqrt{7~p+1})\sqrt{3(1-p)}\right \rbrace , \nonumber 
\end{align}
for the entire range $0\leq p<1$. Therefore, within the range $\frac{1}{3}<p<1$, the average {\it fully entangled fraction} has the expression 
\begin{align}
    F^*_{AC,\lbrace \eta_i \rbrace }& = \dfrac{1}{4} \sum_{i=0}^3 F^*(\rho_{AC,\eta_i}) \nonumber \\
    &= \dfrac{1}{2}\left[ F^*(\rho_{AC,\eta_0})+F^*(\rho_{AC,\eta_1})\right] \nonumber \\
    &= \dfrac{1}{2}\left( \dfrac{21~p^2+238~p+381}{80(9+7p)} +\dfrac{1}{2}\right)> F^*(\rho_{AC,\Phi_0}) \label{c16}
\end{align}
is satisfied if $p>\frac{1}{3}$. Thus Eq.$~$(\ref{c16}) shows that PVM in partially entangled states is more efficient than PVM in any choice of MES basis while optimizing {\it fully entangled fraction} between Alice-Charlie. 
\end{proof}
\end{widetext}

\end{document}